\documentclass[11pt,letter]{article}
\usepackage{amsmath}
\usepackage{xspace}
\usepackage{todonotes}
\usepackage{amsthm}

\usepackage[utf8]{inputenc}
\usepackage{indentfirst,verbatim} 
\usepackage{listings} 
\usepackage[normalem]{ulem}
\usepackage{soul}
\usepackage{amsfonts}
\usepackage{amsthm}
\usepackage{amssymb}
\usepackage{enumitem}
\usepackage{hyperref}

\usepackage{amsmath,amsfonts,amsthm}
\usepackage{mathbbol}
\usepackage{amsthm}
\usepackage{graphicx,color}
\usepackage{boxedminipage}
\usepackage[ruled,boxed,linesnumbered]{algorithm2e}
\usepackage[noend]{algpseudocode}
\usepackage{framed}
\usepackage{thmtools}
\usepackage{thm-restate}
\usepackage{xspace}
\usepackage{todonotes}
\usepackage{pgfplots}
\usetikzlibrary{calc}
\usetikzlibrary{shapes}
\usetikzlibrary{arrows.meta}
\usepackage{colortbl}
\usepackage{tcolorbox} 

\usepackage{xifthen}
\usepackage{tabularx}
\usepackage{capt-of}
\usepackage{thmtools}
\usepackage{subcaption}
\usepackage{booktabs}
\usepackage{calc}
\usepackage{cleveref}

\usepackage[numbers]{natbib}

\usepackage[margin=1in]{geometry} 

\theoremstyle{plain}

\newtheorem{observation}{Observation} 
\newtheorem{corollary}{Corollary} 
\newtheorem{lemma}{Lemma} 
\newtheorem{claim}{Claim} 
\newtheorem{proposition}{Proposition}

\theoremstyle{definition}

\newcommand{\cqed}{\renewcommand{\qedsymbol}{$\lrcorner$}\qed}

\newcommand{\pname}{\textsc}
\newcommand{\polyn}{n^{\Oh(1)}}
\newenvironment{claimproof}{\medskip\noindent \emph{Proof of Claim~\theclaim.}  }{\hfill\cqed\medskip}


\newlength{\RoundedBoxWidth}
\newsavebox{\GrayRoundedBox}
\newenvironment{GrayBox}[1]%
{\setlength{\RoundedBoxWidth}{.93\textwidth}
	\def\boxheading{#1}
	\begin{lrbox}{\GrayRoundedBox}
		\begin{minipage}{\RoundedBoxWidth}}%
		{   \end{minipage}
	\end{lrbox}
	\begin{center}
		\begin{tikzpicture}%
			\node(Text)[draw=black!20,fill=white,rounded corners,%
			inner sep=2ex,text width=\RoundedBoxWidth]%
			{\usebox{\GrayRoundedBox}};
			\coordinate(x) at (current bounding box.north west);
			\node [draw=white,rectangle,inner sep=3pt,anchor=north west,fill=white] 
			at ($(x)+(6pt,.75em)$) {\boxheading};
		\end{tikzpicture}
\end{center}}     

\newenvironment{defproblemx}[2][]{\noindent\ignorespaces%
	\FrameSep=6pt%
	\parindent=0pt%
	\vspace*{-1.5em}
	\ifthenelse{\isempty{#1}}{%
		\begin{GrayBox}{#2}%
		}{%
			\begin{GrayBox}{#2 parameterized by~{#1}}%
			}
			\begin{tabular*}{\textwidth}{@{\hspace{.1em}} >{\itshape} p{1.8cm} p{0.8\textwidth} @{}}%
			}{
			\end{tabular*}%
		\end{GrayBox}%
		\ignorespacesafterend
	}  
	
	\newcommand{\defparproblema}[4]{
		\begin{defproblemx}[#3]{#1}
			Input:  & #2 \\
			Task: & #4
		\end{defproblemx}
	}%

	%
	


	\newcommand{\Oh}{\mathcal{O}}

	\newcommand{\probECC}{\pname{Edge Clique Cover}\xspace}
	\newcommand{\probECP}{\pname{Edge Clique Partition}\xspace}
	\newcommand{\probECCshort}{\pname{ECC}\xspace}
	\newcommand{\probECPshort}{\pname{ECP}\xspace}
	\newcommand{\probAECC}{\pname{Annotated Edge Clique Cover}\xspace}

	\newcommand{\probECPIS}{\pname{Edge Clique Partition Above Independent Set}\xspace}
	\newcommand{\probECCIS}{\pname{Edge Clique Cover Above Independent Set}\xspace}
	\newcommand{\probECPISshort}{\pname{\probECPshort/$\alpha$}\xspace}
	\newcommand{\probECCISshort}{\pname{\probECCshort/$\alpha$}\xspace}
	
	\newcommand{\probAECCshort}{\pname{Annotated ECC}\xspace}	
	\newcommand{\probVCC}{\pname{Vertex Clique Cover}\xspace}

	\newcommand{\bfx}{\mathbf{x}}
	\newcommand{\bfy}{\mathbf{y}}


	\DeclareMathOperator{\operatorClassP}{{\sf P}}
	\newcommand{\classP}{\ensuremath{\operatorClassP}}
	\DeclareMathOperator{\operatorClassNP}{{\sf NP}}
	\newcommand{\classNP}{\ensuremath{\operatorClassNP}}

	\DeclareMathOperator{\operatorClassFPT}{{\sf FPT}\xspace}
	\newcommand{\classFPT}{\ensuremath{\operatorClassFPT}\xspace}
	\DeclareMathOperator{\operatorClassW}{{\sf W}}
	\newcommand{\classW}[1]{\ensuremath{\operatorClassW[#1]}}
	\DeclareMathOperator{\operatorClassParaNP}{{\sf Para-NP}\xspace}
	\newcommand{\classParaNP}{\ensuremath{\operatorClassParaNP}\xspace}

	\DeclareMathOperator{\ecc}{ecc}	
	\DeclareMathOperator{\ecp}{ecp}	
	\DeclareMathOperator{\aecc}{aecc}	
	\lstnewenvironment{code}{\lstset{language=Haskell,basicstyle=\small}}{}

\title{Edge Clique Partition and Cover Beyond Independence\thanks{The research leading to these results has been
supported by the Research Council of Norway via the project BWCA (grant no. 314528).}}

	\author{
		Fedor V. Fomin\thanks{
			Department of Informatics, University of Bergen, Norway.}\\fedor.fomin@uib.no
		\and
		Petr A. Golovach\addtocounter{footnote}{-1}\footnotemark{}\\petr.golovach@uib.no
		\and
		Danil Sagunov\thanks{
			Saint Petersburg State University, St.\ Petersburg, Russia.
		}\\danilka.pro@gmail.com
		\and 
		Kirill Simonov\addtocounter{footnote}{-2}\footnotemark{}\\kirill.simonov@uib.no
	}
	\date{}

\begin{document}
		
\maketitle	

\begin{abstract}
Covering and partitioning the edges of a graph into cliques are classical problems at the intersection of combinatorial optimization and graph theory, having been studied through a range of algorithmic and complexity-theoretic lenses. Despite the well-known fixed-parameter tractability of these problems when parameterized by the total number of cliques, such a parameterization often fails to be meaningful for sparse graphs. In many real-world instances, on the other hand, the minimum number of cliques in an edge cover or partition can be very close to the size of a maximum independent set~\(\alpha(G)\). 

Motivated by this observation, we investigate \emph{above}-\(\alpha\) parameterizations of the edge clique cover and partition problems. Concretely, we introduce and study   \probECCIS  (\probECCISshort) and 
\probECPIS (\probECPISshort), where the goal is to cover or partition all edges of a graph using at most \(\alpha(G) + k\) cliques, and \(k\) is the parameter. Our main results reveal a distinct complexity landscape for the two variants. We show that \probECPISshort is fixed-parameter tractable, whereas \probECCISshort is NP-complete for all \(k \ge 2\), yet can be solved in polynomial time for \(k \in \{0,1\}\). These findings highlight intriguing differences between the two problems when viewed through the lens of parameterization above a natural lower bound.

Finally, we demonstrate that \(\probECCISshort\) becomes fixed-parameter tractable when parameterized by \(k + \omega(G)\), where \(\omega(G)\) is the size of a maximum clique of the graph $G$. This result is particularly relevant for sparse graphs, in which \(\omega\) is typically small. For $H$-minor free graphs,  we design a subexponential  algorithm of running time $f(H)^{\sqrt{k}}\cdot\polyn$. 
\end{abstract}

 \section{Introduction}
 Covering and partitioning the edges of a graph into cliques are fundamental combinatorial problems that lie at the intersection of combinatorial optimization, graph theory, and complexity theory. In   the  \probECC  (\probECCshort) problem one seeks to cover all edges of a graph with as few cliques as possible, while in 
  \probECP (\probECPshort)  the same objective is pursued under the additional constraint that each edge must lie in exactly one of the chosen cliques. Both problems were extensively studied from various algorithmic perspectives (See \Cref{sec:related} for an overview), in particular from parameterized algorithms.

\probECCshort\ and \probECPshort\ are known to be fixed-parameter tractable (FPT) when parameterized by the number of cliques in the cover or partition~\cite{GrammGHN08,mujuni2008parameterized}. While this parameterization is meaningful for dense graphs, it becomes less relevant for sparse graphs (e.g., graphs with bounded maximum degree or planar graphs), where the size of a clique cover or partition usually scales proportionally with the number of vertices. In such cases, a brute-force algorithm that guesses all possible partitions already runs in FPT time under this parameter choice.

These observations naturally lead us to explore a different parameterization---one that captures a lower bound on the size of the cover or partition. In particular, for any graph \(G\) without isolated vertices, an edge clique cover or partition must contain at least \(\alpha(G)\) cliques, where \(\alpha(G)\) denotes the size of a maximum independent set of \(G\). Empirical studies of \probECCshort\ (see, e.g., \cite[Table~3]{HeviaECCexperiments}) indicate that in many real-world instances, after suitable preprocessing, the number of cliques in the cover is \emph{very} close to \(\alpha(G)\). This observation motivates us to investigate \probECCshort\ and \probECPshort\ in the setting of parameterization \emph{above}~\(\alpha\). Concretely, we initiate the study of the following parameterized problems.

 \defparproblema{\probECPIS (\probECPISshort)}{A graph $G$ without isolated vertices, an integer $k\ge0$.}{$k$}{Find an edge clique partition of $G$ with at most $\alpha(G)+k$ cliques.}

 \defparproblema{\probECCIS (\probECCISshort)}{A graph $G$ without isolated vertices, an integer $k\ge0$.}{$k$}{Find an edge clique cover of $G$ with at most $\alpha(G)+k$ cliques.}

While \probECCshort\ and \probECPshort\ are both FPT when parameterized by the number of cliques, their above-guarantee variants—\probECCISshort\ and \probECPISshort—exhibit notably different and intriguing behaviors.

\subsection{Our Results}
 Our first result establishes parameterized complexity of \probECPIS  (\probECPISshort). It is worth to note that the running time of our algorithm matches the best known running time of the algorithm for \probECPshort~\cite{FeldmanECP20}. Moreover, as we show, any improvement in the running time for   \probECPshort
 would improve the running time in our theorem too. 
   
\begin{restatable}{theorem}{ECPISFPT}
\label{thm:ecpis-fpt}
\probECPISshort can be solved in $2^{\Oh(k^{3/2}\log k)}\cdot n^{\Oh(1)}$ time.
\end{restatable}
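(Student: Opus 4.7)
The overall plan is to reduce any instance of \probECPISshort\ to an instance of classical \probECPshort\ in which the total number of cliques $\ell$ is $\Oh(k)$, and then invoke the best known FPT algorithm for \probECPshort, whose running time is $2^{\Oh(\ell^{3/2}\log \ell)}\cdot n^{\Oh(1)}$. Substituting $\ell=\Oh(k)$ yields exactly the claimed bound and explains why the running time matches, and any future improvement for \probECPshort\ would transfer to, the bound in the theorem.

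The starting point is a structural lemma that exploits the tight budget. Fix any maximum independent set $I$ of $G$ and any edge-clique partition $\Pi$ with $|\Pi|\leq \alpha(G)+k$. Since any two vertices of $I$ are non-adjacent, no clique of $\Pi$ contains two vertices of $I$; writing $c_\Pi(v)$ for the number of cliques of $\Pi$ containing $v$, we obtain $\sum_{v\in I}c_\Pi(v)\leq |\Pi|\leq \alpha(G)+k$. Because $G$ has no isolated vertices, $c_\Pi(v)\geq 1$ for every $v\in I$, so $\sum_{v\in I}(c_\Pi(v)-1)\leq k$ and hence at most $k$ vertices of $I$ satisfy $c_\Pi(v)\geq 2$. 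For each of the remaining vertices $v$, the unique clique of $\Pi$ containing $v$ must cover every edge incident to $v$ and therefore contain $N[v]$, which forces $N[v]$ itself to be a clique; that is, $v$ is \emph{simplicial} in $G$. Consequently, on any yes-instance there is a maximum independent set in which all but at most $k$ vertices are simplicial. A useful companion observation is that two adjacent simplicial vertices must be true twins, so the simplicial vertices of $G$ decompose into a disjoint union of true-twin modules.

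Building on this structure, the kernelization phase applies polynomial-time reduction rules targeting simplicial modules: a twin-reduction rule that shrinks each true-twin module to constant size, and a rule that commits to the forced clique $N[v]$ when a simplicial vertex $v$ has a rigidly determined role in every candidate partition, removing $v$ together with the now-covered edges of $N[v]$ and updating the budget. Each rule is proved to preserve membership in \probECPISshort, and exhaustive application is shown to yield $\alpha(G)=\Oh(k)$; feeding the resulting graph (with target $\alpha(G)+k=\Oh(k)$) into the Feldmann algorithm for \probECPshort\ completes the proof. The main obstacle is precisely this kernelization step, because $\alpha(G)$ is itself NP-hard to compute and deleting a simplicial vertex $v$ together with its clique may decrease $\alpha$ by one, leave it unchanged, or even \emph{increase} it (by turning $N(v)$ into an independent set once the edges inside $N(v)$ are removed). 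Designing reduction rules that faithfully track this delicate interplay between $\alpha$ and $k$, and then proving the final bound $\alpha(G)\leq \Oh(k)$ on an irreducible instance, is the principal technical hurdle.
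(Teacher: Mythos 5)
Your structural lemma is correct and coincides with the paper's Lemma~\ref{lem:nonsimpl}: in any partition of size at most $\alpha(G)+k$, all but at most $k$ vertices of a maximum independent set lie in exactly one clique and are therefore simplicial, and adjacent simplicial vertices are true twins. The gap is everything after that. Your proof rests entirely on a kernelization that is claimed to drive $\alpha(G)$ down to $\Oh(k)$, but you never state the reduction rules precisely, never prove they are safe, and never prove the bound $\alpha(G)=\Oh(k)$ on an irreducible instance; you explicitly name the budget-tracking problem (deleting a simplicial vertex and the edges of $N[v]$ can leave $\alpha$ unchanged or increase it) as ``the principal technical hurdle'' and then leave it unresolved. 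That hurdle is the theorem. There is also a second, unaddressed obstruction to any rule that ``commits to the forced clique $N[v]$'': a simplicial clique need not appear in an optimal partition at all --- it can be \emph{broken}, with the edges at $v$ split among several non-simplicial cliques --- and whether a given simplicial clique is broken is a global decision (two simplicial cliques sharing an edge force at least one of them to be broken, and these conflicts propagate). No local, polynomial-time rule can resolve this, which is why a per-clique commit rule cannot be justified as stated.

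The paper takes a genuinely different route that never shrinks $\alpha$. It first \emph{computes} $\alpha(G)$ (Lemma~\ref{lem:isecp}): after stripping all simplicial cliques, the residual graph must admit an edge clique partition of size at most $2k$ by Lemma~\ref{lem:nonsimpl}, so the Feldmann--Issac--Rai algorithm applies there and a $2^{\Oh(k)}$ dynamic program over the resulting cover recovers $\alpha$ (Lemma~\ref{lem:iscover}). It then identifies all large ($\ge 6k+1$) cliques forced into every solution via critical cliques (Lemma~\ref{lem:bigcliques}), branches over which simplicial cliques are broken by enumerating the at most $2^k$ minimal vertex covers of a conflict graph on simplicial cliques, and finally runs the \probECPshort{} oracle only on a residual graph with $\Oh(k^2)$ vertices inside a recursive extension procedure. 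Your plan of reducing to a single \probECPshort{} instance with $\Oh(k)$ cliques in total is not what the paper does and, as written, is not a proof: the step on which the whole argument hinges is missing.
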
 
 
We underline that the input of \probECPISshort (or \probECCISshort) does not contain the value of $\alpha(G)$. Interestingly, while computing the maximum independent set is well-known to be intractable,  we show in~\Cref{lem:isecp} that, given an instance  $(G,k)$ of \probECPISshort, in $2^{\Oh(k^{3/2}\log k)}\cdot n^{\Oh(1)}$ time we can either compute $\alpha(G)$ or conclude that $(G,k)$ is a no-instance. This algorithm is used as a subroutine in~\Cref{thm:ecpis-fpt} and, in fact, our algorithm for \probECPISshort either outputs an edge clique cover of size at most $\alpha(G)+k$ together with $\alpha(G)$ or correctly reports a no-instance.

 While both \probECPshort\ and \probECCshort\ are \classFPT\ when parameterized by the number of cliques in the solution, their above-guarantee variants exhibit drastically different behaviors. The following theorem shows that for every fixed integer $k \ge 2$, deciding whether a graph can be covered by at most $\alpha(G) + k$ cliques is \classNP-complete even on prefect graphs for which the independence number can be computed in polynomial time~\cite{GLS1988}. Consequently, this places \probECCISshort in the class of \classParaNP-complete problems.

 \begin{restatable}{theorem}{ECCISNPhard}
 \label{thm:hard}
	For every $k\ge 2,$ \probECCISshort  is \classNP-complete. 
	Furthermore,  the hardness holds even on perfect graphs.
\end{restatable}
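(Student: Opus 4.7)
The plan is to prove \classNP-completeness in two main steps: establish hardness for the base case $k=2$, then extend to all $k \geq 2$ by a padding argument. Membership in \classNP\ on perfect graphs is straightforward: a candidate clique cover is a polynomial-size certificate whose size bound can be checked against $\alpha(G)$, which is polynomial-time computable on perfect graphs by the classical result of Gr\"otschel, Lov\'asz, and Schrijver.

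For the base case $k = 2$, the approach would be to reduce from a suitable \classNP-hard source problem (for instance, \probECCshort\ on a restricted class, or a direct encoding of 3-SAT or 3-dimensional matching) to \probECCISshort\ with target $\alpha(G) + 2$. The construction must produce a perfect graph in which most of the optimum cover is forced to use exactly $\alpha(G)$ ``rigid'' cliques pinned down by a controlled maximum independent set, while two additional cliques remain as degrees of freedom whose usage encodes the combinatorial choices of the source instance. Perfectness is preserved by assembling only from perfect gadgets (e.g., split or bipartite building blocks), and $\alpha(G)$ is engineered to be easy to read off.

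The extension to $k > 2$ follows by a clean disjoint-union padding. Let $G$ be the perfect graph produced by the base case, and define
\[
G' \;:=\; G \,\sqcup\, \underbrace{P_4 \sqcup \cdots \sqcup P_4}_{k - 2 \text{ copies}}.
\]
Disjoint union preserves perfectness, and both $\alpha$ and $\ecc$ are additive over connected components. Since $P_4$ is triangle-free with three edges, $\alpha(P_4) = 2$ and $\ecc(P_4) = 3$, so $\alpha(G') = \alpha(G) + 2(k-2)$ and $\ecc(G') = \ecc(G) + 3(k-2)$. Substituting, $\ecc(G') \leq \alpha(G') + k$ is equivalent to $\ecc(G) \leq \alpha(G) + 2$, so $(G', k)$ is a yes-instance of \probECCISshort\ iff $(G, 2)$ is.

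The main obstacle is clearly the base-case construction. Since the problem lies in \classP\ for $k \in \{0, 1\}$, the reduction must exploit some phenomenon that appears only once the cover has two units of slack above $\alpha(G)$---informally, two extra cliques give just enough room to encode a binary combinatorial choice inside a rigid perfect host graph. Designing a gadget that (i) is itself perfect, (ii) exposes $\alpha(G)$ transparently, and (iii) forces the optimal cover to use exactly $\alpha(G)$ fixed cliques plus two free cliques whose arrangement witnesses the \classNP-hard condition is the technical crux; the padding and \classNP-membership portions above are then routine.
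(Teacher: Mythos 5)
Your \classNP-membership observation and your padding step are both correct: $\alpha$ and $\ecc$ are additive over connected components, a $P_4$ component contributes $(\alpha,\ecc)=(2,3)$, disjoint union preserves perfectness and the absence of isolated vertices, and the arithmetic does reduce $(G',k)$ to $(G,2)$. The problem is that the proposal never proves the base case. What you offer for $k=2$ is a list of desiderata for a gadget (perfect, transparent $\alpha$, rigid part plus two free cliques) together with the admission that designing it is the technical crux. Since that construction is the entire content of the theorem --- everything else is routine, as you say yourself --- this is a genuine gap rather than a proof.

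For comparison, the paper closes exactly this gap in two stages. First (its \Cref{thm:ann-hard}) it shows \probAECCshort is \classNP-complete on co-bipartite graphs for every $k\ge 3$, with the annotated set $B$ a perfect matching: starting from \probVCC (equivalently the complement of \textsc{$k$-Coloring}, hard for every $k\ge 3$), take two clique copies $V_1,V_2$ of $V(G)$, join $v_i^{(1)}$ to $v_j^{(2)}$ whenever $i=j$ or $v_iv_j\in E(G)$, and let $B=\{v_i^{(1)}v_i^{(2)}\}$; covering $B$ by $k$ cliques of the new graph is equivalent to covering $V(G)$ by $k$ cliques. Second, it reduces $(G,B,k)$ to an instance $(G',k-1)$ of \probECCISshort by adding a universal vertex $v$, a pendant neighbor $u_x$ for each $x\in V(G)$, and a degree-two vertex $w_e$ for each non-annotated edge $e$. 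The added vertices form a maximum independent set of size $|R|+n+1$ that is transparent by construction; the pendant and degree-two vertices force $n+|R|$ rigid cliques; and the remaining budget of exactly $k$ cliques must cover $B$ (and the edges at $v$), which realizes precisely the ``rigid part plus a few free cliques'' scheme you describe but only sketch. Note that because \textsc{$k$-Coloring} is hard for every $k\ge 3$, the paper obtains all $k'\ge 2$ directly and needs no padding; your $P_4$-padding is a valid alternative for the induction on $k$, but it cannot substitute for the missing base construction.
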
 

The condition in \Cref{thm:hard} concerning $k$ is tight: for $k < 2$, the problem is solvable in polynomial time. Taken together, \Cref{thm:hard,cor:poly-zero-one} establish a dichotomy result on the complexity of \probECCISshort for all values of $k$.

 \begin{restatable}{theorem}{ECCISPoly}
 \label{cor:poly-zero-one}
	\probECCISshort   
	admits polynomial-time algorithms for $k\in \{0,1\}$.
\end{restatable}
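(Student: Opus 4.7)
The plan is to combine the structural lower bound $\ecc(G)\ge\alpha(G)$ — valid because for any MIS $I$ each $v\in I$ needs its own clique in the cover (it is non-isolated, and distinct MIS vertices cannot share a clique) — with a tight rigidity characterization of covers of size $\alpha(G)$ and $\alpha(G)+1$, which is what drives polynomial time.

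\textbf{Case $k=0$.} I would prove that $\ecc(G)=\alpha(G)$ iff there is a MIS $I$ of simplicial vertices (those whose closed neighborhood is a clique) such that $\{N[v]:v\in I\}$ covers $E(G)$. Pigeonholing a cover of size $\alpha$ against $I$ gives a bijection between cliques $C_i$ and vertices $v_i\in I$; every edge incident to $v_i$ must be covered by the unique clique through $v_i$, forcing $N[v_i]\subseteq C_i$ and hence, since $C_i$ is a clique, $C_i=N[v_i]$ with $v_i$ simplicial. The algorithm then computes the simplicial set $S$, observes that adjacent simplicial vertices share closed neighborhoods so $G[S]$ decomposes into twin-classes $K_1,\dots,K_t$ with common closed neighborhoods $N_1,\dots,N_t$, and tests whether $\bigcup_i E(N_i)=E(G)$. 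A positive test certifies both the cover and $\alpha(G)=t$ (any IS has at most one vertex per $N_i$).

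\textbf{Case $k=1$.} The same pigeonhole on a cover of size $\alpha(G)+1$ yields two structures: either (A) exactly one clique $C^*$ contains no MIS vertex while the other $\alpha$ cliques are $\{N[v_i]:v_i\in I\}$ with each $v_i$ simplicial, or (B) every clique contains an MIS vertex but one $v^*\in I$ appears in two cliques $C^*_1,C^*_2$ with $C^*_1\cup C^*_2\supseteq N[v^*]$, while each $v_i\in I\setminus\{v^*\}$ is simplicial with $C_i=N[v_i]$. Case~(A) is handled by recycling the $k=0$ data: compute $E^*:=E(G)\setminus\bigcup_i E(N_i)$ and test whether its endpoints induce a clique in $G$, which then plays the role of $C^*$. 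For Case~(B) I would iterate over $v^*\in V$; a candidate requires $\overline{G[N(v^*)]}$ to be bipartite so that $N(v^*)$ splits into two cliques $A,B$, giving $C^*_1=A\cup\{v^*\}$ and $C^*_2=B\cup\{v^*\}$. Taking $I'$ to be one vertex per twin-class of simplicial vertices in $V\setminus N[v^*]$, the partition $(A,B)$ must be chosen so that the cross edges of $G[N(v^*)]$ and all edges touching $V\setminus N[v^*]$ are covered by $\bigcup_{v_i\in I'}E(N[v_i])$. I would reformulate the choice of $(A,B)$ as a constrained 2-coloring of $N(v^*)$ in which non-edges of $G[N(v^*)]$ force distinct colors (to keep $A,B$ cliques) while edges of $G[N(v^*)]$ not covered by any $N[v_i]$ force the same color (to avoid being cross edges), and solve it in polynomial time by contracting same-color edges and running bipartiteness on the remainder.

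\textbf{Main obstacle.} The central difficulty is Case~(B): a priori there can be $2^{\Omega(n)}$ distinct 2-clique partitions of $N[v^*]$, one choice per bipartite component of $\overline{G[N(v^*)]}$, and which of them completes to a valid cover depends non-locally on the cliques $N[v_i]$ chosen for the other MIS vertices. The constrained 2-coloring reformulation is precisely what avoids this enumeration; the technically delicate step is verifying that it captures exactly the partitions compatible with a valid cover and that the remaining edge-coverage checks can be arranged in polynomial time.
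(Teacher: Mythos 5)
Your $k=0$ argument and your Case~(A) for $k=1$ are sound and essentially reproduce what the paper does via \Cref{lem:nonsimpl} and \Cref{obs:kzero}. The gap is in Case~(B): you silently assume that the two cliques $C^*_1,C^*_2$ through $v^*$ can be taken to intersect only in $v^*$, so that $N(v^*)$ \emph{partitions} into two cliques $A,B$ and the task becomes a proper $2$-coloring of the vertices of $N(v^*)$ with same/different constraints. In an edge clique \emph{cover} the two cliques may be forced to share further vertices, and a vertex-coloring cannot express membership in both cliques. Concretely, let $G$ have vertices $c,u,v,w,p,q$, where $\{c,u,v,w\}$ induces $K_4$ minus the edge $vw$, $p$ is a pendant neighbor of $v$, and $q$ is a pendant neighbor of $w$. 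Then $\alpha(G)=3$ and $\ecc(G)=4$, and the unique minimum cover is $\{v,p\},\{w,q\},\{c,u,v\},\{c,u,w\}$; so $(G,1)$ is a yes-instance in which the two non-simplicial cliques share the edge $cu$. Your Case~(A) test fails because the edges left uncovered by the simplicial cliques have endpoint set $\{c,u,v,w\}$, which is not a clique. In Case~(B) with hub $c$ (resp.\ $u$), the non-edge $vw$ forces $v$ and $w$ into different color classes, while the uncovered edges $uv,uw$ (resp.\ $cv,cw$) force both $v$ and $w$ into the same class as $u$ (resp.\ $c$), a contradiction; every other choice of hub leaves uncovered edges outside its closed neighborhood. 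Hence your algorithm rejects a yes-instance.

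The repair is exactly the step you flagged as delicate, and it is what the paper does: after stripping the simplicial cliques (\Cref{lemma:ecc_to_annotated}), one should $2$-color the uncovered \emph{edges} rather than the vertices of $N(v^*)$. Form the conflict graph $H$ whose vertices are the uncovered edges, two of them adjacent iff no clique of $G$ contains both (\Cref{lemma:annotated_to_coloring}); the union of the endpoints of each color class of a proper $2$-coloring of $H$ is a clique of $G$, and a vertex incident to edges of both colors simply lies in both cliques. Testing whether $H$ is $2$-colorable is polynomial, which yields \Cref{thm:annotated_k_two} and hence the theorem; your Case~(A) is subsumed by the same test applied to the (at most two) reduced instances.
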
 
 
 Despite the intractability of \probECCISshort\ on general graphs, for certain classes of sparse graphs  it is possible to design \classFPT\ algorithms. We summarize these findings in the following theorem.
 
 \begin{restatable}{theorem}{ECCISsparse}
 	\label{cor:sparse}
 	\probECCISshort admits parameterized algorithms with the following running times:
 	\begin{itemize}
 		\item $4^{\binom{\omega}{2}\cdot k}\cdot \polyn$ on graphs with clique number $\omega$, 
 		\item $2.081^{(d-1)k}\cdot\polyn$ on graphs of degeneracy $d\ge 3$,
 		\item $1.619^k\cdot \polyn$ on $2$-degenerate graphs,
 		\item $f(H)^{\sqrt{k}}\cdot\polyn$ on $H$-minor-free graphs.
 	\end{itemize}
 	Neither of $\omega, d, H$ should be given to the corresponding algorithm explicitly.
 \end{restatable}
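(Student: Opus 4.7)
All four running times share a single kernelization skeleton followed by a case-specific search. The starting point is a \emph{budget inequality}: for any maximum independent set $I$ of $G$, no clique contains two vertices of $I$, so every edge clique cover $\mathcal C$ of size at most $\alpha(G)+k$ satisfies
\[
\sum_{v\in I}(t_v-1)+\bigl|\{C\in\mathcal C:C\cap I=\emptyset\}\bigr|\le k,
\]
where $t_v$ is the number of cliques of $\mathcal C$ through $v$. In particular, at most $k$ vertices of $I$ are covered by more than one clique. The plan is to combine this with a \emph{simplicial reduction}: while $G$ has a vertex $v$ with $N[v]$ a clique, add $N[v]$ to the partial cover, delete $v$, and decrease the target by $1$. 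This is safe because a simplicial vertex always lies in some maximum independent set, so $\alpha$ drops by one. When the rule stops, every $v$ in the reduced graph's maximum independent set has $N(v)$ non-clique and hence must satisfy $t_v\ge 2$, so the budget inequality forces $\alpha(G)\le k$ and the remaining cover size is at most $2k$. This immediately gives $|E(G)|\le 2k\binom{\omega}{2}$; in the degenerate and minor-free cases the additional bounds $\chi(G)\le d+1$ and $\chi(G)=O_H(1)$ (by Kostochka--Thomason for the latter) combine with $\alpha\le k$ to leave $O(dk)$ and $O_H(k)$ non-isolated vertices in the kernel, respectively.

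On this kernel the first three bounds follow from branching. For bounded~$\omega$, pick any uncovered edge $uv$ and branch over the cliques of $G$ that could contain it; each such clique is $\{u,v\}$ together with a sub-clique of $N(u)\cap N(v)$, giving at most $2^{\omega-2}$ options. Amortizing so that every branch either places a new cover clique or consumes part of the surplus~$k$ should yield $4^{\binom{\omega}{2}k}$. The degenerate bounds will come from refining the branching: always select the uncovered edge whose lower endpoint is earliest in a degeneracy ordering, so that the candidate cliques are determined by subsets of the at most $d$ later neighbours of $v$; a standard golden-ratio-type analysis of the resulting branching vectors yields $1.619^k$ for $d=2$ and $2.081^{(d-1)k}$ for $d\ge 3$.

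For $H$-minor-free graphs, the kernel has $n'=O_H(k)$ vertices, so by the Alon--Seymour--Thomas separator theorem its treewidth is $O_H(\sqrt k)$. The plan is to compute a tree decomposition of that width and run a dynamic programming algorithm whose state on a bag $B$ records, for each $v\in B$, the list of cover cliques containing $v$; because $\omega=O_H(1)$, each vertex lies in $O_H(1)$ cover cliques, bounding the state space per bag by $f(H)^{\sqrt k}$. The main obstacle will be exactly this DP: the clique-cover objective is \emph{global}, since a single cover clique can span many bags, so the algorithm must propagate across every separator a compressed description of how each active clique is being assembled, and keeping this description within the $f(H)^{\sqrt k}$ budget relies critically on the bounded-$\omega$ structure inherited from the kernel. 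Finally, none of $\omega,d,H$ need to be supplied: degeneracy is polynomial-time computable, while the bounded-clique-number and minor-free algorithms detect their structural bounds on the fly as cliques and tree decompositions are constructed.
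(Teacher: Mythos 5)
Your overall skeleton---peel off simplicial vertices, use the budget inequality to conclude that the residual graph has independence number at most $k$ and needs at most $2k$ further cliques, then run a class-specific search---is the same as the paper's (its \Cref{lem:nonsimpl} together with the reduction to \probAECCshort in \Cref{lemma:ecc_to_annotated}). But there are concrete gaps. The most important one is that you never say how $\alpha$ is obtained. The problem is parameterized \emph{above} $\alpha(G)$, and $\alpha(G)$ is not part of the input; even after the simplicial reduction you must decide whether the cover you found has size at most $\alpha(G)+k$, which requires computing the independence number of the residual graph. The paper does this either from the small clique cover itself (\Cref{lem:iscover}, costing $4^k\cdot\polyn$) or by a dedicated \pname{Independent Set} routine for the class at hand (\Cref{cor:coloring_to_annotated_faster}). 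This is not a formality: on $2$-degenerate graphs the covering step is \emph{polynomial} (an uncovered edge incident to a vertex of degree at most $2$ lies in a unique maximal clique), and the entire $1.619^k$ factor in the theorem comes from branching for \pname{Independent Set}, not from ``a golden-ratio analysis of the covering branching'' as you assert. Relatedly, your bookkeeping ``delete a simplicial vertex, $\alpha$ drops by one'' is false as stated (delete one endpoint of $K_2$); one has to discount the whole simplicial clique, or its neighborhood, when accounting for $\alpha$, as in Claim~\ref{claim:ecc_reduced}.

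Second, the bounded-$\omega$ branching is broken: the cliques through an uncovered edge $uv$ are $\{u,v\}$ plus cliques of $G[N(u)\cap N(v)]$, but $N(u)\cap N(v)$ may contain arbitrarily many vertices, so the number of candidate (maximal) cliques is not bounded by $2^{\omega-2}$, nor by any function of $\omega$ alone. The paper avoids this entirely: it first uses Ramsey's theorem (\Cref{lemma:ramsey_clique_or_is}) to either exhibit an independent set of size $k+1$ (a no-instance) or certify $n<\binom{k+\omega}{k}$ and compute $\omega$ in FPT time, and then observes $|B|\le\binom{\omega}{2}\cdot k$ and solves the resulting \textsc{Set Cover} instance by subset convolution in $2^{|B|}\cdot\polyn$ time. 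Your degenerate-case branching is essentially the paper's (\Cref{lemma:aecc_on_degenerate}), though the correct count of branches is the Moon--Moser bound $c_{d-1}\le 3^{(d-1)/3}$ on maximal cliques of $G[N(u)\cap N(v)]$, with $3^{2/3}<2.081$ and the factor $2$ in the exponent coming from the residual budget $2k$. For $H$-minor-free graphs your separator-based treewidth bound is the right idea, but the DP state you propose (each vertex lies in $O_H(1)$ cover cliques) is unjustified---a vertex may lie in up to $\deg(v)$ cliques of an optimal cover; the correct state space is $2^{\Oh(\rho)}$ where $\rho=\Oh(dt)$ bounds the number of edges inside a bag, as in \Cref{prop:ecc_tw_dp}.
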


Complementing this result, we observe that, under the Exponential Time Hypothesis, the dependency on $k$ cannot be improved in neither of the algorithms; additionally, the dependency on $d$ is almost optimal.
We refer the reader to \Cref{sec:lowerbounds} 
for proper formal discussion.

\subsection{Related Work}
\label{sec:related}

\probECCshort\ and \probECPshort\ are classical combinatorial problems~\cite{Erdos1966representation,Hall1941partitions,Lovasz1968covering,Ryser1973intersection}, whose origins can be traced back to fundamental questions posed by Boole in 1868~\cite{Boole1952propositions}.  Over time, these problems have been studied under various names, including \textsc{Covering by Cliques} (Problem~GT17) and \textsc{Intersection Graph Basis} (Problem~GT59) in Garey and Johnson's compendium~\cite{garey-johnson}, as well as \textsc{Keyword Conflict}~\cite{kellerman}. 

From a practical standpoint,   \probECCshort\ and \probECPshort\ arise in diverse application areas such as computational geometry~\cite{cc-apl1}, applied statistics~\cite{cc-apl2,cc-apl3}, networks \cite{guillaume:cc}, compiler optimization~\cite{cc-apl4}, or bioinformatics~\cite{figueroa2004clustering}.
Due to their broad importance, both \probECCshort\ and \probECPshort\ have been investigated from multiple perspectives, including approximation algorithms and inapproximability bounds~\cite{apx:cc,lund-yannakakis}, heuristics~\cite{bt:cc,cc-apl2,kellerman,kou:cc,cc-apl3,cc-apl4}, and polynomial-time solutions for special 
graph classes~\cite{cerioli2008partition,garey-johnson,hoover1992complexity,cc-class2,hunt1998complexity,cc-class1,orlin:cc}.

    Moreover, the relation between the smallest edge clique cover and the maximum independent set plays another important role in practical applications. On large instances, it may be infeasible to compute the optimal edge clique cover, and only heuristical methods could be applied within a reasonable timeframe. Finding a suffiently large independent set serves then as a certificate for the quality of the solution, in terms of the size of the edge clique cover: If, say, there is only a 1\% difference between the found independent set and clique cover, then the smallest clique cover is also bound to lie in this small interval. As observed by Hevia et al.~\cite{HeviaECCexperiments}, this phenomenon occurs frequently on real-world instances. Simultaneously, this certifies that the difference between the smallest edge clique cover and the largest independent set is also small on these instances.

Both \probECCshort\ and \probECPshort\ are known to be NP-complete even in  restricted graph classes~\cite{orlin:cc,chang2001tree,hoover1992complexity}.  In particular,  \probECCshort remains NP-complete and even when
 the input graph is planar \cite{chang2001tree} or has bounded degree \cite{hoover1992complexity}.
 Furthermore, Lund and Yannakakis~\cite{lund-yannakakis} showed that \probECCshort\ is not approximable within a factor of $n^\varepsilon$ (for some $\varepsilon > 0$) unless \classP=\classNP.

In particular, a significant amount of work in the algorithms engineering community is devoted to the studies of  \probECCshort  on sparse graphs, whose independence number is large. For example,  Abdullah and Hossain \cite{abdullah2022sparse}, and 
Conte,   Grossi,  and Marino \cite{conte2020large} studied \probECCshort  on $d$-degenerate graphs. Blanchette, Kim, and   Vetta in \cite{BlanchetteKV12} obtained an PTAS for  \probECCshort  on planar graphs as well as an FPT algorithm parameterized by the treewidth of a graph. 

Within the realm of parameterized complexity, a natural parameterization is by the number of cliques $k$. Gramm et al.~\cite{GrammGHN08} pioneered this direction by presenting simple reduction rules that produce a kernel of size at most $2^k$, making it one of the earliest examples of kernelization techniques described in textbooks~\cite{CyganFKLMPPS15,kernelizationbook19,Niedermeierbook06}. Later, Cygan, Pilipczuk, and Pilipczuk~\cite{CyganPP16} showed that there is no algorithm solving \probECCshort\ in time $2^{2^{o(k)}}\cdot n^{\Oh(1)}$ 
unless the Exponential Time Hypothesis (ETH) fails.

 Mujuni and Rosamond~\cite{mujuni2008parameterized} established that   \probECPshort is FPT when parameterized by $k$, the number of cliques in the solution.  Feldmann, Issac, and Rai~\cite{FeldmanECP20} improved the running time to $2^{k^{3/2}\log k}\cdot n^{\Oh(1)}$.  Fleischer and Wu~\cite{Fleischer2010edge} devised an algorithm for planar graphs running in $2^{\Oh(\sqrt{k})}\cdot n^{\Oh(1)}$, and for graphs of degeneracy $d$ they achieved $2^{dk} \cdot k \cdot n^{\Oh(1)}$ running time.

 \subsection{Organization of the Paper}
 In~\Cref{sec:defs}, we introduce basic notions and provide auxiliary results. \Cref{sec:partaboveIS} contains the proof of \Cref{thm:ecpis-fpt}. In \Cref{sec:ECC}, we show the dichotomy for \probECCISshort on general graphs, proving \Cref{thm:hard,cor:poly-zero-one}. Finally, in \Cref{sec:FPTbyomega}, we give algorithmic results and establish computational lower bounds for \probECCISshort on sparse graph classes, as stated in \Cref{cor:sparse}.
 
\section{Definitions and Preliminaries}\label{sec:defs}
\medskip\noindent\textbf{Graphs.}
In this paper, we consider simple undirected graphs and refer to the textbook by Diestel~\cite{Diestel} for notions that are not defined here.
 We use $V(G)$ and $E(G)$ to denote the set of vertices and the set of edges of $G$, respectively. We use~$n$ and~$m$ to denote the number of vertices and edges in~$G$, respectively, unless this creates confusion.
For a vertex subset~$X \subseteq V(G)$, we use~$G[X]$ to denote the subgraph of~$G$ induced by the vertices of~$X$ and~$G -X$ to denote~$G[V(G) \setminus X]$. 
For a vertex $v\in V(G)$, we write $N_G(v)=\{u\in V(G)\mid uv\in E(G)\}$ to denote the \emph{open neighborhood} of $v$, and we use $N_G[v]=N_G(v)\cup\{v\}$ for the \emph{closed neighborhood}. For $X\subseteq V(G)$, $N_G(X)=\big(\bigcup_{v\in X}N_G(v)\big)\setminus X$ and $N_G[X]=\bigcup_{v\in X}N_G[v]$. 
We denote by $d_G(v)=|N_G(v)|$ the \emph{degree} of a vertex $v$; a vertex of degree zero is \emph{isolated}. 
The minimum degree is denoted by $\delta(G)$. A graph $G$ is \emph{$d$-degenerate} for an integer $d\geq 0$ if $\delta(H)\leq d$ for any subgraph $H$ of $G$. 
By $\operatorname{dg}(G)$ we denote the degeneracy of $G$, i.e.\ minimum $d$ for which $G$ is $d$-degenerate.
A graph $H$ is a \emph{minor} of $G$ if the graph isomorphic to $H$ can be obtained from $G$ by vertex/edge deletions and edge contractions. A graph $G$ is \emph{$H$-minor-free} if $H$ is not a minor of $G$.   
A set of pairwise adjacent vertices is called a \emph{clique}, and a set of pairwise non-adjacent vertices is \emph{independent}.  
The maximum size of an independent set in $G$ is denoted by $\alpha(G)$, and the maximum size of a clique in $G$ is denoted by $\omega(G)$. 
A vertex $v\in V(G)$ is said to be \emph{simplicial} if $N_G[v]$ is a clique, and we say that a clique $K$ is \emph{simplicial} if $K$ equals $N_G[v]$ for a simplicial vertex $v$.
An edge $uv\in E(G)$ is \emph{covered} by a clique $K$ if $u,v\in K$. A set  $\mathcal{K}$ of cliques is a \emph{vertex clique cover} if for every $v\in V(G)$, there is $K\in\mathcal{K}$ such that $v\in K$. We say that $\mathcal{K}$ is an \emph{edge clique cover} if for every $uv\in E(G)$, there is $K\in\mathcal{K}$ with $u,v\in K$, and $\mathcal{K}$ is an \emph{edge clique partition} if $\mathcal{K}$ is an edge clique cover such that any two cliques in $\mathcal{K}$ has at most one common vertex.
We use $\ecc(G)$ and $\ecp(G)$ to denote the minimum size of an edge clique cover and partition of $G$, respectively.

\medskip\noindent\textbf{Parameterized Complexity.} We refer to the textbook by Cygan et al.~\cite{CyganFKLMPPS15}  for an introduction to the area.
We remind that  a \emph{parameterized problem} is a language $L\subseteq\Sigma^*\times\mathbb{N}$  where $\Sigma^*$ is a set of strings over a finite alphabet $\Sigma$. An input of a parameterized problem is a pair $(x,k)$ where $x$ is a string over $\Sigma$ and $k\in \mathbb{N}$ is a \emph{parameter}. 
A parameterized problem is \emph{fixed-parameter tractable} (or \classFPT) if it can be solved in  $f(k)\cdot |x|^{\mathcal{O}(1)}$ time for some computable function~$f$.  
The complexity class \classFPT contains all fixed-parameter tractable parameterized problems. We use the \emph{Exponential Time Hypothesis} (ETH) of Impagliazzo, Paturi, and Zane   \cite{ImpagliazzoPZ01} to obtain conditional computational lower bounds.  Under ETH,  there is a positive real $\delta$ such that \textsc{$3$-Satisfiability} (\textsc{$3$-SAT}) with $n$ variables and $m$ clauses cannot be solved in $2^{\delta n}\cdot (n+m)^{\Oh(1)}$ time; in particular, \textsc{$3$-SAT} cannon be solved in time subexponential in $n$.

\medskip
We need the following auxiliary results about edge clique covers and independent sets. First, we observe that $\alpha(G)$ gives a lower bound for $\ecc(G)$ and $\ecp(G)$.

\begin{observation}\label{obs:lb}
For a graph $G$ without isolated vertices $\alpha(G)\leq \ecc(G)\leq \ecp(G)$.
\end{observation}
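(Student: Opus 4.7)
The plan is to prove the two inequalities separately, both by elementary arguments directly from the definitions.

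For the right-hand inequality $\ecc(G) \leq \ecp(G)$, observe that by definition every edge clique partition is, in particular, an edge clique cover (the partition condition only imposes the extra requirement that distinct cliques share at most one vertex). Thus any optimal edge clique partition of $G$ is a valid edge clique cover, and its size is therefore at least $\ecc(G)$. Taking an optimal partition yields $\ecc(G) \leq \ecp(G)$.

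For the left-hand inequality $\alpha(G) \leq \ecc(G)$, let $\mathcal{K}$ be an edge clique cover of $G$ of minimum size, and let $I \subseteq V(G)$ be a maximum independent set, so $|I| = \alpha(G)$. I would define a map $\varphi \colon I \to \mathcal{K}$ as follows: for each $v \in I$, since $G$ has no isolated vertices, $v$ has some incident edge $vu \in E(G)$; this edge must be covered by at least one clique in $\mathcal{K}$, and I let $\varphi(v)$ be any such clique. Note that $v \in \varphi(v)$ by construction.

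The key step is to verify that $\varphi$ is injective. Suppose $\varphi(u) = \varphi(v) = K$ for distinct $u, v \in I$. Then both $u$ and $v$ lie in $K$, which is a clique, so $uv \in E(G)$, contradicting the independence of $I$. Hence $\varphi$ is an injection from $I$ into $\mathcal{K}$, giving $\alpha(G) = |I| \leq |\mathcal{K}| = \ecc(G)$. No serious obstacle is expected; the only subtle point is the use of the no-isolated-vertex assumption to guarantee that each $v \in I$ has an edge to attach it to a cover clique.
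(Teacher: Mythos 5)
Your proof is correct and follows essentially the same approach as the paper: the second inequality is immediate because every edge clique partition is an edge clique cover, and the first follows by assigning each vertex of a maximum independent set to a covering clique (using the absence of isolated vertices) and noting that no clique can receive two independent vertices. The explicit injection $\varphi$ is just a more formal phrasing of the paper's counting argument.
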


\begin{proof}
Consider an edge clique cover of $G$ of size $\ecc(G)$. Because $G$ has no isolated vertices, each vertex of any independent set is in a clique of $\mathcal{C}$. Since an independent set cannot contain two vertices in the same clique, we obtain that $\alpha(G)\leq\ecc(G)$. The second inequality is trivial because each edge clique partition is an edge clique cover.  
\end{proof}

Notice that the absence of isolated vertices is crucial for the lower bound of $\ecc(G)$ and $\ecp(G)$. We also remark that  that excluding isolated vertices is essential for our algorithmic results for \probECPISshort and \probECPISshort. 

\begin{observation}\label{obs:isolhard} 
For any graph class $\mathcal{G}$ closed under adding pendent neighbors and isolated vertices such that \probECPshort (\probECCshort) is \classNP-complete on $\mathcal{G}$, 
it is \classNP-complete to decide whether a graph $G\in\mathcal{G}$ given together with its maximum independent set has an edge clique partition (cover) of size at most $\alpha(G)$. 
\end{observation}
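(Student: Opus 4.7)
The plan is to give a polynomial-time many-one reduction from \probECPshort (respectively \probECCshort) on the class $\mathcal{G}$ to the target problem. Given an instance $(H,t)$ with $H\in\mathcal{G}$, I would construct $G$ by first attaching a fresh pendant neighbour $v'$ to every vertex $v\in V(H)$, and then adding $t$ fresh isolated vertices $w_1,\dots,w_t$. By the assumed closure properties of $\mathcal{G}$ the resulting graph $G$ still lies in $\mathcal{G}$. As the accompanying independent set I would output $I=\{v':v\in V(H)\}\cup\{w_1,\dots,w_t\}$, which has size $|V(H)|+t$ and can plainly be written down in polynomial time.

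The core verification is that $I$ is a maximum independent set of $G$ and that the above-$\alpha$ clique condition faithfully encodes the original \probECPshort/\probECCshort threshold. Independence of $I$ is clear. For the matching upper bound on $\alpha(G)$, any independent set of $G$ can contain at most one vertex of each pair $\{v,v'\}$ and all $t$ isolated vertices, giving $\alpha(G)\le|V(H)|+t=|I|$. For the clique counts, each pendant $v'$ has a single neighbour, so the edge $vv'$ must be covered by the clique $\{v,v'\}$ itself in any edge clique cover or partition of $G$; this contributes exactly $|V(H)|$ forced cliques, and the remaining edges are precisely those of $H$. Hence $\ecp(G)=\ecp(H)+|V(H)|$ and $\ecc(G)=\ecc(H)+|V(H)|$, so $\ecp(G)\le\alpha(G)$ rewrites as $\ecp(H)\le t$, and symmetrically for \probECCshort.

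Membership in \classNP is immediate, since a candidate edge clique partition or cover of size at most $|I|$ is a polynomial-size certificate that can be verified in polynomial time. I do not foresee any substantial obstacle; the only conceptual point is the observation that $\alpha(G)$ has to be a quantity we can exhibit without solving \textsc{Independent Set} on $H$. The pendant trick accomplishes exactly this by pinning $\alpha$ of the pendant-augmented graph to the trivially known value $|V(H)|$, after which the $t$ isolated vertices shift the threshold to $|V(H)|+t$ and thereby embed the original ECP/ECC threshold $t$ precisely into the above-$\alpha$ condition, while both operations preserve membership in $\mathcal{G}$.
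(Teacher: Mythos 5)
Your proposal is correct and follows essentially the same route as the paper: the same construction (pendant neighbours on every vertex of $H$ plus $t$ isolated vertices), the same explicit maximum independent set, and the same observation that each pendant edge forces a separate clique, yielding $\ecp(G)=\ecp(H)+|V(H)|$ and likewise for covers. Nothing further is needed.
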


\begin{proof}
Consider a graph $G$. We construct the graph $G'$ by adding a pendent neighbor to each vertex of $G$ and adding $k$ isolated vertices.  Notice that $\alpha(G')=|V(G)|+k$ and the set of added pendent and isolated vertices is a maximum independent set. It remains to observe that because each edge incident to a degree one vertex should be covered by a separate clique, $G'$ has an edge clique partition (cover, respectively) of size at most $\alpha(G')$  if and only if $G$ has an edge clique partition (cover, respectively) of size at most $k$.
\end{proof}

The following lemma plays the most fundamental role for \probECCISshort and \probECPISshort.

\begin{lemma}\label{lem:nonsimpl}
Let $G$ be a graph without isolated vertices, and let $\mathcal{C}$ be an edge clique cover (partition) of $G$ of size at most $\alpha(G)+k$ for an integer $k\geq 0$. Then 
\begin{itemize}
\item[(i)] at most $k$ vertices of any maximum independent set are non-simplicial,
\item[(ii)] at most $2k$ cliques of $\mathcal{C}$ are non-simplicial.  
\end{itemize}
\end{lemma}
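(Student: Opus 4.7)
The plan is to fix a maximum independent set $I$ of $G$ and to analyze the distribution of cliques of $\mathcal{C}$ around $I$. Since $I$ is independent, every clique of $\mathcal{C}$ contains at most one vertex of $I$; writing $c_v := |\{K \in \mathcal{C} : v \in K\}|$ for $v \in I$, this yields $\sum_{v \in I} c_v \leq |\mathcal{C}| \leq \alpha(G)+k$, hence the master inequality
\[
\sum_{v \in I}(c_v - 1) \leq k.
\]
The assumption that $G$ has no isolated vertices guarantees $c_v \geq 1$ for every $v \in I$, since at least one edge incident to $v$ must be covered.

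For (i), I would show that $c_v \geq 2$ whenever $v \in I$ is non-simplicial. Non-simpliciality provides $u, w \in N_G(v)$ with $uw \notin E(G)$, so the edges $uv$ and $vw$ must be covered by two \emph{distinct} cliques of $\mathcal{C}$ (as $u, w$ cannot coexist in a clique), and both contain $v$. Thus each non-simplicial $v \in I$ contributes at least $1$ to the master sum, bounding their number by $k$.

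For (ii), I refine the argument. Call $K \in \mathcal{C}$ a \emph{witness} if $K = N_G[v]$ for some simplicial $v \in I$. Witnesses for distinct simplicial $v, v' \in I$ are distinct (otherwise $v \in N_G[v'] \setminus \{v'\}$ would contradict independence of $I$), and every witness is by definition a simplicial clique, so it suffices to bound the number of non-witnesses in $\mathcal{C}$ by $2k$. The key refinement is: if $v \in I$ is simplicial but $N_G[v] \notin \mathcal{C}$, then every clique of $\mathcal{C}$ containing $v$ is a \emph{proper} subset of the clique $N_G[v]$, and a single such proper subset cannot cover all edges incident to $v$ (else it would equal $N_G[v]$ itself), forcing $c_v \geq 2$.

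Letting $I_{ns}$ denote the non-simplicial vertices of $I$ and $I_s^- := \{v \in I \setminus I_{ns} : N_G[v] \notin \mathcal{C}\}$, every vertex of $I_{ns} \cup I_s^-$ contributes at least $1$ to the master sum, so $|I_{ns}| + |I_s^-| \leq k$. The number of witnesses equals $|I| - |I_{ns}| - |I_s^-| = \alpha(G) - |I_{ns}| - |I_s^-|$, so the number of non-witnesses in $\mathcal{C}$ is at most $|\mathcal{C}| - \alpha(G) + |I_{ns}| + |I_s^-| \leq k + |I_{ns}| + |I_s^-| \leq 2k$, and (ii) follows. The main subtlety I anticipate is carefully distinguishing ``witness'' (a clique of a very specific form relative to $I$) from the intrinsic notion of ``simplicial clique of $G$'' — the inclusion goes in the right direction because every witness is simplicial, but one must resist the temptation to argue the reverse — and correctly using the no-isolated-vertex hypothesis both for $c_v \geq 1$ and for the delicate step $c_v \geq 2$ when $v$ is simplicial with $N_G[v] \notin \mathcal{C}$.
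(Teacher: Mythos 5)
Your proof is correct and follows essentially the same route as the paper's: both hinge on the observation that each vertex of a maximum independent set lies in at least one clique of $\mathcal{C}$ and in a distinct one, so at most $k$ of them can lie in two or more, and a vertex lying in exactly one clique is simplicial with that clique equal to its closed neighborhood. Your case split into non-simplicial vertices and simplicial vertices whose closed neighborhood is absent from $\mathcal{C}$ is just the contrapositive packaging of the paper's single observation, so there is nothing further to add.
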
 

\begin{proof}
As an edge clique partition is an edge clique cover, it is sufficient to show the claim when $\mathcal{C}$ is an edge clique cover of size at most $\alpha(G)+k$.
Let $X$ be a maximum independent set of $G$. Clearly, any clique of $G$ contains at most one vertex of $X$ by independence. 
Because $G$ has no isolated vertices, each vertex $v\in X$ is included in at least one clique of $\mathcal{C}$. Furthermore, $v$ is simplicial if $v$ is included in a single clique of $\mathcal{C}$. 
Since $|\mathcal{C}|\leq \alpha(G)+k$, we obtain that at least $\alpha(G)-k$ vertices of $X$ are contained in exactly one clique of $\mathcal{C}$.
Therefore, at least $\alpha(G)-k$ vertices of $X$ are simplicial, and at least $\alpha(G)-k$ cliques of $\mathcal{C}$ are simplicial. This means that at most $k$ vertices of $X$ are non-simplicial and proves (i). To show (ii), note that because 
 at least $\alpha(G)-k$ cliques of $\mathcal{C}$ are simplicial, at most $2k$ cliques could be non-simplicial. This concludes the proof. 
\end{proof}

In particular, this lemma implies that \probECCISshort and \probECPISshort are trivial for $k=0$. 

\begin{observation}\label{obs:kzero}
\probECCISshort and \probECPISshort can be solved in polynomial time for $k=0$.
\end{observation}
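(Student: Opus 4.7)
The plan is to show that, for $k=0$, \Cref{lem:nonsimpl} forces any valid solution to be uniquely determined---namely, it must equal the set of all distinct simplicial cliques of $G$---which can then be computed and checked in polynomial time. Applying \Cref{lem:nonsimpl}(ii) with $k=0$ tells me that every clique in a valid solution must be simplicial, and combining this with \Cref{obs:lb} gives that any valid solution $\mathcal{C}$ satisfies $|\mathcal{C}|=\alpha(G)$ and consists solely of simplicial cliques of $G$.

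I would then prove a small structural claim: if $u,v$ are adjacent simplicial vertices, then $N_G[u]=N_G[v]$. Indeed, $N_G[u]$ is a clique containing $v$, so each of its elements is either $v$ or adjacent to $v$, giving $N_G[u]\subseteq N_G[v]$; symmetry yields equality. Consequently, the simplicial vertices of $G$ split into groups $S_K$, one per distinct simplicial clique $K$, with every vertex of $S_K$ having closed neighborhood exactly $K$ and no edges between distinct groups in $G[V_s]$. Let $\mathcal{S}$ denote the set of distinct simplicial cliques of $G$. Choosing one vertex from each $S_K$ yields an independent set, so $|\mathcal{S}|\le\alpha(G)$. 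Conversely, if $\mathcal{C}$ is a valid solution, then $\mathcal{C}\subseteq\mathcal{S}$ by the first step, and every $N_G[v]\in\mathcal{S}$ must appear in $\mathcal{C}$: since $G$ has no isolated vertex, $v$ belongs to some $N_G[u]\in\mathcal{C}$, and either $u=v$ or $uv\in E(G)$; in the latter case the structural claim forces $N_G[u]=N_G[v]$. Hence $\mathcal{C}=\mathcal{S}$, and in particular $\alpha(G)=|\mathcal{S}|$ whenever a solution exists.

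This reduces the problem to the following polynomial-time algorithm: compute $\mathcal{S}$ by testing, for each vertex $v$, whether $N_G[v]$ is a clique; then verify that $\mathcal{S}$ is an edge clique cover of $G$ (for \probECCISshort), or an edge clique partition (for \probECPISshort); output $\mathcal{S}$ if so, otherwise report NO. I do not anticipate any genuine obstacle: the argument rests only on \Cref{lem:nonsimpl} and the elementary structural observation about simplicial neighborhoods, and both checks are straightforward in polynomial time---the partition variant merely requires an additional verification that no edge is contained in two distinct members of $\mathcal{S}$.
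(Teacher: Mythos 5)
Your proof is correct and follows essentially the same approach as the paper: by \Cref{lem:nonsimpl} with $k=0$ every clique in a solution is simplicial, so one lists all distinct simplicial cliques and checks whether they form a cover (respectively, partition). You additionally spell out details the paper leaves implicit---that $|\mathcal{S}|\le\alpha(G)$ always holds and that any solution must equal $\mathcal{S}$ exactly (which matters for the partition variant, where adding cliques could break disjointness)---which makes your write-up somewhat more rigorous but not a different argument.
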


\begin{proof}
By \Cref{lem:nonsimpl}, any clique cover or partition of size at most $\alpha(G)$ should contain only simplicial cliques.  We list all pairwise distinct simplicial cliques using the algorithm 
of of Kloks, Kratsch, and M{\"{u}}ller~\cite{KloksKM00}, 
and check whether they compose an edge clique cover or partition, respectively.  
\end{proof}

We use~\Cref{lem:nonsimpl} to argue that for graphs admitting an edge clique cover of bounded size, we can compute $\alpha(G)$. 

\begin{lemma}\label{lem:iscover}
Let $G$ be a graph and let $\mathcal{C}$ be an edge clique cover (partition) of $G$ of size at most $k$. Then $\alpha(G)$ can be computed in $\Oh(2^{k}\cdot k^2n)$ time.
\end{lemma}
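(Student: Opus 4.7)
The plan is to reduce computing $\alpha(G)$ to a maximum set-packing problem over a universe of size at most $k$. Write $\mathcal{C}=\{K_1,\ldots,K_t\}$ with $t\le k$. Since $\mathcal{C}$ is an \emph{edge} clique cover, a vertex is isolated if and only if it belongs to none of the $K_i$; hence the set $I$ of isolated vertices is readily identified, $\alpha(G)=|I|+\alpha(G-I)$, and it suffices to compute $\alpha(G-I)$. For each non-isolated vertex $v$, let its \emph{type} be $T_v=\{i\in[t]\mid v\in K_i\}\subseteq[t]$, which is non-empty; all types can be computed in $\Oh(nk)$ time by one pass over the cliques. The key observation is that, for distinct non-isolated $u,v$, one has $uv\in E(G)$ if and only if $T_u\cap T_v\neq\emptyset$: any edge is covered by some $K_i\in\mathcal{C}$, placing $i$ into both types, while any index shared by two types witnesses membership in a common clique $K_i$ and thus forces an edge. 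Consequently, an independent set in $G-I$ is precisely a family of non-isolated vertices whose types are pairwise disjoint, so $\alpha(G-I)$ equals the maximum number of realized types that can be packed disjointly inside $[t]$; in particular $\alpha(G-I)\le t\le k$.

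I would solve this packing problem by a standard subset DP over $[t]$. For each $S\subseteq[t]$, let $f[S]$ denote the maximum size of a family of vertices with pairwise disjoint types whose union is contained in $S$. Initialize $f[S]=0$ and then process the non-isolated vertices in any order; for a vertex $v$ perform the update
\[f[S]\leftarrow\max\bigl(f[S],\;f[S\setminus T_v]+1\bigr)\qquad\text{for every }S\supseteq T_v.\]
A direct induction on the number of processed vertices shows that $f[S]$ then equals the optimum packing drawn from the processed types and contained in $S$: at each step we either skip $v$ (keeping the old value) or include $v$, in which case its type $T_v$ must lie in $S$ and is removed before the remaining vertices are packed. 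Two vertices sharing the same non-empty type can never both be chosen, since such a type is not disjoint from itself, so the DP naturally handles duplicate types without overcounting. After all vertices have been processed, $\alpha(G)=|I|+f[[t]]$.

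For the running time, computing types costs $\Oh(nk)$, and each vertex update enumerates the $2^{t-|T_v|}\le 2^{k}$ supersets of $T_v$ via the standard bitmask-enumeration trick; each local step is an $\Oh(k)$ bitwise operation on subsets of $[k]$. Summing over all $n$ vertices yields the claimed $\Oh(2^{k}\cdot k^{2}\cdot n)$ bound. The only non-routine ingredient is the edge/type-intersection equivalence, which relies crucially on $\mathcal{C}$ being an \emph{edge} clique cover (so that every edge is indeed witnessed by a shared index) and on each $K_i$ actually being a clique of $G$ (so that every shared index yields a genuine edge); everything after that observation is textbook subset-DP bookkeeping, and I do not foresee any further obstacle.
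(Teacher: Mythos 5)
Your proposal is correct and follows essentially the same route as the paper: both rest on the observation that two vertices are adjacent if and only if their sets of containing cliques (the paper's characteristic vectors, your types) intersect, and both then run an $\Oh(2^{k})$-state subset dynamic program within the claimed time bound. The only cosmetic differences are that the paper indexes a Boolean table by (subset, independent-set size) while you keep a max-value table with knapsack-style in-place updates, and your side remark that a vertex is isolated iff its type is empty is not literally true when the cover contains singleton cliques---but such vertices are still treated correctly by the type-intersection criterion, so nothing breaks.
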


\begin{proof}
It is sufficient to show the claim when $\mathcal{C}$ is an edge clique cover. We can assume without loss of generality that $G$ is a graph with at least one edge that  has no isolated vertices because isolated vertices are included in any maximum independent set. Thus, it is sufficient to compute a maximum independent set in the graph obtained from $G$ by deleting isolated vertices.  Notice that $\alpha(G)\leq k$ as each vertex of an independent set is included in a clique of $\mathcal{C}$, and each clique contains at most one vertex of the set. 

Let $\mathcal{C}=\{C_1,\ldots,C_k\}$. For every vertex $v\in V(G)$, we consider its \emph{characteristic vector} $\bfx_v=(x_1,\ldots,x_k)$ where 
\[x_i=
\begin{cases}
1,\mbox{if }v\in C_i\\
0,\mbox{otherwise}
\end{cases}
\]
for $i\in\{1,\ldots,k\}$.
Observe that two vertices $u$ and $v$ are nonadjacent if and only if $\bfx_u$ and $\bfx_v$ are orthogonal, that is, $\bfx_u\cdot \bfx_v=0$. 
For a set $U\subseteq V(G)$, we write 
$\bfx_U=\sum_{v\in U}\bfx_v$, and we assume that $\bfx_\emptyset=\mathbf{0}$. 
 Notice that if $U$ is an independent set, then, by the orthogonality of characteristic vectors, $\bfx_U$ is a 0--1 vector. 
For two 0--1 vectors $\bfx=(x_1,\ldots,x_k)$ and $\bfy=(y_1,\ldots,y_k)$, we write $\bfx\preceq \bfy$ if $x_i\leq y_i$ for all $i\in\{1,\ldots,k\}$.

We use dynamic programming to compute $\alpha(G)$. For this, for each 0--1 vector $\bfx\in \{0,1\}^k$ and each $i\in\{0,\ldots,k\}$, we 
we compute the values of  the Boolean function 
\begin{equation*}
\alpha(\bfx,i)=
\begin{cases}
{\sf true },\mbox{if there is an independent set }U\text{ of size }i\text{ s.t. }\bfx_U=\bfx\\
{\sf false},\mbox{otherwise} 
\end{cases}
\end{equation*}   
We initialize computation, by setting 
\begin{equation}\label{eq:init}
\alpha(\bfx,0)=
\begin{cases}
{\sf true},\mbox{if }\bfx=\mathbf{0}\\
{\sf false},\mbox{otherwise }
\end{cases}
\end{equation}
For $i\geq 1$, we use the recursion
\begin{equation}\label{eq:rec}
\alpha(\bfx,i)=\bigvee_{v\in V(G)\text{ s.t.\ }\bfx_v\preceq \bfx}\alpha(\bfx-\bfx_v,i-1)
\end{equation}
Then $\alpha(G)$ is the maximum value of $i\in\{1,\dots,k\}$ such that there is a 0--1 vector $\bfx$ with $\alpha(\bfx,i)=\mathsf{true}$. 

The correctness of the computation of the values of $\alpha(\bfx,i)$ in \Cref{eq:init} and \Cref{eq:rec} follows from the orthogonality of the characteristic vectors for the vertices of an independent set by standard dynamic programming arguments. To evaluate the running time, notice that we compute $\alpha(\bfx,i)$ for at most $2^k$ vectors $\bfx$ and at most $k+1$ values of $i$. Then, for given $\bfx$ and $i$, we need $\Oh(kn)$ time to compute $\alpha(\bfx,i)$. Thus, the overall running time is $\Oh(2^{k}\cdot k^2n)$. 
This completes the proof.
\end{proof}

We remark that our algorithm from~\Cref{lem:iscover} can be easily modified to output an independent set of maximum size.

\section{Clique Partition above Independent Set}\label{sec:partaboveIS}

In this section, we prove  \Cref{thm:ecpis-fpt}. For this, we need several auxiliary results. 
We start with the classical result of 
de Bruijn and  Erd\H{o}s from 1948~\cite{deBruijnE48}.

\begin{proposition}[de Brujin and Erd\H{o}s \cite{deBruijnE48}]\label{prop:dBE}
For any $n\geq 3$, an edge clique partition of $K_n$ into cliques of size at most $n-1$ contains at least $n$ cliques. 
\end{proposition}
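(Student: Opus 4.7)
The plan is to prove the proposition by a linear-algebraic, Fisher-inequality-style argument. Let $\mathcal{C}=\{C_1,\ldots,C_m\}$ be an edge clique partition of $K_n$ with $|C_j|\leq n-1$ for every $j$, and form the $n\times m$ vertex--clique incidence matrix $A\in\{0,1\}^{n\times m}$ defined by $A_{v,j}=1$ iff $v\in C_j$. The goal is to show $\operatorname{rank}(A)\geq n$, which immediately gives $m\geq n$ since $A$ has $m$ columns.

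To control the rank, I will analyse the Gram matrix $M := AA^{T}$. Its diagonal entry $M_{v,v}$ equals $r_v$, the number of cliques of $\mathcal{C}$ containing $v$, while for $u\neq v$ the entry $M_{u,v}$ counts the cliques of $\mathcal{C}$ containing both $u$ and $v$. Since $\mathcal{C}$ is a partition of $E(K_n)$, every pair $\{u,v\}$ lies in exactly one clique, so the off-diagonal entries of $M$ are all equal to $1$. Hence
\[
M \;=\; D + J,
\]
where $D=\operatorname{diag}(r_v-1)_{v\in V(K_n)}$ and $J$ is the $n\times n$ all-ones matrix.

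The central step, and the only place where the size constraint is used, is to show that $r_v\geq 2$ for every vertex $v$. Indeed, for any $v$ the $n-1$ edges incident to $v$ are partitioned among the cliques through $v$, and a clique $C\ni v$ covers exactly $|C|-1$ of these edges. If $r_v=1$ with unique clique $C$, then $|C|-1=n-1$, forcing $|C|=n$ and contradicting $|C|\leq n-1$. Hence $D$ has strictly positive diagonal entries and is positive definite; $J$ is positive semidefinite (rank one, with nonnegative spectrum); so $M=D+J$ is positive definite, giving $\operatorname{rank}(M)=n$. Combined with $\operatorname{rank}(A)\geq \operatorname{rank}(AA^T)$ and $\operatorname{rank}(A)\leq m$, we conclude $m\geq n$.

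There is no substantial obstacle: the argument is standard once one sees the decomposition $M=D+J$ and the reason $r_v\geq 2$. The only thing worth double-checking is that the hypothesis $n\geq 3$ is compatible (for $n\leq 2$ either there are no edges or no feasible partition exists), and that the bound is tight, as witnessed by $K_3$ partitioned into its three edges.
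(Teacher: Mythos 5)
Your proof is correct, and every step checks: the Gram matrix $AA^{T}$ equals $D+J$ with all off-diagonal entries $1$ because the partition property puts each edge in exactly one clique; the claim $r_v\ge 2$ follows as you say, since the $n-1\ge 2$ edges at $v$ are split among the cliques through $v$, each accounting for $|C|-1$ of them, so a unique clique through $v$ would force $|C|=n$; and $D$ positive definite plus $J$ positive semidefinite gives $\operatorname{rank}(A)\ge\operatorname{rank}(AA^{T})=n\le m$. There is, however, nothing in the paper to compare this against: \Cref{prop:dBE} is imported as a black box with a citation to de Bruijn and Erd\H{o}s~\cite{deBruijnE48}, and no proof is given in the text. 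What you have written is the standard linear-algebraic (Fisher-inequality-style) proof of this special case of the de Bruijn--Erd\H{o}s theorem; the original 1948 argument is combinatorial and is phrased for linear spaces rather than clique partitions of $K_n$. As a self-contained verification of the proposition exactly as stated, your argument is complete, and the tightness example $K_3$ partitioned into its three edges is the right sanity check.
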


To solve \probECPISshort, we use as a black box an algorithm for \probECPshort parameterized by the number of cliques in a solution.  The state-of-the-art \classFPT algorithm was given by 
Feldmann,  Issac, and Rai~\cite{FeldmanECP20}.
 
\begin{proposition}[Feldmann, Issac, Rai \cite{FeldmanECP20}]\label{prop:ecp}
 \probECPshort can be solved in $2^{\Oh(k^{3/2}\log k)}\cdot n^{\Oh(1)}$ time where $k$ is the maximum number of cliques in a solution. Furthermore, the algorithm outputs an edge clique partition of size at most $k$ if it exists. 
 \end{proposition}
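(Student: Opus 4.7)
The plan is to combine a polynomial kernel with a carefully bounded enumeration of partitions on the kernel. The global structure would be: (i) reduce the input instance to one whose vertex count is polynomially bounded in $k$, and (ii) on the reduced instance, exhaustively enumerate candidate partitions, using structural bounds on the total vertex--clique incidence to keep the search space at $2^{\Oh(k^{3/2}\log k)}$.

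For the kernelization phase, I would start from the reduction rules of Mujuni and Rosamond: a \emph{twin rule} merging vertices with identical closed neighborhoods (they must lie in exactly the same cliques of any partition, so they can be collapsed into a single weighted vertex) together with rules that commit forced simplicial cliques to the solution and decrement $k$ accordingly. These rules already yield a kernel with $\Oh(k^2)$ vertices, hence $m = \Oh(k^2)$ edges. To push the bound further, I would add a sharper rule based on the incidence bound below, iteratively removing vertices whose membership pattern is forced across all valid partitions; the goal is a kernel whose size is polynomially bounded by $k$ (at worst $\Oh(k^2)$, but ideally closer to $\Oh(k^{3/2})$ to match the running time directly).

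The key structural observation is the following. Any solution $\mathcal{C}=\{C_1,\ldots,C_\ell\}$ with $\ell \le k$ satisfies $\sum_i \binom{|C_i|}{2}=m$, so by Cauchy--Schwarz applied to $\sum_i |C_i|^2$ the total vertex--clique incidence sum $S = \sum_i |C_i|$ is at most $\Oh(\sqrt{mk}) = \Oh(k^{3/2})$ on the kernel. This yields two separate sources of boundedness: the multiset of clique sizes $(|C_1|,\ldots,|C_\ell|)$ is an integer partition of $S$ into at most $k$ parts, of which there are only $2^{\Oh(\sqrt{S\log S})} = 2^{\Oh(k^{3/2})}$ many; and the aggregate number of (vertex, clique) incidences across the entire partition is $\Oh(k^{3/2})$, with each incidence specifying one bit of one vertex's characteristic vector. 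I would then enumerate as follows: guess the sorted clique-size profile; for each profile, iterate over the cliques in decreasing order of size and, at each step, guess the vertex set of the current clique. Using that the total number of specified incidences is $\Oh(k^{3/2})$ and each specified vertex lies in a kernel of size $\mathrm{poly}(k)$, the total number of guesses is bounded by $2^{\Oh(k^{3/2}\log k)}$. For each guess, correctness (that the guessed sets are cliques and partition $E(G)$) can be verified in polynomial time, and the algorithm outputs the first partition it finds.

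The main obstacle is the double accounting in the enumeration: one has to ensure that specifying cliques one by one does not overshoot the $2^{\Oh(k^{3/2}\log k)}$ budget, which requires interleaving the size-profile enumeration with the vertex-assignment enumeration so that the combined count is dominated by the incidence bound $S = \Oh(k^{3/2})$ rather than by the vertex count $\Oh(k^2)$ of the kernel. A secondary technical hurdle is handling the weighted vertices produced by the twin rule: since a single kernel vertex may represent several original vertices, verifying that a guessed partition lifts to a valid partition of the original graph needs an additional argument (e.g., each weight class is distributed among the cliques containing its representative via an auxiliary polynomial-time matching). Finally, the ``furthermore'' clause follows automatically from the algorithm being constructive: whenever the exhaustive enumeration succeeds, the witnessing partition is reported.
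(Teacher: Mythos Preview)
The paper does not prove this proposition at all: it is quoted verbatim as a black-box result from Feldmann, Issac, and Rai~\cite{FeldmanECP20} and is used only as a subroutine inside the algorithms of \Cref{lem:isecp} and \Cref{thm:ecpis-fpt}. There is therefore no ``paper's own proof'' to compare your sketch against.

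That said, your outline is a reasonable reconstruction of the cited argument. The two load-bearing ideas in~\cite{FeldmanECP20} are exactly the ones you identify: a polynomial kernel (obtained via twin reduction in a weighted formulation, yielding $\Oh(k^2)$ vertices) and the Cauchy--Schwarz incidence bound $\sum_i |C_i| = \Oh(k^{3/2})$ on any valid partition of a kernelized instance. Your enumeration analysis is also sound: with $S=\Oh(k^{3/2})$ total incidences and each incidence naming one of $\mathrm{poly}(k)$ kernel vertices, the search space is $(\mathrm{poly}(k))^{S}=2^{\Oh(k^{3/2}\log k)}$, so the $\Oh(k^2)$ kernel already suffices and your aside about pushing the kernel to $\Oh(k^{3/2})$ is unnecessary. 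The weighted-twin bookkeeping you flag as a ``secondary hurdle'' is genuinely needed and is handled in~\cite{FeldmanECP20} by working throughout in the weighted edge clique partition problem, where twin classes carry multiplicities and the partition condition becomes a system of equalities on edge weights; this avoids the lifting step you propose.
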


  Next, we show that combining \Cref{lem:nonsimpl}, \Cref{lem:iscover}, and \Cref{prop:ecp}, we can compute $\alpha(G)$ for graphs admitting an edge clique partition of size at most $\alpha(G)+k$. 
  
 \begin{lemma}\label{lem:isecp}
 There is an algorithm with running time $2^{\Oh(k^{3/2}\log k)}\cdot n^{\Oh(1)}$ that, given an instance  $(G,k)$ of \probECPISshort,
either computes $\alpha(G)$ or correctly reports that $(G,k)$ is a no-instance of  \probECPISshort.
  \end{lemma}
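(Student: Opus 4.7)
My starting point would be the \emph{simplicial reduction} $\alpha(G) = \alpha(G - N_G[v]) + 1$, which holds for any simplicial vertex $v$: every maximum independent set of $G$ can be chosen to contain $v$, since $N_G[v]$ is a clique and so contains at most one vertex of any independent set, which may then be swapped for $v$. I iterate this reduction, at each step picking a simplicial vertex $v$ of \emph{positive} degree in the current graph, until the resulting graph $G^*$ has no non-isolated simplicial vertex. Let $i$ be the number of reduction steps, $c_{\mathrm{iso}}$ the number of isolated vertices of $G^*$, and $G_{\mathrm{main}}$ the graph obtained from $G^*$ by deleting its isolated vertices. Since isolated vertices belong to every maximum independent set, $\alpha(G) = \alpha(G_{\mathrm{main}}) + i + c_{\mathrm{iso}}$, and by construction $G_{\mathrm{main}}$ has neither isolated nor simplicial vertices. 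This identity holds unconditionally, regardless of whether $(G,k)$ is a yes-instance.

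The heart of the argument is the claim that $\ecp(G_{\mathrm{main}}) \le 2k$ whenever $(G,k)$ is a yes-instance. Fix a witnessing partition $\mathcal{C}$ of $G$ with $|\mathcal{C}| \le \alpha(G) + k$ and consider the restriction $\mathcal{C}' := \{C \cap V(G_{\mathrm{main}}) : C \in \mathcal{C},\ |C \cap V(G_{\mathrm{main}})| \ge 2\}$, which is readily checked to be an edge clique partition of $G_{\mathrm{main}}$. By \Cref{lem:nonsimpl}(ii), at most $2k$ cliques of $\mathcal{C}$ are non-simplicial, so it suffices to show that every simplicial clique $K = N_G[w]$ of $\mathcal{C}$ contributes nothing to $\mathcal{C}'$. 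If $w$ lies in $V(G) \setminus V(G^*)$, then $w$ was swallowed by some $N_{G_{j-1}}[v_j]$, and since $K$ is a clique containing $v_j$ we get $K \subseteq N_G[v_j] \subseteq V(G) \setminus V(G^*)$; if instead $w \in V(G^*)$, then $N_{G^*}[w] \subseteq N_G[w]$ is still a clique so $w$ is simplicial in $G^*$, which by the stopping rule of the loop forces $w$ to be isolated in $G^*$, giving $N_G(w) \cap V(G^*) = \emptyset$ and $K \cap V(G_{\mathrm{main}}) \subseteq \{w\} \setminus V(G_{\mathrm{main}}) = \emptyset$. Either way $|K \cap V(G_{\mathrm{main}})| \le 1$, so $|\mathcal{C}'| \le 2k$.

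Having reduced matters to finding a partition of size at most $2k$ on $G_{\mathrm{main}}$, I would invoke \Cref{prop:ecp} with parameter $2k$ in $2^{\Oh(k^{3/2}\log k)} \cdot n^{\Oh(1)}$ time. If it fails, then the contrapositive of the previous paragraph witnesses that $(G,k)$ is a no-instance, and I report so. Otherwise, feeding the returned partition into \Cref{lem:iscover} yields $\alpha(G_{\mathrm{main}})$ in $\Oh(2^{2k}\cdot k^2 n)$ time, and the algorithm outputs $\alpha(G_{\mathrm{main}}) + i + c_{\mathrm{iso}}$, which equals $\alpha(G)$ by the unconditional reduction identity. The principal obstacle I expect is the structural claim in the second paragraph: it is exactly this ``destruction'' of all simplicial cliques of $\mathcal{C}$ by the restriction to $V(G_{\mathrm{main}})$ that collapses the a priori large budget $\alpha(G) + k$ to the manageable $2k$ required to feed \Cref{prop:ecp} and \Cref{lem:iscover}.
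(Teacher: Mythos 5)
Your proposal is correct and takes essentially the same route as the paper: strip away the simplicial structure, use \Cref{lem:nonsimpl} to argue the residual graph has an edge clique partition of size at most $2k$, invoke \Cref{prop:ecp}, and finish by computing the independence number of the residual graph with \Cref{lem:iscover}. The only difference is cosmetic — the paper deletes the union of all simplicial cliques of $G$ in one shot and adds $|\mathcal{S}|$, whereas you iteratively peel closed neighborhoods of simplicial vertices, which necessitates the somewhat longer case analysis in your second paragraph but leads to the same conclusion.
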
 
  
 \begin{proof}
 Let $G$ be a graph without isolated vertices. We compute the set $\mathcal{S}$ of all simplicial cliques and select a simplicial vertex in each simplicial clique. Denote by $X$ the obtained set of simplicial vertices. It is well-known that $G$ has a maximum independent set $U$ such that $X\subseteq U$. This implies that
 $\alpha(G)=\alpha(H)+|\mathcal{S}|$ where $H=G[W]$ for $W=V(G)\setminus\bigcup_{S\in\mathcal{S}}S$. If $G$ admits an edge clique partition of size at most $\alpha(G)+k$ then, by \Cref{lem:nonsimpl}, at most $2k$ cliques of the partition are non-simplicial. Notice that the edges of $H$ can be covered only by non-simplicial cliques. Therefore, $H$ should have an edge clique partition of size at most $2k$. We apply the algorithm from \Cref{prop:ecp} to $H$ and check in $2^{\Oh(k^{3/2}\log k)}\cdot n^{\Oh(1)}$ time whether $H$ has an edge clique partition of size at most $k'=2k$. If the algorithm reports that such a partition does not exist then we conclude that $(G,k)$ is a no-instance of \probECPISshort.  Otherwise, the algorithm outputs an edge clique partition $\mathcal{C}$ of size at most $2k$. Given this partition, we use the algorithm from   \Cref{lem:iscover} to compute $\alpha(H)$ in $2^{\Oh(k)}\cdot n^{\Oh(1)}$ time. Then we output $\alpha(G)=\alpha(H)+|\mathcal{S}|$. Because simplicial cliques can be found in polynomial time, for example, by the algorithm of Kloks, Kratsch, and M{\"{u}}ller~\cite{KloksKM00}, the overall running time is  $2^{\Oh(k^{3/2}\log k)}\cdot n^{\Oh(1)}$. This concludes the proof.
 \end{proof} 
  
By the following claim, we show that all cliques of size at least $6k+1$ that belong to any solution to an instance of \probECPISshort can be found in polynomial time.
 
 \begin{lemma}\label{lem:bigcliques}
There is a polynomial algorithm that, given an instance $(G,k)$ of \probECPISshort for $k\geq 1$, 
either outputs a set $\mathcal{K}$ of cliques of $G$ such that
\begin{itemize}
\item[(i)] each clique $C\in \mathcal{K}$ is included in any edge clique partition of $G$ of size at most $\alpha(G)+k$,
\item[(ii)] for every edge clique partition $\mathcal{C}$ of $G$ of size at most $\alpha(G)+k$ and any clique $C\in\mathcal{C}$ of size at least $6k+1$, $C\in \mathcal{K}$, 
\item[(iii)] for any two distinct cliques $C_1,C_2\in\mathcal{K}$, $|C_1\cap C_2|\leq 1$,
\end{itemize}
or correctly concludes that $(G,k)$ is a no-instance.
\end{lemma}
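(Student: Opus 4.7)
The strategy is to show that (a) every clique of size $\geq 6k+1$ in any edge clique partition $\mathcal{P}$ with $|\mathcal{P}|\leq\alpha(G)+k$ must be a simplicial clique of $G$, and (b) every simplicial clique of $G$ of size $\geq 6k+1$ must belong to every such partition. Given (a) and (b), the algorithm simply enumerates all simplicial cliques of $G$ in polynomial time using the algorithm of Kloks, Kratsch, and M\"uller~\cite{KloksKM00}, lets $\mathcal{K}$ be those of size $\geq 6k+1$, and checks condition~(iii) by inspecting all pairs in $\mathcal{K}$. If two such cliques share $\geq 2$ vertices, then by (b) both would be forced in every valid partition, which is impossible as they would doubly cover the shared edges; we then correctly output ``no-instance''. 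Otherwise we return $\mathcal{K}$, with properties~(i) and~(ii) following directly from (b) and (a) respectively.

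For (a), I would argue by contradiction. Suppose $C\in\mathcal{P}$ with $|C|\geq 6k+1$ is not simplicial. Then every $v\in C$ has a neighbor outside $C$ and hence lies in at least one clique of $\mathcal{P}$ other than $C$; since $C\in\mathcal{P}$, any such clique intersects $C$ exactly in $\{v\}$. Summing over $v\in C$ produces at least $|C|$ ``pendant'' cliques and so $|\mathcal{P}|\geq|C|+1\geq 6k+2$. By \Cref{lem:nonsimpl}(ii), at most $2k$ cliques of $\mathcal{P}$ are non-simplicial, so at least $4k+2$ pendants are simplicial cliques $N_G[u_i]$ with $u_i\notin C$. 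A short argument on closed neighborhoods shows that two simplicial cliques whose generators are adjacent must coincide, so the $u_i$'s are pairwise non-adjacent; together with a vertex of $C$ avoiding all $w_i := N_G[u_i]\cap C$ (which exists since $|C|\geq 6k+1>4k+2$) this yields a large independent set, and then combining with \Cref{lem:nonsimpl}(i) on a maximum independent set extending it overflows the bound $|\mathcal{P}|\leq\alpha(G)+k$.

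For (b), let $C=N_G[v]$ be simplicial with $|C|\geq 6k+1$ and suppose $C\notin\mathcal{P}$. All cliques of $\mathcal{P}$ through $v$ are proper subsets of $C$, so there are at least two of them. \Cref{prop:dBE} (de Brujin--Erd\H{o}s) applied to $E(C)=E(K_{|C|})$ forces at least $|C|$ cliques of $\mathcal{P}$ to cover $E(C)$. Extending $\{v\}$ to a maximum independent set $X$ and applying \Cref{lem:nonsimpl}(i), at least $\alpha(G)-k$ vertices $u\in X\setminus\{v\}$ are simplicial with their simplicial clique $N_G[u]\in\mathcal{P}$. A careful inclusion--exclusion between these simplicial cliques and the $\geq|C|$ cliques covering $E(C)$, bounded by $|\mathcal{P}|\leq\alpha(G)+k$, produces at least $4k+1$ simplicial cliques $N_G[u]$ with $|N_G[u]\cap C|\geq 2$. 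Since distinct cliques in $\mathcal{P}$ pairwise share at most one vertex, the sets $N_G[u]\cap C$ form a linear hypergraph on $C$ with $\geq 4k+1$ edges of size $\geq 2$; combining this with the structural constraint on cliques through $v$ (which partition $C\setminus\{v\}$) yields the required contradiction.

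The main obstacle is the combinatorial counting in (a) and (b), where we carefully balance the de Brujin--Erd\H{o}s lower bound against the $2k$ non-simplicial cliques from \Cref{lem:nonsimpl}(ii). The threshold $6k+1$ is tuned precisely so that subtracting $2k$ still leaves at least $4k+1$ cliques of a tightly constrained form, which is essential for both contradictions; tracking the overlaps between pendant cliques, simplicial cliques arising from independent-set vertices, and cliques through $v$ requires delicate bookkeeping. Any smaller threshold would cause the counting to fail.
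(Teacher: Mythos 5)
Your construction of $\mathcal{K}$ rests entirely on claim (a) --- that every clique of size at least $6k+1$ occurring in a partition of size at most $\alpha(G)+k$ is a simplicial clique of $G$ --- and this claim is false. Take $k=1$ and let $G$ be a $K_7$ on vertices $c_1,\dots,c_7$ with a pendant neighbor $p_i$ attached to each $c_i$. Then $\alpha(G)=7$ (the pendants), and the family consisting of the $K_7$ together with the seven pendant edges $\{c_i,p_i\}$ is an edge clique partition of size $8=\alpha(G)+k$. The clique $K_7$ has size $7=6k+1$ and lies in this partition, yet it is not a simplicial clique: no $c_i$ is simplicial, since $N_G[c_i]$ contains $p_i$, which is nonadjacent to the other $c_j$. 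The only simplicial cliques of $G$ are the seven pendant edges, so your $\mathcal{K}$ would be empty, property (ii) fails for the partition above, and your algorithm returns an incorrect output on a yes-instance. One can also see exactly where your proof sketch of (a) dies on this example: you correctly derive $|\mathcal{P}|\ge |C|+1$ and build an independent set from $4k+2$ pendant generators plus one untouched vertex of $C$, but that set has size $4k+3\le\alpha(G)$ and the partition has size exactly $\alpha(G)+k$, so nothing overflows. The inequality $|\mathcal{P}|\ge|C|+1$ is perfectly compatible with $|\mathcal{P}|\le\alpha(G)+k$ whenever $\alpha(G)$ is large, which is the typical regime here.

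The missing idea is a mechanism for identifying large \emph{non-simplicial} cliques that are nonetheless forced into every solution. The paper's proof does this via critical cliques (maximal classes of true twins) in auxiliary graphs: a graph $H$ on $V(G)$ (and graphs $H_v$ on $N_G[v]$ for simplicial $v$) retaining only those edges not coverable by simplicial cliques. Any true-twin class of size at least $2k+1$ in $H$ must, by \Cref{prop:dBE} combined with the bound of $2k$ on non-simplicial cliques from \Cref{lem:nonsimpl}, lie inside a single clique of any solution, and its closed $H$-neighborhood is then shown to be exactly that clique; conversely, any partition clique of size $6k+1$ is shown to contain such a twin class after discarding at most $4k$ exceptional vertices. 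In the example above, the seven vertices $c_1,\dots,c_7$ are true twins in $H$, which is how the $K_7$ is recovered. Note also that the forced cliques obtained this way may have size well below $6k+1$, so the lemma's $\mathcal{K}$ is deliberately not just ``all large cliques'' and properties (i) and (ii) are asymmetric. Your sketch of (b) is additionally too vague to assess (the ``careful inclusion--exclusion'' is where all the work would lie), but the failure of (a) is already fatal to the approach.
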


\begin{proof}
To prove the lemma, we use critical cliques introduced by Lin, Jiang, and
Kearney~\cite{LinJK00}. A clique $C$ of a graph $G$ is said to be \emph{critical} if $C$ is an inclusion-wise maximal clique consisting of true twins, that is, $N_G[u]=N_G[v]$ for any $u,v\in C$. Note that each vertex belongs to a unique critical clique. It is known (see~\cite{LinJK00}) that all critical cliques can be found in linear time.
We extensively use the fact that for any critical clique $K$ in $G$, $K\cup \{x\}$ is a clique in $G$ for every $x\in N_G(K)$.

Let $G$ be a graph without isolated vertices and $k\geq 1$ be an integer. We initialize $\mathcal{K}:=\emptyset$. We consider cliques of two types---cliques containing simplicial vertices and cliques without simplicial vertices.

Let $v$ be a simplicial vertex of $G$. We consider the auxiliary graph $H_v$ with $V(H_v)=N_G[v]$ such that two vertices $x,y\in V(H_v)$ are adjacent if and only if there is no simplicial clique $C\neq N_G[v]$ with $x,y\in C$. Then we find all critical cliques of $H_v$ using the result of ~\cite{LinJK00}. If the critical clique $K$ containing $v$ is of size at least $2k+1$ then we set $C=N_G[v]$ 
and $\mathcal{K}:=\mathcal{K}\cup\{C\}$ unless $\mathcal{K}$ already contains $C$. Otherwise, for each critical clique $C$ of size at least $2k$ that does not contain $v$, we define $C=N_{H_v}[K]$ and set $\mathcal{K}:=\mathcal{K}\cup\{C\}$ unless $\mathcal{K}$ already contains $C$.
To argue that (i) and (ii) are fulfilled for the cliques containing $v$, we prove the following claim.

\begin{claim}\label{cl:simplcorr}
For any edge clique partition $\mathcal{C}$ of $G$ with $|\mathcal{C}|\leq \alpha(G)+k$ and any clique $C$ included in $\mathcal{K}$ for $v$, $C\in\mathcal{C}$. Furthermore, for any clique $C\in \mathcal{C}$ of size at least $6k+1$ containing $v$, $C\in\mathcal{K}$. 		
\end{claim}

\begin{claimproof}
Let $C$ be a clique added to $\mathcal{K}$ for $v$. 

Suppose first that $C=N_G[v]$ is the simplicial clique containing $v$ constructed for the critical clique $K$ containing $v$ is of size at least $6k+1$.
By \Cref{prop:dBE}, we have that 
there is a clique $C'\in\mathcal{C}$ such that $K\subseteq C'$. Otherwise, the edges of $G[K]$ should be covered by at least $2k+1$  non-simplicial cliques contradicting \Cref{lem:nonsimpl}. By the same arguments, for any $x\in N_{H_v}(K)$, the clique $K\cup \{x\}$ is also a subset of some clique of $\mathcal{C}$.
Because $k\geq 1$ and $|K|>1$, and no two cliques in $\mathcal{C}$ can share two or more vertices, $C'$ is the only clique containing $K$, as well as each $x\in N_{H_v}(K)$.
We have that $N_{H_v}[v]=N_{G}[v]$ since $v$ is simplicial and edges incident to $v$ cannot be contained in any simplicial clique other than $C=N_{G}[v]$.
We have that $N_{H_v}[K]=N_{H_v}[v]$ since $K$ is a critical clique.
We obtain that $C=N_G[v]=N_{H_v}[K]\subseteq C'$. To show that $C'=C$, notice that  $C$ is an inclusion maximal clique of $G$ (as well as any other simplicial clique).
Thus, we have that  $C\in\mathcal{C}$.  

Assume that $C=N_{H_v}[K]$ for a critical clique $K$ of size at least $2k$ such that $v\notin K$. Consider the clique $K'=K\cup\{v\}$ which has size at least $2k+1$.
Since edges of $K'$ belong to the only simplicial clique in $G$,  \Cref{prop:dBE} and \Cref{lem:nonsimpl} imply that there is $C'\in\mathcal{C}$ such that $K'\subseteq C'$ (otherwise we would have to cover edges of $K'$ with at least $2k+1$ non-simplicial cliques).
Since $|K|>1$, $C'$ is the only clique in $\mathcal{C}$ that contains $K$ and $K'$. 
Consequently, $C'$ should contain $K\cup\{x\}$ for each $x\in N_{H_v}(K)$, so $C=N_{H_v}[K]\subseteq C'$. We claim that $C'=C$. 
For the sake of contradiction, assume that there is $x\in C'\setminus C$. 
By the definition of $H_v$, we have that $x$ is a non-simplicial vertex of $G$, and for each $y\in K$, $xy\notin E(H_v)$ and there is a simplicial clique $K_y$ such that $x,y\in K_y$ with a simplicial vertex $z_y \in K_y\setminus\{x,y\}$, but $z_y\notin C'$.
Since $yy'\in E(H_v)$ for $y,y'\in K$, we have that $y,y'$ belong to distinct simplicial cliques, so $z_y\neq z_{y'}$ for distinct $y,y'\in K$, and $z_y$ is not adjacent with $z_{y'}$. 
Since $x\in C'$, we have that $K_y\notin \mathcal{C}$ for each $y\in K$. 
Then for each $y\in K$, $xz_y$ and $yz_y$ are covered by two distinct non-simplicial cliques.
This means that $\mathcal{C}$ contains at least $4k>2k$ non-simplicial cliques contradicting \Cref{lem:nonsimpl}.  This proves that $C'=C$ and completes the proof that 
 $C\in\mathcal{C}$ for each clique $C$ included in $\mathcal{K}$ for $v$. 

To prove the second part of the claim, assume that there is $C\in \mathcal{C}$ of size at least $6k+1$ such that $v\in C$. 
Consider a simplicial  vertex $u\notin V(H_v)$ such that $|N_G(u)\cap C|\geq 2$. Then the edges $ux$ for $x\in N_G(u)\cap C$ are covered by pairwise distinct non-simplicial cliques in $\mathcal{C}$. Also, if $u,u'\notin V(H_v)$ are nonadjacent simplicial vertices  then the edges $ux$ and $u'y$ for $x,y\in C$ are covered by distinct cliques. This implies that 
$X=\bigcup (N_G(u)\cap C)$, where the union is taken over all simplicial vertices $u\notin V(H_v)$ such that $|N_G(u)\cap C|\geq 2$, is of size at most $2k$ by \Cref{lem:nonsimpl}. 
Let $Y=C\setminus X$. We have that $Y$ is a clique of $H_v$ of size at least $4k+1$ containing $v$. Consider the edges $xy\in E(H_v)$ such that $y\in Y$ and $x\in V(H_v)\setminus Y$. Because these edges are not covered by simplicial cliques and one endpoint of each edge is in the clique $Y\subset C\in\mathcal{C}$, these edges are covered by pairwise distinct non-simplicial cliques in $\mathcal{C}$. By \Cref{lem:nonsimpl}, we obtain that $Z=\{y\in Y\mid N_{H_v}(y)\setminus Y\neq \emptyset\}$ is of size at most $2k$.
Let $K=Y\setminus Z$. We have that $K$ is 
a clique of size at least $2k+1$ containing true twins in $H_v$. Then there is a critical clique $K'\supseteq K$. 
In particular, $K'$ is either a critical clique of size at least $2k+1$ containing $v$ or a critical clique of size at least $2k$ that does not contain $v$. In both cases, our algorithm includes $C'=N_{H_v}[K']$ in $\mathcal{K}$. 
By the first part of the claim, $C'\in \mathcal{C}$.
Since $|C\cap C'|\ge |K|>1$, $C$ and $C'$ coincide.
This concludes the proof. 
\end{claimproof}

We run the described procedure for all simplicial vertices of $G$. By \Cref{cl:simplcorr}, the constructed $\mathcal{K}$ contains cliques with simplicial vertices satisfying (i) and (ii). In the next step, we deal with cliques without simplicial vertices.

We construct the auxiliary graph $H$ with $V(H)=V(G)$ such that two distinct vertices $x,y\in V(H)$ are adjacent if and only if there is no simplicial clique $C$ with $x,y\in C$. Then we find all critical cliques of $H$ using the algorithm from~\cite{LinJK00}. For each critical clique $K$ of size at least $2k+1$, we set $C=N_H[K]$. If $C$ is not a clique of $G$ then we stop and return that $(G,k)$ is a no-instance. Otherwise, we set $\mathcal{K}:=\mathcal{K}\cup\{C\}$ unless $C$ is already in $\mathcal{K}$. 
To argue correctness, we show  the following claims.

\begin{claim}\label{cl:nonsimplcorr}
For any edge clique partition $\mathcal{C}$ of $G$ with $|\mathcal{C}|\leq \alpha(G)+k$ and any clique $C$ without simplicial vertices included in $\mathcal{K}$, $C\in\mathcal{C}$. Furthermore, the algorithm correctly reports a no-instance. 
\end{claim}

\begin{claimproof}
Let $K$ be a critical clique of $H$ of size at least $2k+1$. By \Cref{prop:dBE} and \Cref{lem:nonsimpl}, there is $C'\in \mathcal{C}$ such that $K\subseteq C'$ as the edges of $H$ cannot be covered by simplicial cliques. Furthermore, for any  $x\in N_H(K)$, $K\cup\{x\}\subseteq C'$. Thus, if there are two vertices $x,y\in N_H(K)$ that are not adjacent in $G$, then an edge clique partition of size at most $\alpha(G)+k$ does not exist. In particular, this means that  the algorithm correctly reports a no-instance. Otherwise, if $C=N_H[K]$ is a clique of $G$, we have that there is $C'\in \mathcal{C}$ such that $C=N_H[K]\subseteq C'$. 

We claim that $C'=C$. The proof is by contradiction. Assume that there is $x\in C'\setminus C$. For each $y\in K$, $xy\notin E(H)$. Thus, for each $xy$ of this type, $x$ and $y$ are in a same simplicial clique $K_y$. Because for any edge $yy'\in E(H)$, $y$ and $y'$ cannot belong to a same simplicial clique in $G$, $K_y$ and $K_{y'}$ are distinct simplicial cliques for distinct $y,y'\in K$.
For every $y\in K$, $K_y$ contains a simplicial vertex $z_y$ distinct from $x,y$, and $z_y z_{y'}\notin E(G)$ for distinct $y,y'\in K$.
Edges of form $yz_y$ for $y\in K$ cannot belong to a same clique in $G$, and, since $K_y\notin \mathcal{C}$, they should be covered by non-simplicial cliques in $\mathcal{C}$.
We obtain that $\mathcal{C}$ contains at least $2k+1$ non-simplicial cliques contradicting \Cref{lem:nonsimpl}. This proves that $C'=C$ and concludes the proof of the claim.
\end{claimproof}
 
\begin{claim}\label{cl:nonsimplcorrbig}
For any edge clique partition $\mathcal{C}$ of $G$ with $|\mathcal{C}|\leq \alpha(G)+k$ and any clique $C\in \mathcal{C}$ of size at least $6k+1$ that does not contain a simplicial vertex, $C\in\mathcal{K}$. 	
\end{claim}

\begin{claimproof}
Let $C\in\mathcal{C}$ be a clique of size at least $6k+1$ that does not contain any simplicial vertex. 
Consider a simlicial  vertex $v$ such that $|N_G(v)\cap C|\geq 2$. Then the edges $vx$ for $x\in N_G(v)\cap C$ are covered by pairwise distinct non-simplicial cliques in $\mathcal{C}$. Also, if $v$ and $v'$ are nonadjacent simplicial vertices  then the edges $vx$ and $v'y$ for $x,y\in C$ are covered by distinct cliques. Therefore,
$X=\bigcup (N_G(v)\cap C)$, where the union is taken over all simplicial vertices $v$ such that $|N_G(v)\cap C|\geq 2$, is of size at most $2k$ by \Cref{lem:nonsimpl}.
Let $Y=C\setminus X$. We have that $Y$ is a clique of $H$ of size at least $4k+1$. Consider edges $xy\in E(H)$ such that $y\in Y$ and $x\in V(H)\setminus Y$. Because these edges are not covered by simplicial cliques and one endpoint of each edge is in the clique $Y$, these edges are covered by distinct non-simplicial cliques in $\mathcal{C}$. By \Cref{lem:nonsimpl}, we obtain that $Z=\{y\in Y\mid N_{H}(y)\setminus Y\neq \emptyset\}$ is of size at most $2k$. Let $K=Y\setminus Z$. We have that $K$ is a clique of $H$ of size at least $2k+1$ consisting of true twins of $H$. Therefore, there is a critical clique $K'\supseteq K$ of $H$. 
Then our algorithm should include $C= N_H[K']$ in $\mathcal{K}$. This concludes the proof. 
\end{claimproof}

Summarizing, we obtain that our algorithm either constructs a set $\mathcal{K}$ of cliques of $G$ satisfying (i) and (ii)
or correctly concludes that $(G,k)$ is a no-instance. To satisfy (iii), notice that if there are two distinct cliques $C_1,C_2\in\mathcal{K}$ with $|C_1\cap C_2|\geq 2$ then $(G,k)$ is a no-instance because of (i). Then we add a step to our algorithm where we check whether $|C_1\cap C_2|\geq 2$  for two cliques $C_1,C_2\in\mathcal{K}$, and report a no-instance if we find such pair of distinct cliques. This concludes the description of the algorithm and its correctness proof.

To evaluate the running time, note that all simplicial vertices can be listed in polynomial time using, for example, the algorithm of Kloks, Kratsch, and M{\"{u}}ller~\cite{KloksKM00}.
Then we can find simplicial cliques and construct the auxiliary graphs $H_v$ for simplicial vertices $v$ and $H$ in polynomial time. Taking into account that the construction of critical cliques can be done in linear time~\cite{LinJK00}, we obtain that the overall running time is polynomial. This concludes the proof.
\end{proof}

Now we are ready to show  \Cref{thm:ecpis-fpt} which we restate. 

\ECPISFPT*

\begin{proof}
Let $(G,k)$ be an instance of   \probECPISshort. As pointed out in~\Cref{obs:kzero}, the problem is trivial if $k=0$. Hence, we assume that $k\geq 1$. 
First, we call the algorithm from \Cref{lem:isecp} for $(G,k)$. The algorithm either computes $\alpha(G)$ or correctly reposts that $(G,k)$ is a no-instance of  \probECPISshort.
If we obtain a no-instance, we report it and stop. We assume from now on that this is not the case, and the value $\alpha(G)$ is known to us.

In the following step, we construct a partial solution $\mathcal{K}$ using the algorithm from \Cref{lem:bigcliques}.  This algorithm works in polynomial time and 
either outputs a set $\mathcal{K}$ of cliques of $G$ such that
\begin{itemize}
\item[(i)] each clique $C\in \mathcal{K}$ is included in any edge clique partition of $G$ of size at most $\alpha(G)+k$,
\item[(ii)] for every edge clique partition $\mathcal{C}$ of $G$ of size at most $\alpha(G)+k$ and any clique $C\in\mathcal{C}$ of size at least $6k+1$, $C\in \mathcal{K}$, 
\item[(iii)] for any two distinct cliques $C_1,C_2\in\mathcal{K}$, $|C_1\cap C_2|\leq 1$,
\end{itemize}
or correctly concludes that $(G,k)$ is a no-instance. If the algorithm returns that $(G,k)$ is a no-instance then we return the answer and stop. We also conclude that $(G,k)$ is a no-instance if $|\mathcal{K}|>\alpha(G)+k$. From now on, we assume that $\mathcal{K}$ is a set of  cliques that are included in any edge clique partition of size at most $\alpha(G)+k$ such that any two distinct cliques of $\mathcal{K}$ do not cover the same edge.

Next, we list all distinct simplicial cliques of $G$ and denote by $\mathcal{S}$ the set of simplicial cliques. We say that a clique $S\in\mathcal{S}$ is \emph{broken} in a (potential) solution to $(G,k)$, if $S$ is not included in the edge clique partition. Notice that if $S$ is a broken simplicial clique, then at least two distinct non-simplicial cliques should be used to cover the edges incident to the corresponding simplicial vertex. By \Cref{lem:nonsimpl}, any edge clique partition of size at most $\alpha(G)+k$ can include at most $2k$ non-simplicial cliques. Thus, for any solution, the number of broken simplicial cliques should not exceed $k$. Our algorithm identifies the set $\mathcal{B}$ of broken cliques in a solution. 

We initialize $\mathcal{B}$ using the following branching algorithm. Consider the auxiliary graph $H$ whose nodes are simplicial cliques. Two distinct nodes $S_1,S_2\in\mathcal{S}$ are adjacent in $H$ if and only if $|S_1\cap S_2|\geq 2$. Observe that if $|S_1\cap S_2|\geq 2$ then either $S_1$ or $S_2$ should be broken in a solution. Thus, the set of broken cliques in any solution should be a vertex cover of $H$ of size at most $k$. We can decide in $\Oh(2^k\cdot n)$ time whether $H$ has a vertex cover of size at most~$k$ and, moreover, we can list all inclusion minimal vertex covers using the standard recursive branching algorithm for \textsc{Vertex Cover} (see~\cite{CyganFKLMPPS15}). If $H$ has no vertex cover of size at most~$k$, we report that $(G,k)$ is a no-instance of  \probECPISshort and stop. Assume that this is not the case. Then we obtain the list $\mathcal{L}$ of all inclusion minimal vertex covers of $H$ of size at most $k$. Notice that $|\mathcal{L}|\leq 2^k$. Then if $(G,k)$ admits an edge clique partition of size at most $\alpha(G)+k$, there is $L\in \mathcal{L}$ such that 
$L\subseteq \mathcal{B}$ --- the set of broken cliques in $\mathcal{C}$.  To initialize $\mathcal{B}$, we branch on all possible choices of $L\in \mathcal{L}$ and set $\mathcal{B}:=L$.
If we find a solution for some choice of $L$, we output it and stop. Otherwise, if we fail to find an edge clique partition of size at most $\alpha(G)+k$ for any initial selection of $\mathcal{B}$, we report that $(G,k)$ is a no-instance of  \probECPISshort.

From now on, we assume that the initial choice of $\mathcal{B}$ is fixed from one of at most $2^k$ choices given by $\mathcal{L}$. 
We update $\mathcal{B}$ by adding cliques that get broken because of the cliques in the partial solution. Notice that if $S\notin \mathcal{K}$ is a simplicial clique covering the same edge as one of the cliques in the partial solution, then $S$ has to be broken. This implies the correctness of the following step: 
\begin{itemize}
\item For every simplicial clique $S\notin\mathcal{B}\cup\mathcal{K}$ such that $|S\cap K|\geq 2$ for some $K\in\mathcal{K}$, set $\mathcal{B}:=\mathcal{B}\cup \{S\}$. \
\item If $|\mathcal{B}|>k+1$ then stop and discard the current choice of $L$. 
\end{itemize}
Assume that the algorithm does not stop in this step. We call the set $\mathcal{B}$ obtained in this step a \emph{base set of broken cliques}.

We say that a simplicial clique $S$ is \emph{free} if $S\notin \mathcal{B}\cup\mathcal{K}$, and we use $\mathcal{F}$ to denote the set of free cliques.
\Cref{lem:bigcliques} 
and the construction of $\mathcal{B}$ imply the following  property.

\begin{claim}\label{cl:free}
For any two distinct cliques $S_1,S_2\in \mathcal{F}\cup\mathcal{K}$, $|S_1\cap S_2|\leq 1$.
\end{claim}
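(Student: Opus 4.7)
The plan is to verify the claim by a straightforward case analysis on which of $\mathcal{F}$ and $\mathcal{K}$ each of the two cliques $S_1, S_2$ belongs to. Since $\mathcal{F}$ and $\mathcal{K}$ are disjoint by the definition of a free clique (a free simplicial clique is explicitly required to lie outside $\mathcal{K}$), there are exactly three cases to consider, and in each case the conclusion $|S_1\cap S_2|\leq 1$ follows from a property that the algorithm has already enforced.

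First, if both $S_1, S_2 \in \mathcal{K}$, then the conclusion is immediate from property (iii) of the set $\mathcal{K}$ returned by \Cref{lem:bigcliques}, which was preserved when we committed to a particular choice of $\mathcal{K}$ at the outset of the algorithm.

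Second, suppose both $S_1, S_2 \in \mathcal{F}$. Then both are simplicial cliques of $G$, and neither is broken, i.e.\ neither belongs to $\mathcal{B}$. Assume for contradiction that $|S_1\cap S_2|\geq 2$. Then $S_1$ and $S_2$ are adjacent in the auxiliary graph $H$ built on $\mathcal{S}$, so the edge $S_1S_2$ must be covered by any vertex cover of $H$; in particular, the initial set $L\in\mathcal{L}$ of broken simplicial cliques we branched on must contain at least one of $S_1, S_2$. But $\mathcal{B} \supseteq L$, contradicting $S_1, S_2 \notin \mathcal{B}$.

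Third, suppose $S_1 \in \mathcal{F}$ and $S_2 \in \mathcal{K}$ (or vice versa). Then $S_1$ is a simplicial clique with $S_1 \notin \mathcal{B}\cup \mathcal{K}$. If $|S_1\cap S_2|\geq 2$, then the explicit update step in the algorithm---``for every simplicial clique $S\notin\mathcal{B}\cup\mathcal{K}$ such that $|S\cap K|\geq 2$ for some $K\in\mathcal{K}$, set $\mathcal{B}:=\mathcal{B}\cup\{S\}$''---would have added $S_1$ to $\mathcal{B}$, contradicting $S_1\in\mathcal{F}$.

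I do not anticipate any real obstacle here: the claim is essentially a bookkeeping statement asserting that the three mechanisms we already use (property (iii) of $\mathcal{K}$, the vertex-cover branching on $H$, and the saturation step that pushes simplicial cliques into $\mathcal{B}$ whenever they clash with $\mathcal{K}$) together eliminate every pair of cliques in $\mathcal{F}\cup\mathcal{K}$ that could share two vertices. The only point that might require a half-sentence of justification is that $\mathcal{F}$ and $\mathcal{K}$ are disjoint, which is immediate from the definition of free cliques.
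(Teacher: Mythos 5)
Your proof is correct and follows exactly the route the paper intends: the paper states this claim without a written proof, noting only that it follows from \Cref{lem:bigcliques} and the construction of $\mathcal{B}$, and your three-case analysis (property~(iii) of $\mathcal{K}$, the vertex-cover branching on the auxiliary graph $H$, and the saturation step that moves clashing simplicial cliques into $\mathcal{B}$) is precisely the intended justification spelled out.
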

 
This means that no edge of $G$ is covered by two cliques of $\mathcal{F}\cup\mathcal{K}$. In the final step of our algorithm, we try to extend $\mathcal{F}\cup\mathcal{K}$ to an edge clique partition of $G$ of size at most $\alpha(G)+k$. If we fail, then we recurse by including one of the free cliques to the set of broken cliques. Formally, we use the subroutine 
\textsc{Extend$(\mathcal{F},\mathcal{B})$} in \Cref{alg:extend} which either finds a solution  extending $\mathcal{K}$, such that all cliques in $\mathcal{B}$ are broken,
or discards the current choice of $\mathcal{F}$ and $\mathcal{B}$.

\medskip
\begin{algorithm}[h]
\caption{\textsc{Extend$(\mathcal{F},\mathcal{B})$}} \label{alg:extend}
\KwIn{$\mathcal{F}$, $\mathcal{B}$}
\KwResult{ An edge clique cover of $G$ or {\sf No}}
\Begin
{ 
\lIf{$|\mathcal{B}|>k$}{\Return {\sf No} and quit}\label{ln:i}
construct the graph $Q$ with the edge set $R=\{uv\in E(G)\mid\text{there is no clique }K\in\mathcal{F}\cup\mathcal{K}\text{ s.t. }u,v\in K\}$,
whose vertex set is the set of the endpoints of the edges of $R$; note that $R$ is the set of edges which are not covered by $\mathcal{F}\cup\mathcal{K}$\label{ln:ii}\;
\lIf{$|V(Q)|>12k^2$}{\Return {\sf No} and quit}\label{ln:iii}
use the algorithm from \Cref{prop:ecp} to find an edge clique partition $\mathcal{C}$ of $Q$ of minimum size $k'\leq 2k$ if such a partition exists\label{ln:iv}\;
\If{the algorithm returns $\mathcal{C}$ \label{ln:va}}
{\lIf{$|\mathcal{C}^*|\leq\alpha(G)+k$ for $\mathcal{C}^*=\mathcal{C}\cup\mathcal{F}\cup\mathcal{K}$}
{\Return $\mathcal{C}^*$ and quit}\label{ln:v}}\label{ln:vb}
\ForEach{$uv\in E(G)$ such that $u,v\in V(Q)$ and there is $F\in\mathcal{F}$ with $u,v\in F$\label{ln:vi}}
{
	\If{a call to \textsc{Extend$(\mathcal{F}\setminus\{F\},\mathcal{B}\cup\{F\})$}\label{ln:via} returns a solution $\mathcal{C}^*$}
	{
		\Return{$\mathcal{C}^*$} and quit\label{ln:vii}\;
	}

}
{\Return {\sf No}}\label{ln:viia}
}
\end{algorithm}
\medskip

 The description of the algorithm is finished.
 To argue correctness, observe that in each recursive call of the algorithm, we increase the size of $\mathcal{B}$ and stop if $|\mathcal{B}|>k$ in line~\ref{ln:i}. This means that \textsc{Extend$(\mathcal{F},\mathcal{B})$} is finite. If this subroutine outputs  a set of cliques $\mathcal{C}^*$ then it is a valid edge partition of $G$.
 First, because of \Cref{cl:free}, no two cliques intersect.
 Second, the cover $\mathcal{C}$, constructed by the algorithm from  \Cref{prop:ecp} in line~\ref{ln:iv}, outputs an edge clique partition for the edges which are not covered by the cliques of $\mathcal{F}\cup\mathcal{K}$. 
Since we verify that $|\mathcal{C}^*|\leq\alpha(G)+k$ in line~\ref{ln:v}, if the algorithm outputs some $\mathcal{C}^*$ then it is a valid solution to $(G,k)$.
Thus, we have to show that if $(G,k)$ is a yes-instance of  \probECPISshort, then the algorithm necessarily finds a solution. For this, we show the following claim.

\begin{claim}\label{cl:correcp}
Suppose that $G$ has an edge clique partition of size at most $\alpha(G)+k$ such that all simplicial cliques in $\mathcal{B}$ are broken. Suppose also that $\mathcal{B}$ contains the base set of broken cliques, and let 
$\mathcal{F}$ be the set of simplicial cliques $S$ such that $S\notin \mathcal{B}\cup\mathcal{K}$.
Then \textsc{Extend$(\mathcal{F},\mathcal{B})$} outputs a solution to $(G,k)$. 
\end{claim}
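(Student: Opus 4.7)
The plan is induction on $|\mathcal{F}|$, with the hypothesized witness $\mathcal{C}^*$ fixed (and, where needed, chosen to minimize $|\mathcal{C}^*|$). Let $\mathcal{F}_b := \{F \in \mathcal{F} : F \notin \mathcal{C}^*\}$ denote the set of free cliques broken in $\mathcal{C}^*$.

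\textbf{Base case: $\mathcal{F}_b = \emptyset$.} Since $\mathcal{F} \cup \mathcal{K} \subseteq \mathcal{C}^*$ and its members are pairwise edge-disjoint by Claim~\ref{cl:free}, the cliques of $\mathcal{C}^* \setminus (\mathcal{F} \cup \mathcal{K})$ form an edge clique partition of $Q$. I check the early-exit tests. For line~\ref{ln:i}, I pick a simplicial representative $v_S \in S$ per $S \in \mathcal{B}$; these are distinct (distinct simplicial cliques have distinct simplicial vertices) and pairwise non-adjacent (if two simplicial vertices from distinct simplicial cliques were adjacent, their closed neighborhoods would coincide, a contradiction). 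Each $v_S$ lies in at least two non-simplicial cliques of $\mathcal{C}^*$ since $N_G[v_S]=S$ is broken, and double-counting against Lemma~\ref{lem:nonsimpl}(ii) yields $2|\mathcal{B}| \le 2k$. For line~\ref{ln:iii}, every simplicial clique of $\mathcal{C}^*$ lies in $\mathcal{F} \cup \mathcal{K}$, so each clique of $\mathcal{C}^* \setminus (\mathcal{F} \cup \mathcal{K})$ is non-simplicial, there are at most $2k$ of them by Lemma~\ref{lem:nonsimpl}(ii), and each has size at most $6k$ by property~(ii) of $\mathcal{K}$ (Lemma~\ref{lem:bigcliques}), giving $|V(Q)| \le 12k^2$. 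Finally, Proposition~\ref{prop:ecp} returns $\mathcal{C}$ with $|\mathcal{C}| \le |\mathcal{C}^* \setminus (\mathcal{F} \cup \mathcal{K})|$, so $|\mathcal{C} \cup \mathcal{F} \cup \mathcal{K}| \le |\mathcal{C}^*| \le \alpha(G)+k$, and line~\ref{ln:v} returns a valid partition (validity combines Claim~\ref{cl:free} with the disjointness of $R$ from $E(G)\setminus R$).

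\textbf{Inductive step: $\mathcal{F}_b \neq \emptyset$.} Among the witnesses satisfying the claim's hypotheses, choose $\mathcal{C}^*$ of minimum cardinality and fix $F \in \mathcal{F}_b$. By Proposition~\ref{prop:dBE}, at least $|F|$ cliques of $\mathcal{C}^*$---all in $\mathcal{C}^* \setminus (\mathcal{F} \cup \mathcal{K})$---cover $E(F)$. I claim $|F \cap V(Q)| \ge 2$: otherwise at most one vertex of $F$ has an external neighbor not already covered by some clique of $\mathcal{F} \cup \mathcal{K}$ (which, by Claim~\ref{cl:free}, intersects $F$ in exactly that vertex), and then a local exchange---reinserting $F$ into $\mathcal{C}^*$, discarding the cliques that covered $E(F)$, and re-covering their external edges through the $\mathcal{F} \cup \mathcal{K}$-cliques witnessing the coverage of each such external edge---strictly decreases $|\mathcal{C}^*|$, contradicting minimality. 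Hence some edge $uv \in E(F)$ has $u, v \in V(Q)$; by Claim~\ref{cl:free} the unique free clique containing $uv$ is $F$. When the loop of line~\ref{ln:vi} reaches $uv$, it invokes $\textsc{Extend}(\mathcal{F}\setminus\{F\}, \mathcal{B}\cup\{F\})$. The same $\mathcal{C}^*$ witnesses the hypotheses of the recursive call ($\mathcal{B}\cup\{F\}$ is fully broken in $\mathcal{C}^*$ and still contains the base set; $\mathcal{F}\setminus\{F\}$ is exactly the set of simplicial cliques outside $(\mathcal{B}\cup\{F\})\cup\mathcal{K}$), and since $|\mathcal{F}|$ strictly decreases, the inductive hypothesis furnishes a solution.

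\textbf{Main obstacle.} The delicate step is the local-exchange argument in the inductive case, translating ``at most one vertex of $F$ in $V(Q)$'' into a strict improvement of the witness $\mathcal{C}^*$ and thereby placing an edge of $F$ within the loop's iteration range. The remaining ingredients---$|\mathcal{B}| \le k$, $|V(Q)| \le 12k^2$, and the final size check---are immediate from Lemma~\ref{lem:nonsimpl}, Lemma~\ref{lem:bigcliques} property~(ii), Claim~\ref{cl:free}, and Proposition~\ref{prop:ecp}.
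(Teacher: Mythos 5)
Your overall skeleton (induction with base case ``no free clique is broken'', early-exit checks via \Cref{lem:nonsimpl} and property~(ii) of \Cref{lem:bigcliques}) matches the paper, and your base case is essentially the paper's argument. The inductive step, however, rests on a claim that the paper neither makes nor needs, and your argument for it has a genuine gap. You assert that for a \emph{minimum-cardinality} witness $\mathcal{C}^*$ and \emph{every} $F\in\mathcal{F}_b$ one has $|F\cap V(Q)|\ge 2$, proved by a local exchange. But the exchange as described does not produce a valid smaller witness. First, a discarded clique $C$ with $|C\cap F|\ge 2$ may contain vertices outside $F$, and the edges of $C$ with \emph{both} endpoints outside $F$ (as well as edges from the one allowed vertex of $F\cap V(Q)$ to $C\setminus F$) need not be covered by any clique of $\mathcal{F}\cup\mathcal{K}$, so there is no ``witnessing clique'' to reinsert for them. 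Second, for an edge $ab$ with $a\in F\setminus V(Q)$ and $b\in C\setminus F$, the witnessing clique $K_{ab}\in\mathcal{F}\cup\mathcal{K}$ cannot lie in $\mathcal{C}^*$ (it would have to equal $C$, yet $|K_{ab}\cap F|\le 1<|C\cap F|$), so $K_{ab}$ is itself a \emph{broken} free clique; inserting it wholesale can share two vertices with retained cliques of $\mathcal{C}^*\setminus\mathcal{D}$, destroying the partition property and forcing a cascade of further exchanges. Third, even ignoring validity, the number of inserted cliques is not bounded against $|\mathcal{D}|\ge|F|$, so strict decrease of $|\mathcal{C}^*|$ is not established. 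Without this step you cannot place an edge of $F$ inside the iteration range of the loop in line~\ref{ln:vi}, and the induction stalls.

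The missing idea is that the existence of a good branch should be derived from the \emph{failure of line~\ref{ln:v}}, which your argument never uses. The paper argues globally: if every broken free clique $F$ satisfies $|F\cap V(Q)|\le 1$, then the simplicial vertices of the cliques in $\mathcal{F}$ give $|\mathcal{F}|$ pairwise-distinct cliques of $\mathcal{C}'$ covering no edge of $Q$, while the cliques of $\mathcal{C}'$ that do cover edges of $Q$ restrict to an edge clique partition of $Q$; hence $|\mathcal{C}'|\ge|\mathcal{K}|+|\mathcal{F}|+\ecp(Q)$, so the candidate $\mathcal{C}\cup\mathcal{F}\cup\mathcal{K}$ of line~\ref{ln:v} has size at most $|\mathcal{C}'|\le\alpha(G)+k$ and the algorithm would already have returned there. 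Thus, conditioned on reaching the loop, \emph{some} broken free clique meets $V(Q)$ in two vertices, which is all the recursion needs. Your unconditional, per-clique version of this statement is stronger and, absent a correct exchange argument, unsupported; I see no reason it should even be true in general.
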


\begin{claimproof}
The proof is by induction on the number of broken cliques in $\mathcal{F}$.
 
The base case is when $G$ has an edge clique partition $\mathcal{C}'$ of size at most $\alpha(G)+k$ such that no clique of $\mathcal{F}$ is broken with respect to $\mathcal{C}'$.
Then $\mathcal{F}\cup \mathcal{K}\subseteq \mathcal{C}'$, and $\mathcal{B}$ contains all simplicial cliques that are broken with respect to $\mathcal{C}'$.
By \Cref{lem:nonsimpl} $|\mathcal{B}|\leq k$, and the algorithm does not stop in line~\ref{ln:i}. 
The graph $Q$ constructed in line~\ref{ln:ii} is the graph wtith edge set containing all edges that are not covered by the cliques of $\mathcal{F}\cup \mathcal{K}$. 
By \Cref{lem:bigcliques}, $\mathcal{C}'\setminus \mathcal{K}$ can only contains cliques of size at most $6k$.
This means that the cliques covering the edges of $Q$ are cliques in $\mathcal{C}'\setminus(\mathcal{K}\cup \mathcal{F})$ of size at most $6k$. Because these cliques are not simplicial, there are at most $2k$ such cliques in the partition by \Cref{lem:nonsimpl}. Therefore, $|V(Q)|\leq 12k^2$ and the algorithm does not stop in line~\ref{ln:iii}. Then the algorithm  from \Cref{prop:ecp} should find an edge clique partition of $Q$ of minimum size $k'\leq 2k$ such that $\mathcal{C}^*=\mathcal{C}\cup\mathcal{F}\cup\mathcal{K}$ is an edge clique partition of $G$ of size at most $\alpha(G)+k$. This concludes the proof for the base case.

Suppose that for any edge clique partition $\mathcal{C}'$ of size at most $\alpha(G)+k$ where all cliques of $\mathcal{B}$ are broken, at least one clique of $\mathcal{F}$ also gets broken.
By the same arguments as in the base case, the algorithm does not stop in line~\ref{ln:i}.
Since every vertex of $Q$ belongs to a clique in $\mathcal{C}'\setminus (\mathcal{K}\cup \mathcal{F})$, the size of $Q$ is bounded similarly to the base case and the algorithm does not stop in line~\ref{ln:iii} as well.
However, the algorithm from  line~\ref{prop:ecp} would either fail to find any edge clique partition of $Q$ of size at most $2k$ or would output an edge clique partition $\mathcal{C}$ that would be rejected by the algorithm in line~\ref{ln:v}. 

Suppose that the algorithm has reached line~\ref{ln:vb}.
We argue that at least one of the recursive branches of the algorithm in line~\ref{ln:vi} leads to the solution $\mathcal{C}'$.
For the sake of contradiction, assume that for each $F\in \mathcal{F}\setminus \mathcal{C}'$, we have that $|F\cap V(Q)|\le 1$.
To give a lower bound for $|\mathcal{C}'|$, note that $\mathcal{K}\subset \mathcal{C}'$ by definition of $\mathcal{K}$.
Moreover, each $F_i\in\mathcal{F}$ contains a simplicial vertex $v_i$.
These vertices form an independent set of size $|\mathcal{F}|$, and they should belong to pairwise-distinct cliques in $\mathcal{C}'$.
Hence, there are at least $|\mathcal{F}|$ cliques in $\mathcal{C}'$, and each of these cliques is a subset to some $F_i\in\mathcal{F}$.
Consequently, $\mathcal{C}'\setminus\mathcal{K}$ contains at least $|\mathcal{F}|$ cliques that do not cover edges of $Q$.
Let $\mathcal{C}''\subset \mathcal{C}'$ be a set of cliques that cover at least one edge of $Q$.
Transform each $C\in\mathcal{C}''$ by replacing it with $C\cap V(Q)$.
Note that $C\cap V(Q)$ is a clique in $Q$, since no edge in $G[V(Q)]$ is covered by $\mathcal{F}$.
Hence, the transformation of $\mathcal{C}''$ yields an edge clique partitioning of $Q$.
Consequently, $|\mathcal{C}'|\ge |\mathcal{K}|+|\mathcal{F}|+|\mathcal{C}''|\ge |\mathcal{K}|+|\mathcal{F}|+\ecp(Q)$.
But then $\ecp(Q)\le 2k$ and the algorithm should've returned the solution with $|\mathcal{C}^*|\le |\mathcal{C}'|$ in line~\ref{ln:v}. 

The contradiction shows that, because $(G,k)$ is a yes-instance, there is $F\in\mathcal{F}$ with $u,v\in F$ for $uv\notin E(Q)$, and the algorithm would consider the branch for $uv$ in line~\ref{ln:vi}. By the inductive assumption, the algorithm finds a solution for this branch and returns it in line~\ref{ln:vii}. This concludes the proof.
\end{claimproof}  

To complete the correctness proof, observe that if $(G,k)$ is a yes-instance of \probECPISshort then there is $L\in \mathcal{L}$ such that every clique of $L$ is broken in a solution. Then for the base set of broken cliques $\mathcal{B}$  constructed for $L$, each clique of $\mathcal{B}$ should be broken. By \Cref{cl:correcp},  \textsc{Extend$(\mathcal{F},\mathcal{B})$} called for this $\mathcal{B}$ and the corresponding set of free cliques $\mathcal{F}$ should output an edge clique partition of size at most $\alpha(G)+k$. This completes the correctness proof. 

We conclude the proof with a claim explaining the running time bound.
\begin{claim}
	The algorithm runs in $2^{\Oh(k^{3/2}\log k)}\cdot n^{\Oh(1)}$ time.
\end{claim}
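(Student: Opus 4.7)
\begin{claimproof}
The plan is to bound the running time of each phase of the algorithm separately and multiply the contributions. The preliminary phase---computing $\alpha(G)$ via \Cref{lem:isecp}, constructing $\mathcal{K}$ via \Cref{lem:bigcliques}, enumerating simplicial cliques, building the auxiliary graph $H$ on simplicial cliques, and listing all inclusion-minimal vertex covers $L\in\mathcal{L}$ of $H$ of size at most $k$ by the standard vertex cover branching---contributes at most $2^{\Oh(k^{3/2}\log k)}\cdot n^{\Oh(1)}$, dominated by the call to \Cref{lem:isecp}, and produces at most $|\mathcal{L}|\le 2^k$ candidate initial sets of broken cliques.

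The substantive step is to bound a single top-level invocation of \textsc{Extend$(\mathcal{F},\mathcal{B})$}. Two structural observations suffice. First, each recursive call in line~\ref{ln:via} strictly increases $|\mathcal{B}|$ by adding one free clique, and line~\ref{ln:i} halts as soon as $|\mathcal{B}|>k$, so the recursion tree has depth at most $k+1$. Second, for any call that reaches the branching in line~\ref{ln:vi}, the guard in line~\ref{ln:iii} guarantees $|V(Q)|\le 12k^2$, so the loop iterates over at most $\binom{12k^2}{2}=\Oh(k^4)$ edges, which upper bounds the branching factor. Combining these, the recursion tree has at most $(c\,k^4)^{k+1}=2^{\Oh(k\log k)}$ nodes. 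At each node, the dominant cost is the invocation of the \probECPshort\ algorithm of \Cref{prop:ecp} on the graph $Q$ with target size $2k$, which by the proposition runs in $2^{\Oh(k^{3/2}\log k)}\cdot n^{\Oh(1)}$ time; all other operations per node (building $Q$, checking $|\mathcal{C}^*|\le\alpha(G)+k$, etc.) take polynomial time. Multiplying, each top-level call costs $2^{\Oh(k\log k)}\cdot 2^{\Oh(k^{3/2}\log k)}\cdot n^{\Oh(1)}=2^{\Oh(k^{3/2}\log k)}\cdot n^{\Oh(1)}$.

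Finally, \textsc{Extend} is invoked at most $|\mathcal{L}|\le 2^k$ times, once per guess $L\in\mathcal{L}$; the extra factor of $2^k$ is absorbed into $2^{\Oh(k^{3/2}\log k)}$. The total running time is therefore $2^{\Oh(k^{3/2}\log k)}\cdot n^{\Oh(1)}$, as claimed. The main point that requires care is ensuring that the branching factor of \textsc{Extend} is polynomial in $k$ rather than in $n$: this is exactly the role of the cutoff $|V(Q)|\le 12k^2$ enforced in line~\ref{ln:iii}, which is what allows us to sustain $k+1$ levels of branching while remaining within the target runtime.
\end{claimproof}
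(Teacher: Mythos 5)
Your proof is correct and follows essentially the same route as the paper's: both bound the recursion tree of \textsc{Extend} by depth at most $k$ with branching factor $\binom{12k^2}{2}$ (yielding a $k^{\Oh(k)}$ factor), charge each node with the $2^{\Oh(k^{3/2}\log k)}$ cost of the \probECPshort subroutine from \Cref{prop:ecp}, and absorb the $2^k$ choices of $L\in\mathcal{L}$ and the preliminary call to \Cref{lem:isecp} into the final bound. No gaps.
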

\begin{claimproof}
	To evaluate the running time, note that $\alpha(G)$ is computed in time $2^{\Oh(k^{3/2}\log k)}\cdot n^{\Oh(1)}$ by~\Cref{lem:isecp}.
	Observe  that the partial solution 
	$\mathcal{K}$ is constructed in polynomial time by \Cref{lem:bigcliques}, and  recall that the set of simplicial cliques $\mathcal{S}$ can be obtained in polynomial time~\cite{KloksKM00}. 
	Given $\mathcal{S}$, the auxiliary graph $H$ can be constructed in polynomial time.
	Then the set $\mathcal{L}$ of size at most $2^k$ is constructed in $\Oh(2^k\cdot n)$ time. For each $L\in \mathcal{L}$, we construct the base set $\mathcal{B}$ of broken cliques and the corresponding set $\mathcal{F}$ of free cliques in polynomial time. Then we call \textsc{Extend$(\mathcal{F},\mathcal{B})$}. Steps in lines~\ref{ln:i}--\ref{ln:iii}, \ref{ln:va}--\ref{ln:vb}, and \ref{ln:vii} are trivially polynomial. 
	Step in line~\ref{ln:iv} demands $2^{\Oh(k^{3/2}\log k)}\cdot k^{\Oh(1)}$ time by \Cref{prop:ecp}. Becase $|V(Q)|\leq 12k^2$ in line~\ref{ln:vi}, we have at most $\binom{12k^2}{2}$ recursive calls in \Cref{ln:via}. Summarizing, each call of \textsc{Extend$(\mathcal{F},\mathcal{B})$} without taking account the running time for the recursive calls, is done in $2^{\Oh(k^{3/2}\log k)}\cdot k^{\Oh(1)}$ time. 
	Because we branch for at most $\binom{12k^2}{2}$ possibilities in line~\ref{ln:via} and the depth of the search tree is upper bounded by $k$, the total running time of   \textsc{Extend$(\mathcal{F},\mathcal{B})$} for each base set $\mathcal{B}$ and the corresponding set of free cliques is $k^{\Oh(k)}\cdot 2^{\Oh(k^{3/2}\log k)}\cdot n^{\Oh(1)}$. Then the overall running time is $2^k\cdot k^{\Oh(k)}\cdot 2^{\Oh(k^{3/2}\log k)}\cdot n^{\Oh(1)}$ which could be written as $2^{\Oh(k^{3/2}\log k)}\cdot n^{\Oh(1)}$. This concludes the running time analysis.
\end{claimproof}

The proof is complete.
\end{proof}

In the conclusion of this section, we remark that the algorithm of \Cref{prop:ecp} is a bottleneck for the running time of our algorithm for \probECPISshort. More precisely, given an algorithm solving  \probECPshort in time $f(k)\cdot n^{\Oh(1)}$,  the running time of our algorithm is $k^{\Oh(k)}f(2k)\cdot n^{\Oh(1)}$.
Furthermore, consider the graph $G'$ obtained from a graph $G$ by adding a pendent neighbor to each vertex of $G$. Since $\ecp(G')=\alpha(G')+\ecp(G)$, a faster algorithm for \probECPISshort would improve the algorithm of \Cref{prop:ecp}.

\section{Clique Cover above Independent Set}\label{sec:ECC}
In this section, we establish the dichotomy result for \probECCISshort by showing  \Cref{thm:hard} and \Cref{cor:poly-zero-one}. For this, we need some auxiliary results about relations of \probECCIS with other problems which will be useful also  in  \Cref{sec:FPTbyomega}. Therefore, we put them in a separate subsection.
 
 \subsection{Reductions between \probECCISshort, \probAECCshort, and \textsc{$k$-Coloring}}\label{sec:reductiontoannotated}
Following  Orlin~\cite{orlin:cc}, we consider the generalization of \probECCshort where the aim is to cover a subset of edges.

\defparproblema{\probAECC(\probAECCshort)}{A graph $G$, an edge set $B\subseteq E(G)$, an integer $k$.}{$k$}{Find $k$ cliques in $G$, such that each edge in $B$ belongs to at least one clique.}

 Recall that 
 $\ecc(G)$ denotes the size of the smallest edge clique cover of $G$. Similarly, in the context of \probAECCshort, for a graph $G$ and a subset of its edges $B \subseteq E(G)$, we denote by $\aecc_B(G)$ the size of the smallest collection of cliques in $G$ that cover all edges in~$B$.

\begin{lemma}\label{lemma:ecc_to_annotated}
    \probECCISshort admits an \classFPT-Turing-reduction to \probAECCshort, where an instance $(G,k)$ is reduced to at most $2k$ instances $(G',R',k')$ such that $G'$ is an induced subgraph of $G$ and $k'\le 2k$.
	The reduction works in time $4^{k} \cdot n^{\Oh(1)}$.
\end{lemma}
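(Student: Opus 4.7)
The plan is to exploit Lemma~\ref{lem:nonsimpl} in a stronger form: any edge clique cover $\mathcal{C}$ of $G$ with $|\mathcal{C}|\le\alpha(G)+k$ may, without loss of generality, contain \emph{every} simplicial clique of $G$. Indeed, if $v$ is simplicial with $\deg(v)\ge1$ and $N_G[v]\notin\mathcal{C}$, then every clique of $\mathcal{C}$ containing $v$ is a proper sub-clique of $N_G[v]$; since jointly they must cover all edges incident to $v$ there must be at least two of them, so replacing them by $N_G[v]$ strictly decreases $|\mathcal{C}|$ while preserving coverage (the removed cliques only cover edges lying inside $N_G[v]$, which $N_G[v]$ itself covers). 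Distinct simplicial cliques have non-adjacent simplicial representatives, so swaps for different simplicial vertices do not interfere, and iterating the swap transforms any valid cover into one containing the full set $\mathcal{S}$ of distinct simplicial cliques of $G$ without increasing its size.

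Once $\mathcal{S}$ is fixed inside the cover, the edges still to be covered are exactly $B=E(G)\setminus\bigcup_{S\in\mathcal{S}}\binom{S}{2}$, and the remaining cliques may be assumed to lie in the induced subgraph $G'=G[V(B)]$. I would compute $\mathcal{S}$ in polynomial time via~\cite{KloksKM00} and pick one simplicial representative per clique to obtain a set $X$. Non-adjacency of distinct representatives shows that $X$ is independent, and a standard exchange argument extends $X$ to a maximum independent set of $G$, yielding $\alpha(G)=|\mathcal{S}|+\alpha(G-N_G[X])$. Combined with the swap step, $(G,k)$ is a yes-instance of \probECCISshort\ if and only if $\aecc_B(G')\le\alpha(G-N_G[X])+k$, and Lemma~\ref{lem:nonsimpl}(i) gives $\alpha(G-N_G[X])\le k$, so this threshold does not exceed $2k$.

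The remaining issue is that $\alpha(G-N_G[X])$ is not known in advance. I resolve this by issuing a \probAECCshort\ query on $(G',B,k')$ for each $k'\in\{1,\dots,2k\}$, giving at most $2k$ instances (the corner case $B=\emptyset$ is trivial since then $\mathcal{S}$ already covers $G$ with $|\mathcal{S}|\le\alpha(G)$ cliques). If no query returns yes, output no. Otherwise, let $k^*$ be the smallest yes-budget and $\mathcal{C}_{ns}$ the returned cover; restricting $\mathcal{C}_{ns}$ to $V(G-N_G[X])$ yields an edge clique cover of $G-N_G[X]$ of size at most $k^*\le 2k$, to which Lemma~\ref{lem:iscover} is applied to compute $\alpha(G-N_G[X])$ in time $\Oh(2^{2k}\cdot k^2 n)=4^k\cdot n^{\Oh(1)}$---this is what dominates the reduction's running time. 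The algorithm outputs yes exactly when $k^*\le\alpha(G-N_G[X])+k$. The main conceptual step is the swap argument forcing all simplicial cliques into the cover, whose delicate point is that performing a swap for one simplicial vertex must not undo what was achieved for another, which is guaranteed by the non-adjacency of simplicial representatives. The main technical step is breaking the chicken-and-egg between $\alpha(G-N_G[X])$ and the oracle budget, where the $4^k$ factor emerges precisely from extracting $\alpha$ from the first yes-cover via Lemma~\ref{lem:iscover}.
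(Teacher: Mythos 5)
Your proposal is correct and follows essentially the same route as the paper's proof: peel off the simplicial cliques (arguing every optimal cover contains all of them), reduce to \probAECCshort on the residual induced subgraph with budget at most $2k$, query the oracle for $k'=1,\dots,2k$, and recover the residual independence number from the first returned cover via \Cref{lem:iscover}, which is exactly where the $4^k$ factor comes from. The only differences are cosmetic: you spell out the exchange/swap argument forcing all simplicial cliques into the cover (the paper delegates this to \Cref{lem:nonsimpl}), and you define $B$ directly as the set of edges covered by no simplicial clique, which is if anything a cleaner formulation than the paper's $E(G')\setminus E(G'[F])$.
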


\begin{proof}
    Let $S$ be the set of simplicial vertices in $G$. Let $S' \subseteq S$ be a maximal subset of simplicial vertices containing no two vertices that are true twins of each other.
    It is easy to see that the property of being true twins defines an equivalence relation on $S$, therefore $S'$ contains exactly one vertex of each equivalence class defined by this relation. We first observe that most of $\alpha(G)$ is attributed to $S'$.

    \begin{claim}\label{claim:ecc_simplicial_size}
        If $(G, k)$ is a yes-instance, then the size of $S'$ is at least $\alpha(G) - k$.
    \end{claim}
    \begin{claimproof}
        Let $\mathcal{C}$ be the smallest edge clique cover of $G$. By \Cref{lem:nonsimpl}, at least $\alpha(G) - k$ cliques in $\mathcal{C}$ are simplicial. Since each simplicial clique in $\mathcal{C}$ contains a distinct vertex of $S'$, the claim follows.
    \end{claimproof}

    Next, we describe the reduced instances. Let $G' = G - S$, the graph obtained from $G$ by removing all simplicial vertices. Let $F \subseteq V(G')$ be the vertices that were adjacent to simplicial vertices in $G$, and let $B = E(G') \setminus E(G'[F])$ be the edges in $G'$ with at least one endpoint not in $F$. We observe that the difference between the smallest edge clique cover of $G$ and the smallest annotated edge clique cover of $G'$ with respect to $B$ is exactly $|S'|$, and the same holds for the difference between $\alpha(G)$ and $\alpha(G' - F)$. 

    \begin{claim}\label{claim:ecc_reduced}
        It holds that $\ecc(G) = \aecc_B(G') + |S'|$, and $\alpha(G) = \alpha(G' - F) + |S'|$.
    \end{claim}
    \begin{claimproof}
        Let $\mathcal{C}$ be the smallest edge clique cover of $G$. As observed in~\Cref{claim:ecc_simplicial_size}, $\mathcal{C}$ contains a simplicial clique for each vertex of $S'$. The remaining cliques of $\mathcal{C}$ cover all edges in $B$, which are exactly the edges of $G$ not contained in simplicial cliques of $G$. Therefore, $\aecc_B(G') \leq \ecc(G) - |S'|$. In the other direction, the cliques that realize $\aecc_B(G')$ together with simplicial cliques of $G$, one per a vertex in $S'$, cover all edges of $G$. Thus, $\ecc(G) \leq \aecc_B(G') + |S'|$, which concludes the first part of the claim.

        For the second part of the statement, let $I$ be the maximum independent set in $G$. We can assume that $S' \subseteq I$, since for each vertex $s \in S'$, $I$ contains exactly one vertex in its neighborhood, which could be safely replaced by $s$. No vertices of $S'$ are therefore in $F$ or $S \setminus S'$, and $I \setminus S'$ is an independent set in $G - (S \cup F) = G' - F$, thus $\alpha(G' - F) \geq \alpha(G) - |S'|$.
        In the other direction, let $I'$ be a maximum independent set in $G' - F$. The set $I' \cup S'$ is an independent set in $G$, since vertices of $G' - F$ are not adjacent to any simplicial vertices in $G$. Therefore, $\alpha(G) \geq \alpha(G' - F) + |S'|$, which establishes the second part of the claim.
    \end{claimproof}

    We are now ready to describe the algorithm claimed by the lemma. First, it computes $S$ and $S'$ by listing all pairwise distinct simplicial cliques with the algorithm 
of of Kloks, Kratsch, and M{\"{u}}ller~\cite{KloksKM00}.
    The algorithm then computes $G'$, $B$, $F$ as described above.
    If $G'$ or $B$ are empty, the algorithm sets $k' = 0$.
    Otherwise, it constructs instances $\mathcal{I} = (G', B, k')$ of \probAECCshort for increasing values of $k'$ from $1$ to $2k$.
    If all of them are classified as no-instances by the \probAECCshort algorithm, the algorithm reports that the original instance $(G, k)$ is a no-instance as well.
    Let $k'$ be the smallest value such that $\mathcal{I} = (G', B, k')$ is reported as a yes-instance.
    Let $\mathcal{C}'$ be the collection of cliques in $G'$ of size $k'$, covering all edges of $B$, as returned by the \probAECCshort algorithm.
    Finally, the algorithm uses~\Cref{lem:iscover} to find the maximum independent set in $G' - F$, as $\mathcal{C}'$ induces an edge clique cover of $G' - F$; let $t = \alpha(G' - F)$.
    The algorithm then reports $(G, k)$ to be a yes-instance if and only if $k' \leq t + k$.
    The respective edge clique cover of $G$ is constructed as a union of $\mathcal{C}'$ and simplicial cliques in $G$, one for each vertex of $S'$.

    By construction, it holds that at most $2k$ instances of \probAECCshort are produced, with the graph $G'$ being an induced subgraph of $G$, and the parameter at most $2k$. Next, we show the correctness of the reduction.
    Let $\mathcal{C}$ be the smallest edge clique cover of $G$, and let $I$ be the maximum independent set in $G$.
    Without loss of generality, assume that $S' \subseteq I$, and that for every $s \in S'$, the clique on $N[s]$ is contained in $\mathcal{C}$, as observed in Claims~\ref{claim:ecc_simplicial_size} and \ref{claim:ecc_reduced}.
    By \Cref{claim:ecc_reduced}, $|\mathcal{C}| - |I| = \aecc_B(G') - \alpha(G' - F)$. First, if $(G, k)$ is a yes-instance of \probECCISshort, then $|\mathcal{C}| - |I| \leq k$ and $\aecc_B(G') - \alpha(G' - F) \leq k$. Moreover, by \Cref{claim:ecc_simplicial_size}, $|S'|$ is at least $|I| - k$.
    Therefore, by \Cref{claim:ecc_reduced}, $\aecc_B(G') = |\mathcal{C}| - |S'| \leq 2k$, and $\mathcal{A}$ will identify $(G, B, k')$ as a yes-instance for $k' = \aecc_B(G')$. The algorithm then computes $t = \alpha(G' - F)$, and, since by the above $k' - t \leq k$, correctly concludes that $(G, k)$ is a yes-instance.

    In the other direction, if $(G, k)$ is a no-instance of \probECCISshort, then $|\mathcal{C}| - |I| > k$ and therefore $\aecc_B(G') - \alpha(G' - F) > k$. If $\aecc_B(G') > 2k$, then the algorithm correctly reports a no-instance, since all constructed instances of \probAECCshort are no-instances. Otherwise, let $k' = \aecc_B(G') \leq 2k$, and the algorithm finds $t = \alpha(G' - F)$. Since $k' - t > k$, the algorithm correctly reports that $(G, k)$ is a no-instance.

    Finally, for the running time, observe that all steps of the reduction are done in polynomial time, except the computation of maximum independent set in $G' - F$ via \Cref{lem:iscover}. Since~\Cref{lem:iscover} is called with an edge clique cover of size at most $2k$, the total running time of the reduction is bounded by $4^k \cdot n^{\Oh(1)}$.
\end{proof}

The running time of~\Cref{lemma:ecc_to_annotated} is dominated by computing the maximum independent set in an induced subgraph, and the reduction can be performed more efficiently on certain graph classes, where instead of~\Cref{lem:iscover} one can compute the independent set directly with a faster algorithm.
To accomodate for this, we state the following corollary of~\Cref{lemma:ecc_to_annotated}.

\begin{corollary}\label{cor:coloring_to_annotated_faster}
	Let $\mathcal{G}$ be a hereditary graph class. \probECCISshort admits an \classFPT-reduction to \probAECCshort, where an instance $(G,k)$ with $G \in \mathcal{G}$ is reduced to at most $2k$ instances $(G',R',k')$ such that $G'$ is an induced subgraph of $G$ and $k'\le 2k$. If it can be determined in time $f(k) \cdot n^{\Oh(1)}$ whether a graph in $\mathcal{G}$ has an independent set of size $k$, then the reduction admits the same running time bound.
\end{corollary}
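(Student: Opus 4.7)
The plan is to re-use the reduction of \Cref{lemma:ecc_to_annotated} essentially verbatim and only replace its sole super-polynomial step: the call to \Cref{lem:iscover} that computes $\alpha(G'-F)$. In the original proof, this costs $4^k \cdot n^{\Oh(1)}$ because the edge clique cover of $G'$ returned by the \probAECCshort oracle has size up to $2k$, and \Cref{lem:iscover} scales as $2^{|\mathcal{C}|}$. We will swap this subroutine for the hypothesised algorithm for independent set on $\mathcal{G}$.

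The key observation is that only a bounded independence query is ever needed. After the reduction locates the smallest $k' \in \{1,\dots,2k\}$ for which the \probAECCshort oracle answers yes on $(G',B,k')$, the reduction of \Cref{lemma:ecc_to_annotated} declares $(G,k)$ a yes-instance precisely when $\alpha(G'-F) \ge k' - k$. If $k' \le k$, this holds trivially; otherwise $k' - k \in \{1,\dots,k\}$, so a single decision query ``does $G'-F$ contain an independent set of size $k' - k$?'' with parameter at most $k$ is enough. Crucially, the final edge clique cover of $G$ is assembled from $\mathcal{C}'$ together with one simplicial clique per vertex of $S'$, and never actually uses a concrete maximum independent set of $G'-F$, so the decision form of the question really does suffice.

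To justify invoking the hypothesised algorithm on that query, note that $G'$ is an induced subgraph of $G \in \mathcal{G}$ and $G'-F$ is an induced subgraph of $G'$; since $\mathcal{G}$ is hereditary, $G'-F \in \mathcal{G}$. Hence the algorithm applies and answers in $f(k) \cdot n^{\Oh(1)}$ time. The remaining work --- listing simplicial cliques and true-twin classes with the algorithm of Kloks, Kratsch, and M{\"u}ller~\cite{KloksKM00}, constructing $G'$, $F$, and $B$, issuing at most $2k$ oracle queries to \probAECCshort (each on an induced subgraph of $G$ with parameter at most $2k$), and assembling the final cover --- is polynomial. Therefore the total cost of the reduction outside the \probAECCshort queries is $f(k) \cdot n^{\Oh(1)}$, matching the claimed bound.

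I anticipate no substantive obstacle: the main technical content is the two-line observation that the needed independence number is upper bounded by $k$ and that the hereditary property of $\mathcal{G}$ keeps $G'-F$ inside $\mathcal{G}$. After that, the argument is a direct transcription of the proof of \Cref{lemma:ecc_to_annotated}, with the single costly call to \Cref{lem:iscover} replaced by one invocation of the assumed independence algorithm.
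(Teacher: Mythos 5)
Your proposal is correct and matches the paper's proof: both reuse the reduction of \Cref{lemma:ecc_to_annotated} verbatim and replace the single call to \Cref{lem:iscover} by the assumed independent-set algorithm on $G'-F$, which lies in $\mathcal{G}$ by heredity. The only (cosmetic) difference is that the paper computes $\alpha(G'-F)$ exactly by querying $t=1,\dots,k$ (using the bound $\alpha(G'-F)\le k$ from Claims~\ref{claim:ecc_simplicial_size} and~\ref{claim:ecc_reduced}), whereas you issue one decision query at threshold $k'-k\le k$, correctly observing that only the comparison $\alpha(G'-F)\ge k'-k$ is ever needed.
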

\begin{proof}
    Perform the same reduction as given by~\Cref{lemma:ecc_to_annotated}, except for the final step of the algorithm. There, to compute $\alpha(G' - F)$, instead of~\Cref{lem:iscover}, use the independent set algorithm for $\mathcal{G}$ with the input $(G' - F, t)$ for increasing values of $t$ from $1$ to $k$.
    Note that since $\alpha(G) = \alpha(G' - F) + |S'|$ and $|S'| \geq \alpha(G) - k$ by Claims~\ref{claim:ecc_simplicial_size} and \ref{claim:ecc_reduced}, it holds that $\alpha(G' - F) \leq k$.
\end{proof}

Next, we show that \probAECCshort can be, in turn, reduced to the \textsc{$k$-Coloring} problem.
We note that the reduction is almost the same as the standard conversion from \probECCshort to \textsc{Vertex Clique Cover}, as introduced by Kou, Stockmeyer and Wong~\cite{kou:cc}.

\begin{lemma}\label{lemma:annotated_to_coloring}
	\probAECCshort is polynomial-time reducible to \textsc{$k$-Coloring} on a graph with $|B|$ vertices.
\end{lemma}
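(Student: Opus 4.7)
The plan is to adapt the classical Kou--Stockmeyer--Wong reduction from edge clique cover to vertex coloring, restricting it to the annotated edge set $B$. Given an instance $(G, B, k)$ of \probAECCshort, I would construct an auxiliary graph $H$ with $V(H) = B$ (so $H$ has exactly $|B|$ vertices), and put an edge between two distinct elements $e_1 = uv$ and $e_2 = xy$ of $B$ in $H$ if and only if $\{u,v,x,y\}$ is \emph{not} a clique in $G$, i.e.\ there is no clique of $G$ containing both $e_1$ and $e_2$. This construction is clearly polynomial in the size of $G$.

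Next I would argue that $(G,B,k)$ is a yes-instance of \probAECCshort if and only if $H$ admits a proper $k$-coloring. For the forward direction, take cliques $C_1,\dots,C_k$ of $G$ that cover $B$, and color each $e \in B$ by the index of any $C_i$ containing $e$; two edges receiving the same color both lie in a common clique of $G$, hence are non-adjacent in $H$, so this coloring is proper.

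For the reverse direction, let $\varphi : B \to \{1,\dots,k\}$ be a proper $k$-coloring, and for each color $i$ define $K_i$ to be the union of endpoints of edges in $\varphi^{-1}(i)$. I need to check that each $K_i$ is a clique in $G$. Take any two vertices $a,b \in K_i$; they arise as endpoints of (possibly equal) edges $e, e' \in \varphi^{-1}(i)$. If $e = e'$ or $a,b$ are endpoints of the same edge, adjacency is immediate; otherwise, $e$ and $e'$ lie in the same color class and thus are non-adjacent in $H$, which by construction means the four endpoints of $e$ and $e'$ induce a clique in $G$, and in particular $a$ and $b$ are adjacent. Hence $K_1,\dots,K_k$ are $k$ cliques of $G$ that together cover every edge in $B$.

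The only subtle point, and the one I would state most carefully, is the verification that an independent set in $H$ yields a clique in $G$: pairwise coverability of edges by cliques suffices because adjacency in $G$ is determined edge-by-edge, so the global clique on the union of endpoints follows from checking all pairs. Everything else is routine, and the construction runs in polynomial time on a graph of exactly $|B|$ vertices, which matches the statement of the lemma.
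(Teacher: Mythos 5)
Your proposal is correct and follows essentially the same route as the paper: the same auxiliary graph $H$ on vertex set $B$ with adjacency meaning ``cannot lie in a common clique of $G$,'' and the same two-direction verification (cliques give color classes; color classes give cliques via the pairwise-adjacency check on unions of endpoints). The only cosmetic difference is that the paper phrases the forward direction as disjointifying the cliques into a partition of $V(H)$ rather than directly assigning colors.
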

\begin{proof}
    Let $(G, B, k)$ be an instance of \probAECCshort. We construct the target graph $H$ of the \textsc{$k$-Coloring} instance as follows. We set $V(H) = B$, associating the vertices of $H$ with the edges of $B$. For the edge set of $H$, $ee' \in E(H)$ whenever two distinct edges $e, e' \in B$ are such that
    \begin{enumerate}
        \item either $e$ and $e'$ share a common endpoint, and the remaining endpoints are not adjacent in $G$,
        \item or $e$ and $e'$ do not share endpoints, and at least one edge between an endpoint of $e$ and an endpoint of $e'$ is missing in $G$.
    \end{enumerate}
    In other words, $e$ and $e'$ are adjacent in $H$ whenever they cannot be part of the same clique in $G$.

    Clearly, the construction is done in polynomial time. We now show the correctness of the reduction. First, assume there is a collection $\mathcal{C}$ of at most $k$ cliques in $G$ that covers all edges of $B$. We claim that each clique $C \in \mathcal{C}$ induces an independent set in $H$. Indeed, if edges $e$ and $e'$ are part of $C$, they are not adjacent in $H$, as both 1. and 2. in the definition of $E(H)$ contradict that $C$ is a clique.
    Consider now the cliques in $C$ as subsets of $V(H)$.
    If two sets contain the same element, remove it from one of them arbitrarily; repeat until no such sets occur.
    The resulting collection of subsets of $V(H)$ is a partition of $V(H)$, and each subset is an independent set in $H$, which shows that $H$ admits a $k$-coloring.

    In the other direction, let $C_1', C_2', \ldots, C_k'$ be a partition of $V(H)$ into $k$ independent sets. For each $i \in [k]$, let $C_i \subseteq V(G)$ be the union of all endpoints of the edges in $C_i'$. We now show that each $C_i$ induces a clique in $G$. Indeed, assume $u \neq v \in C_i$, and $uv \notin E(G)$. By construction of $C_i$, there are edges $e, e' \in B$ such that $e$ has $u$ as an endpoint, and $e'$ has $v$ as an endpoint.
    However, by construction of $H$, $uv \notin E(G)$ implies that $ee'$ is an edge in $H$, contradicting the assumption that $C_i'$ is an independent set in $H$. Therefore, $C_1$, $C_2$, \ldots, $C_k$ is a collection of cliques in $G$, and each edge of $B$ belongs to at least one of them. Therefore, $(G, B, k)$ is a yes-instance of \probAECCshort.
\end{proof}

As \textsc{$k$-Coloring} is polynomial-time solvable when $k \leq 2$, we immediately get the following corollary.

\begin{corollary}\label{thm:annotated_k_two}
	\probAECCshort is polynomial-time solvable for $k \leq 2$.
\end{corollary}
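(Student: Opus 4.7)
The plan is to invoke \Cref{lemma:annotated_to_coloring} directly, which yields a polynomial-time reduction from an instance $(G,B,k)$ of \probAECCshort\ to an instance $(H,k)$ of \textsc{$k$-Coloring} where $|V(H)|=|B|$. It then suffices to observe that \textsc{$k$-Coloring} is polynomial-time solvable for $k \leq 2$, so the overall algorithm runs in polynomial time.

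More concretely, I would handle the three relevant values of $k$ separately. For $k=0$, the instance is a yes-instance if and only if $B=\emptyset$, which is trivial to check. For $k=1$, we need to cover all edges of $B$ by a single clique in $G$, which corresponds to $H$ having no edges (equivalently, being $1$-colorable); this can be decided by inspecting $E(H)$. For $k=2$, we need to decide whether $H$ is bipartite, which is well-known to be doable in linear time via a BFS/DFS two-coloring procedure.

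There is essentially no obstacle here: the whole content is packaged in \Cref{lemma:annotated_to_coloring} together with the classical tractability of $2$-coloring. The only minor subtlety worth mentioning is that the reduction gives a graph $H$ with $V(H)=B$, hence polynomial in the size of the input, so the polynomial running time is preserved. No further combinatorial argument is needed.
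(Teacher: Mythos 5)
Your proposal is correct and follows exactly the paper's route: the paper derives this corollary immediately from \Cref{lemma:annotated_to_coloring} together with the polynomial-time solvability of \textsc{$k$-Coloring} for $k\leq 2$. Your explicit case analysis for $k=0,1,2$ is just a spelled-out version of that same one-line argument.
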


We note that a polynomial-time reduction from \textsc{$k$-Coloring} or \textsc{Vertex Clique Cover} to \probAECCshort follows immediately from a reduction to \probECCshort.

\subsection{Dichotomy for \probECCISshort}\label{sec:dichotomy}
In this subsection, we prove that \probECCISshort is \classParaNP-complete when parameterized by  $k$. More precisely, we show that the problem is \classNP-complete for every $k\geq 2$ on instances where the graph is perfect without isolated vertices. 
We complement this result by proving that for $k=0$ or $1$, the problem is in \classP. We start with obtaining the following result for \probAECCshort. The arguments are similar to the arguments of Orlin~\cite{orlin:cc}.

\begin{lemma}\label{thm:ann-hard}
	\probAECCshort is \classNP-complete on co-bipartite graphs for every $k\geq 3$. Furthermore, the hardness holds for the case when $B$ is a perfect matching between the bipartition sets.
\end{lemma}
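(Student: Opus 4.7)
Membership in \classNP\ is straightforward: a collection of at most $k$ cliques is a polynomial certificate, and verifying that it covers every edge of $B$ can be done in polynomial time. The bulk of the work is proving \classNP-hardness, for which I will reduce from \textsc{$k$-Coloring}, which is \classNP-complete for every fixed $k\ge 3$.

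The plan is to construct, from an instance $H=(V,E)$ of \textsc{$k$-Coloring} with $V=\{v_1,\dots,v_n\}$, a co-bipartite graph $G$ as follows. Take two disjoint copies $A=\{a_1,\dots,a_n\}$ and $B=\{b_1,\dots,b_n\}$ of $V$, turn each of $A$ and $B$ into a clique, and add a cross edge $a_ib_j$ exactly when $i=j$ or $v_iv_j\notin E(H)$. Let $R=\{a_ib_i : i\in[n]\}$; this is a perfect matching between the two bipartition sets, and we declare $R$ to be the set of edges that must be covered. This produces an instance $(G,R,k)$ of \probAECCshort\ of the required form in polynomial time.

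The key structural observation is the following characterization of cliques in $G$. Since both $A$ and $B$ are cliques, any clique $C$ in $G$ is determined by the pair $(I_A,I_B)$ with $I_A=\{i:a_i\in C\}$ and $I_B=\{j:b_j\in C\}$, subject to the condition that $a_ib_j\in E(G)$ for all $i\in I_A$, $j\in I_B$. Such a clique $C$ covers the matching edge $a_ib_i$ iff $i\in I_A\cap I_B$. I will show that $S_C:=\{v_i:i\in I_A\cap I_B\}$ must be an independent set in $H$: for any two distinct $i,j\in I_A\cap I_B$ the clique condition forces $a_ib_j\in E(G)$, and since $i\neq j$ this is only possible when $v_iv_j\notin E(H)$. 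Conversely, for any independent set $S\subseteq V(H)$, the set $C_S=\{a_i,b_i : v_i\in S\}$ is a clique in $G$ (both parts lie in cliques, and cross edges exist by independence of $S$ in $H$) that covers precisely the matching edges indexed by $S$.

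Combining both directions: a collection of $k$ cliques in $G$ covers all of $R$ iff the induced index sets form $k$ independent sets of $H$ whose union is $V(H)$, i.e.\ iff $H$ admits a proper $k$-coloring. Since \textsc{$k$-Coloring} is \classNP-complete for every fixed $k\ge 3$, the same holds for \probAECCshort\ on co-bipartite graphs with $B$ a perfect matching between the two bipartition sets. The only subtle point to check carefully is that the cliques need not be maximal and need not cover only matching edges; adding extra vertices to a clique can only shrink the set of indices in $I_A\cap I_B$, so the independent-set correspondence still goes through in both directions. I do not foresee a serious obstacle; the main care is in writing down the clique-to-independent-set correspondence cleanly and verifying that the construction is co-bipartite and that $R$ is genuinely a perfect matching.
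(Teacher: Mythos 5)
Your proposal is correct and is essentially the same reduction as the paper's: the paper reduces from \textsc{Vertex Clique Cover} (obtained from \textsc{$k$-Coloring} by complementation), builds the same two-clique co-bipartite graph with the perfect matching as the annotated edge set, and adds cross edges exactly where you do (cliques of the source graph playing the role of your independent sets of $H$). The correspondence between cliques covering matching edges and cliques/independent sets in the source instance is argued identically, so no further comparison is needed.
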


\begin{proof}
It is well-known that the classical \textsc{$k$-Coloring} problem is \classNP-complete for every $k\geq 1$~\cite{garey-johnson}. This immediately implies the lower bound for the dual \probVCC, that is, for 		every $k\geq 3$, it is \classNP-complete to decide, given a graph $G$, whether there is a family of at most $k$ cliques of $G$ such that each vertex of $G$ belongs to at least one clique of the family.  We reduce from  \probVCC.   

Let $(G,k)$ be an instance of \probVCC and let $V(G)=\{v_1,\ldots,v_n\}$. We construct the co-bipartite graph $G'$ and the set of edges $B$ as follows.
\begin{itemize}
\item Construct two copies $V_1=\{v_1^{(1)},\ldots,v_n^{(1)}\}$ and $V_2=\{v_1^{(2)},\ldots,v_n^{(2)}\}$ of $V(G)$ and make them cliques.
\item For each $i\in\{1,\ldots,n\}$, construct and edge $v_i^{(1)}v_i^{(2)}$ and set $B=\{v_i^{(1)}v_i^{(2)}\mid 1\leq i\leq n\}$.
\item For each $v_iv_j\in E(G)$ for $i,j\in \{1,\ldots,n\}$, construct edges $v_i^{(1)}v_j^{(2)}$ and $v_j^{(1)}v_i^{(2)}$.
\end{itemize}
Note that $G'$ is a co-bipartite graphs where $V_1$ and $V_2$ form a bipartition of the vertex set.
We claim that the vertices of $V(G)$ can be covered by at most $k$ cliques if and only if the edges of $B$ can be covered by at most $k$ cliques in $G'$.

Assume that $C_1,\ldots,C_k$ are cliques of $G$ with the property that  for every $i\in \{1,\ldots,n\}$, there is $j\in\{1,\ldots,k\}$ such that $v_i\in C_j$. For every $j\in\{1,\ldots,k\}$, we define 
$C_j'=\{v_i^{(1)}\mid v_i\in C_j\}\cup\{v_i^{(2)}\mid v_i\in C_j\}$. By the construction of $G'$, we have that each $C_j'$ is a clique of $G'$. Because each $v_i$ is included in some $C_j$, we obtain that $v_i^{(1)},v_i^{(2)}\in C_j' $. Thus, $C_1',\ldots,C_k'$ cover $B$.

For the opposite direction, assume that there are $k$ cliques $C_1',\ldots,C_k'$ of $G'$ with the property that for every $i\in\{1,\ldots,n\}$, there is $j\in\{1,\ldots,k\}$ such that $v_i^{(1)},v_i^{(2)}\in C_j'$. We assume without loss of generality that $C_j'$ are inclusion minimal cliques covering $B$. Then for each $j\in\{1,\ldots,k\}$, a vertex $v_i^{(1)}\in C_j'$ if and only if $v_i^{(2)}\in C_j'$. The construction of $G'$ implies that for each $j\in\{1,\ldots,k\}$, $C_j=\{v_i\mid v_i^{(1)},v_i^{(2)}\in C_j'\}$ is a clique of $G$. Since  each $v_i^{(1)},v_i^{(2)}\in C_j'$ for some $j\in\{1,\ldots,k\}$, we obtain that $C_1,\ldots,C_k$ cover $V(G)$. This concludes the proof.
\end{proof}

We use \Cref{thm:ann-hard}  to prove \Cref{thm:hard}. We restate it here.

\ECCISNPhard*

\begin{proof}  
We use \Cref{thm:ann-hard} and reduce from \probAECCshort. 

Let $(G,B,k)$ be an instance of \probAECCshort where $G$ is an $n$-vertex co-bipartite graph for $n\geq 2$. We also assume that $B$ is a perfect matching in $G$, that is, each vertex of $G$ is incident to an edge of $B$.  We set  $R=E(G)\setminus B$ and construct the graph $G'$ as follows.

\begin{itemize}
\item Construct a copy of $G$.
\item Construct a vertex $v$ and make it adjacent to each vertex of $G$.
\item For each vertex $x\in V(G)$, construct a vertex $u_x$ and make it adjacent to $x$.
\item For every edge $e=xy\in R$, construct a vertex $w_e$ and make it adjacent to $x$ and $y$.
\end{itemize}

Because $G$ is a co-bipartite graph with at least two vertices that has a perfect matching, $G$ has no isolated vertices. Then the construction implies that the same holds for $G'$.
Notice that $G'$ is a perfect graph. To see this, we use the perfect graph theorem~\cite{ChudnovskyRST06}. Trivially, a vertex of degree one does not belong to any odd hole or antihole. For any vertex $w_e$ for $e\in R$, $w_e$ does not belong to any odd hole because the neighbors of $w_e$ are adjacent, and $w_e$ does not belong to any antihole with at least 7 vertices because the degree of $w_e$ is two. Since the graph obtained from the co-bipartite graph $G$ by adding the universal vertex $v$ is co-bipartite, this graph is perfect. This immediately implies that $G'$ is perfect.
 
 Consider the set $I=\{v\}\cup\{u_x\mid x\in V(G)\}\cup\{w_e\mid e\in R\}$. By the construction of $G'$, $I$ is an independent set of size $|R|+n+1$.  We claim that $I$ is a maximum independent set. To see this, note that the vertices of $I'=\{u_x\mid x\in V(G)\}\cup\{w_e\mid e\in R\}$ are simplicial. Therefore, there is a maximum independent set of $G'$ containing $I'$. Since $v$ is the unique vertex nonadjacent  
 to the vertices of $I'$, we obtain that the size of $I$ is maximum. Thus, $\alpha(G')=|R|+n+1$. 
 
 We set $k'=k-1$ and claim the following.
 \begin{claim}
$(G,B,k)$ is a yes-instance of \probAECCshort if and only if $(G',k')$ is a yes-instance of  \probECCISshort. 
 \end{claim}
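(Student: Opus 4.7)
The plan is to prove both directions by exhibiting a canonical cover of $G'$ and showing that every cover of size at most $\alpha(G')+k'=n+|R|+k$ can be massaged into this canonical form. I will account for the budget as $n$ pendant cliques $\{x,u_x\}$, $|R|$ triangles $\{x,y,w_e\}$, and $k$ cliques through $v$.

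For the forward direction, given cliques $C_1,\dots,C_k$ in $G$ covering $B$, I would take as the cover of $G'$ the family $\{\{x,u_x\}:x\in V(G)\}\cup\{\{x,y,w_e\}:e=xy\in R\}\cup\{\{v\}\cup C_i:1\le i\le k\}$. The last group are cliques because $v$ is universal in $G'$; they cover every edge $vx$ because $B$ being a perfect matching forces every vertex of $V(G)$ to lie in some $C_i$; and they cover $B$ by hypothesis. The first two groups cover the remaining edges of $G'$. The total size is exactly $n+|R|+k=\alpha(G')+k'$, so $(G',k')$ is a yes-instance.

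For the backward direction, let $\mathcal{C}$ be an edge clique cover of $G'$ of size at most $n+|R|+k$, and normalize it as follows. First, since $N_{G'}[u_x]=\{x,u_x\}$, the edge $xu_x$ can only be covered by the clique $\{x,u_x\}$ itself, so all $n$ pendant cliques are forced in $\mathcal{C}$. Second, for each $e=xy\in R$ one has $N_{G'}[w_e]=\{x,y,w_e\}$, so every clique covering $xw_e$ or $yw_e$ is a subset of this triple; if $\{x,y,w_e\}\notin\mathcal{C}$, then both $\{x,w_e\}$ and $\{y,w_e\}$ are in $\mathcal{C}$, and replacing them by the triangle strictly decreases $|\mathcal{C}|$, after which I may discard any redundant proper subset of $\{x,y,w_e\}$. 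I may therefore assume the $|R|$ triangles are all in $\mathcal{C}$, leaving at most $k$ further cliques, none containing any $u_x$ or $w_e$; these must collectively cover every edge $vx$ and every edge of $B$. For any of them not containing $v$, I replace the clique $C$ by $C\cup\{v\}$, which is still a clique since $v$ is universal in $V(G)$ and only adds edges to the cover. The final remaining cliques can thus be written as $\{v\}\cup C_1,\dots,\{v\}\cup C_t$ with $t\le k$, and $C_1,\dots,C_t$ are cliques in $G$ whose union must cover every edge of $B$, giving a solution to $(G,B,k)$.

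The only delicate step is the triangle swap in the backward direction; it works precisely because $xy=e\in R\subseteq E(G)$, so $\{x,y,w_e\}$ is genuinely a clique in $G'$. This is the single place where the assumption $R\subseteq E(G)$ is used, and it allows one triangle per edge of $R$ to absorb all three edges $xw_e$, $yw_e$, and $xy$ at once, which is what makes the bookkeeping $k'=k-1$ come out tight.
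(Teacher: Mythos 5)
Your proof is correct and follows essentially the same route as the paper: the forward direction is identical, and the backward direction is the same accounting of the $n$ forced pendant cliques and $|R|$ cliques needed for the $w_e$'s, none of which can cover an edge of $B$, leaving at most $k$ cliques of $G'[V(G)\cup\{v\}]$ whose restrictions to $V(G)$ cover $B$. The only cosmetic difference is that you normalize the cover to contain the $|R|$ triangles via an explicit swap, whereas the paper simply counts at least $|R|$ pairwise-distinct cliques meeting the (pairwise non-adjacent) vertices $w_e$; both yield the same bound.
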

 
 \begin{claimproof}
 Suppose that there is a family $C_1,\ldots,C_k$ of $k$ cliques of $G$ such that each edge of $B$ is covered by at least one clique. For each $i\in\{1,\ldots,k\}$, we set $C_i'=C_i\cup\{v\}$. Because $v$ is adjacent to every vertex of $G$, $C_i'$ is a clique of $G'$. Recall that every vertex of $G$ is incident to an edge of $B$.  Then because the cliques $C_1,\ldots,C_k$ cover $B$, for every vertex $x\in V(G)$, there is a clique $C_i$ such that $x\in C_i$. This implies that $x,v\in C_i'$. Thus, the edges $xv$ for $x\in V(G)$ are covered by the cliques $C_1',\ldots,C_k'$. Also, these cliques cover $B$. For every vertex $x\in V(G)$, we set  $X_x=\{x,u_x\}$. Clearly, this is a clique of size two. Then the family of cliques $\{X_i\mid x\in V(G)\}$ covers every edge $xu_x$ for $x\in V(G)$. Finally, for every $e=xy\in R$, we consider the clique $Y_e=\{w_e,x,y\}$ of size three and observe that these cliques cover the edges incident to the vertices $w_e$ and the edges of $R$. Summarizing, we obtain that 
 the family of $k+n+|R|$ cliques $\{C_1',\ldots,C_k'\}\cup\{X_x\mid x\in V(G)\}\cup\{Y_e\mid e\in R\}$ cover every edge of $G'$. Thus, $G'$ admits an edge clique cover of size 
 $k+n+|R|=\alpha(G)+k'$, that is, $(G',k')$ is a yes-instance of  \probECCISshort. 
 
 For the opposite direction, assume that there is a family $\mathcal{F}$ of $\alpha(G')+k'=|R|+n+k$ cliques of $G'$ covering every edge of $G'$. Observe that for every $x\in V(G)$, the edge $xu_x$ can be covered by the unique clique $X_x=\{x,u_x\}$. This means that $\mathcal{F}$ contains $n$ cliques of this type. For every $e\in R$, $\mathcal{F}$ should include a clique $Y_e$ containing $w_e$ and at least one of the endpoints of $e$. Thus, we have at least $|R|$ cliques containing the vertices $w_e$ for $e\in R$. Notice that these cliques do not cover the edges of $B$. Therefore, $\mathcal{F}$ should contain  $r\leq (\alpha(G')+k')-n-|R|=k$  cliques $C_1',\ldots,C_r'$ covering the edges of $B$.  Furthermore, for every $i\in\{1,\ldots,r\}$, $C_i'$ is a clique of $G'[V(G)\cup\{v\}]$. Let $C_i=C_i'\setminus\{v\}$ for $i\in\{1,\ldots,r\}$ and observe that $C_i$ is a clique of $G$. Then the cliques $C_1,\ldots,C_r$ cover the edges of $B$. Thus, $(G,B,k)$ is a yes-instance of \probAECCshort. 
 \end{claimproof}
 This concludes the proof of the theorem.
 \end{proof}
 
 As a corollary of  \Cref{lemma:annotated_to_coloring} and \Cref{thm:annotated_k_two}, we obtain that when $k=0$ or $k=1$,  \probECCISshort can be solved in polynomial time. Thus, combining \Cref{thm:hard} and \Cref{cor:poly-zero-one}, we get the computational complexity dichotomy for the problem with respect to $k$.
 
 \ECCISPoly*

\begin{proof}
Given an instance $(G,k)$ of \probECCISshort for $k\leq 1$, the algorithm from \Cref{{lemma:ecc_to_annotated}} in polynomial time reduces the problem to solving at most two instances $(G',B,k')$ of
\probAECCshort where $k'\leq 2$. Then we apply the algorithm from \Cref{thm:annotated_k_two} to the obtained instances and  solve the problem in polynomial time.
\end{proof}

 Using the result of Orlin~\cite{orlin:cc} about the \classNP-completeness of the \textsc{Edge Biclique Cover} problem, it is easy to see that \probECCshort is \classNP-complete on co-bipartite graphs. This result immediately implies  that    \probECCISshort is \classNP-complete on co-bipartite graphs. We provide the proof for completeness  below.
 
 \begin{observation}\label{obs:NPc-cobip}
\probECCISshort is \classNP-complete on co-bipartite graphs.
 \end{observation}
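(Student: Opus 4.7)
The plan is to derive this observation directly from the stated fact that \probECCshort is \classNP-complete on co-bipartite graphs (which itself follows from Orlin's hardness of \textsc{Edge Biclique Cover} by complementation). The key structural ingredient is that for any co-bipartite graph $G$, we have $\alpha(G)\le 2$: an independent set in $G$ is a clique in the bipartite complement $\overline{G}$, and bipartite graphs are triangle-free. Crucially, $\alpha(G)$ is trivially computable in polynomial time on co-bipartite graphs (just check whether the two cliques of the bipartition are joined by all possible edges to decide between $\alpha(G)=1$ and $\alpha(G)=2$), so the above-guarantee target $\alpha(G)+k$ differs from the absolute target by a polynomially-computable constant.

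Given this, I would give a polynomial-time many-one reduction from \probECCshort on co-bipartite graphs to \probECCISshort on co-bipartite graphs as follows. Start with an instance $(G,k')$ of \probECCshort where $G$ is co-bipartite. Discard any isolated vertices of $G$ (they contribute nothing to any edge clique cover), obtaining a co-bipartite graph $G'$ without isolated vertices; this is still co-bipartite as the property is hereditary. Compute $\alpha(G')\in\{1,2\}$ in polynomial time. If $k'<\alpha(G')$, output a fixed trivial no-instance of \probECCISshort (justified by \Cref{obs:lb}). Otherwise, output $(G',k'-\alpha(G'))$. By definition, $(G',k'-\alpha(G'))$ is a yes-instance of \probECCISshort if and only if $\ecc(G')\le k'$, which in turn is equivalent to $(G,k')$ being a yes-instance of \probECCshort. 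Since $k'-\alpha(G')\ge 0$ and co-bipartiteness is preserved, this is the desired reduction.

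Membership in \classNP\ is immediate on co-bipartite graphs: the certificate is an edge clique cover, whose size can be compared against $\alpha(G)+k$ in polynomial time using the polynomial-time computation of $\alpha(G)$ above. There is no real obstacle here; the entire argument is a one-line corollary of the $\alpha\le 2$ bound on co-bipartite graphs, which is why the authors present it as an observation rather than a theorem.
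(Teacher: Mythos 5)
Your proof is correct, but it takes a genuinely different route from the paper's. The paper gives a self-contained many-one reduction from \textsc{Edge Biclique Cover}: given a bipartite graph with sides $V_1,V_2$, it completes both sides into cliques and attaches two apex vertices $v_1,v_2$ adjacent to $V_1$ and $V_2$ respectively; this forces $\alpha(G')=2$, and any cover of size $\alpha(G')+k$ must spend two cliques on the edges incident to $v_1$ and $v_2$ (which cover no cross edge), leaving at most $k$ cliques---necessarily bicliques of the original graph---for the cross edges. You instead factor the argument through the \classNP-completeness of \probECCshort on co-bipartite graphs and observe that, since $\alpha\le 2$ is trivially computable on this class, the above-$\alpha$ and absolute versions are polynomial-time interreducible by shifting the budget by $\alpha(G')$. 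Your route is more modular and makes transparent why the paper itself calls the implication ``immediate''; the paper's route has the advantage of being self-contained. The one caveat concerns your parenthetical claim that \classNP-hardness of \probECCshort on co-bipartite graphs ``follows from Orlin by complementation'': this is too quick as stated, because completing the two sides of a bipartite graph into cliques creates intra-side edges that must also be covered, so the biclique cover number of $G$ and the edge clique cover number of the completed graph agree only up to an additive $2$, and one needs the apex-vertex gadget (or an equivalent device) to make the reduction exact. Since the paper asserts this very fact without proof in the sentence immediately preceding the observation, you are entitled to cite it, but be aware that a full proof of it amounts to essentially the construction the paper uses for the observation itself.
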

 
 \begin{proof}
 A set of vertices of a bipartite graph is a \emph{biclique} if it induces a complete bipartite graph. We say that a biclique covers an edge if this edge is in the subgraph induced by the biclique. The task of \textsc{Edge Biclique Cover} is, given a bipartite graph $G$ and a integer $k\geq 0$, to decide whether there is a family of at most $k$ bicliques such that each edge is covered by  at least one biclique. This problem is well-known to be \classNP-complete~\cite{orlin:cc}.  
 Let $(G,k)$ be an instance of \textsc{Edge Biclique Cover} and denote by $(V_1,V_2)$ the bipartition of $V(G)$. We construct $G'$ as follows.
 
 \begin{itemize}
\item Construct a copy of $G$.
\item Make the sets of vertices $V_1$ and $V_2$ cliques.
\item Construct vertices $v_1,v_2$, make $v_1$ adjacent to the vertices of $V_1$, and make $v_2$ adjacent to the vertices of $V_2$.
\end{itemize} 
 
It is straightforward to see that $G$ is co-bipartite.  Because $v_1$ and $v_2$ are not adjacent, $\alpha(G')=2$. We show that $(G,k)$ is a yes-instance of  \textsc{Edge Biclique Cover} if and only if $(G',k')$ is a yes-instance of \probECCISshort.

Suppose that bicliques $B_1,\ldots,B_k$ cover all the edges of $G$. By the constrution of $G'$, $B_1,\ldots,B_k$ are cliques of $G'$. Then together with the cliques $V_1\cup\{v_1\}$ and $V_2\cup\{v_2\}$, we obtain $k+2=k+\alpha(G)$ cliques covering the edges of $G$. For the opposite direction, assume that there is a family $\mathcal{F}$ of $\alpha(G')+k=2+k$ cliques of $G'$ covering the edges of $G'$. Observe that $\mathcal{F}$ contains at least two distinct cliques $X_1$ and $X_2$ covering the edges incident to $v_1$ and $v_2$. These cliques do not cover any edge of $G$.  This implies that $\mathcal{F}$ contains 
$r\leq (\alpha(G')+k)-2=k$ cliques $C_1,\ldots,C_r$ covering the edges of $G$. It remains to notice that  $C_1,\ldots,C_r$ are bicliques of $G$ covering $E(G)$. Thus, $(G,k)$ is a yes-instance of   \textsc{Edge Biclique Cover}. This concludes the proof. 
\end{proof}

Because \probECCISshort is \classNP-complete on co-bipartite graphs, we have that for every integer $p\geq 2$, \classNP-complete on graphs $G$ with $\alpha(G)\leq p$. It is trivial to see that if $\alpha(G)=1$ then $G$ is a complete graph and, therefore, the edges of $G$ can be covered by a single clique. 

\Cref{thm:hard} and \Cref{obs:NPc-cobip} imply that \probECCISshort is \classParaNP-hard when parameterized by either $k$ or the independence number. Because \probECCshort  is \classFPT when parameterized by the number of cliques in a solution~\cite{GrammGHN08}, it is straightforward to make the following observation.

\begin{observation}\label{obs:fpt-k-alpha}
\probECCISshort is \classFPT when parameterized by both $k$ and the independence number of the input graph.
\end{observation}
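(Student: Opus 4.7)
The observation follows from a direct application of Gramm, Guo, Hüffner, and Niedermeier's FPT algorithm~\cite{GrammGHN08} for \probECCshort parameterized by the size of the cover. By definition, $(G,k)$ is a yes-instance of \probECCISshort if and only if $\ecc(G) \le \alpha(G) + k$, and my plan exploits this equivalence: invoking the algorithm of~\cite{GrammGHN08} with target cover size $\alpha(G) + k$ settles the question in $f(\alpha(G)+k)\cdot n^{\Oh(1)}$ time for some computable $f$, which is FPT in the combined parameter $k + \alpha(G)$.

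The only technicality is that $\alpha(G)$ is not provided in the input. I would handle this by iterating the target cover size $t = 1, 2, \ldots$: at each $t$ I call the algorithm of~\cite{GrammGHN08} on $(G,t)$, and at the first $t$ for which it returns an edge clique cover $\mathcal{C}$ we have $\ecc(G) = t$. Feeding $\mathcal{C}$ into \Cref{lem:iscover} then computes $\alpha(G)$ in $2^{\Oh(t)} \cdot n^{\Oh(1)}$ time, and we return yes iff $t \le \alpha(G) + k$; correctness is immediate because $t$ is exactly $\ecc(G)$ at that point.

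On any yes-instance the iteration halts at $t \le \alpha(G) + k$, so the total running time is $\Oh(t \cdot f(t) \cdot n^{\Oh(1)} + 2^{t} \cdot n^{\Oh(1)})$, which is FPT in $k + \alpha(G)$. The main subtlety I expect to need a brief comment on is the no-case, where $\ecc(G)$ could a priori exceed $\alpha(G) + k$; this is addressed by treating $\alpha(G)$ as part of the parameter (standard for above-guarantee parameterizations), so that the iteration may be capped at $t = \alpha(G) + k$ and no-instances are detected inside the same FPT running-time bound.
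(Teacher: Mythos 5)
Your overall plan (iterate the budget for the Gramm et al.\ algorithm and, once a cover is found, use \Cref{lem:iscover} to recover $\alpha(G)$) is the right first move, but the way you dispose of the no-case is a genuine gap, not a technicality. You propose to ``cap the iteration at $t=\alpha(G)+k$'', justified by ``treating $\alpha(G)$ as part of the parameter''. Being part of the parameter does not make $\alpha(G)$ available to the algorithm: the input is still just $(G,k)$, and the paper stresses exactly this point (computing $\alpha(G)$ is a central difficulty; cf.\ the discussion around \Cref{lem:isecp}). Without a computable cap, on a no-instance your loop keeps failing until $t=\ecc(G)$, and $\ecc(G)$ is not bounded by any function of $k+\alpha(G)$ --- for instance, co-bipartite graphs have $\alpha(G)=2$ while $\ecc(G)$ can grow with $n$, which is the whole point of \Cref{obs:NPc-cobip}. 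So on no-instances your total running time is $f(\ecc(G))\cdot\polyn$, which is not \classFPT in $k+\alpha(G)$; and you cannot test ``$\alpha(G)\le t-k$'' on the fly either, since that is the \classW{1}-hard \pname{Independent Set} question.

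The missing ingredient is a polynomial-time computable proxy for $\alpha(G)$ that is accurate enough on yes-instances. Let $s$ be the number of pairwise distinct simplicial cliques of $G$, computable in polynomial time by the algorithm of Kloks, Kratsch, and M\"{u}ller. Choosing one simplicial vertex per simplicial clique yields an independent set (two simplicial vertices with distinct simplicial cliques are non-adjacent), so $s\le\alpha(G)$ always; and by \Cref{lem:nonsimpl}(i) (equivalently, \Cref{claim:ecc_simplicial_size}), on a yes-instance at least $\alpha(G)-k$ vertices of a maximum independent set are simplicial and lie in pairwise distinct simplicial cliques, so $s\ge\alpha(G)-k$. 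Hence every yes-instance satisfies $\ecc(G)\le\alpha(G)+k\le s+2k$, and $s+2k\le\alpha(G)+2k$ is bounded by the parameter. Now your iteration can legitimately be capped at $t=s+2k$: if the Gramm et al.\ algorithm fails for this budget, report a no-instance (correct, since a yes-instance would have succeeded); otherwise the first successful budget equals $\ecc(G)$, you apply \Cref{lem:iscover} to the returned cover to get $\alpha(G)$ in $2^{\Oh(s+k)}\cdot\polyn$ time, and compare. With this cap the whole procedure runs in $f(k+\alpha(G))\cdot\polyn$ time on all instances, which is what the observation claims.
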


\section{Covering Sparse Graphs}\label{sec:FPTbyomega}

In this section, we design \classFPT algorithms for \probECCISshort for several well-studied graph classes, where \probECCshort or \pname{Independent Set} remain \classNP-complete.
We build on \Cref{lemma:ecc_to_annotated} (and \Cref{cor:coloring_to_annotated_faster}), that allow us to transform parameterized algorithms for \probAECCshort (and \pname{Independent Set}) on a graph class $\mathcal{G}$, into parameterized algorithms for \probECCISshort on $\mathcal{G}$.
The only requirement for $\mathcal{G}$ is being closed under deletion of simplicial vertices.

In the first two subsections of this section, we give parameterized algorithms for \probAECCshort on $K_{r}$-free, $d$-degenerate and $H$-minor-free graphs.
In the third subsection, we combine these algorithms into corresponding algorithms for \probECCISshort, proving \Cref{cor:sparse}.
Finally, we complement these results with lower bounds based on the Exponential Time Hypothesis.

\subsection{Bounded Clique Number}

\probAECCshort on $K_r$-free graphs can alternatively be seen as \probAECCshort parameterized by $k+\omega$, where $\omega$ is the maximum clique size in $G$.
Naturally, we cannot have more than $\binom{\omega}{2}\cdot k$ edges in $B$ in a yes-instance of \probAECCshort, which hints that \probAECCshort should be \classFPT with respect to $k+\omega$.
There is, however, a crucial obstacle: computing $\omega$ is \classNP-hard and \classW{1}-hard when parameterized by $\omega$.

Fortunately, independent sets of size more than $k$ in $G$ will guarantee that $(G,B,k)$ is a no-instance.
This allows us to use the classical Ramsey's theorem~\cite{Ramsey_1930} to either conclude that the instance is negative or to find $\omega$. Formally, we use the following algorithmic result that is derived from the Ramsey number upper bound given by Erd\H{o}s and Szekeres~\cite{CM_1935__2__463_0}.

\begin{proposition}[Adaptation of a proof by Erd\H{o}s and Szekeres~\cite{CM_1935__2__463_0}]\label{lemma:ramsey_clique_or_is}	
There is a polynomial-time algorithm that, given $n$-vertex graph $G$ and two integers $p,q$ such that $n\ge \binom{p+q-2}{p-1}$, finds either a clique of size $p$ or an independent set of size $q$ in $G$.
\end{proposition}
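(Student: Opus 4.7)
The plan is to turn the classical Erdős--Szekeres proof of the bound $R(p,q)\le\binom{p+q-2}{p-1}$ into a recursive procedure that, besides certifying existence, actually constructs the desired clique or independent set. The recursion is driven by a single vertex choice and always descends into a carefully chosen subgraph, so the whole computation is a single path (no branching), which is what will make the running time polynomial.

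The base cases are $p=1$ or $q=1$: a single vertex is both a clique of size $1$ and an independent set of size $1$, and $\binom{p+q-2}{p-1}=1$ in either case, so any non-empty input works. For the inductive step, pick an arbitrary vertex $v\in V(G)$, and partition $V(G)\setminus\{v\}$ into $A=N_G(v)$ and $B=V(G)\setminus N_G[v]$. By Pascal's identity,
\[
\binom{p+q-2}{p-1}-1 \;=\; \binom{p+q-3}{p-2}+\binom{p+q-3}{p-1}-1,
\]
so at least one of $|A|\ge \binom{(p-1)+q-2}{(p-1)-1}$ or $|B|\ge \binom{p+(q-1)-2}{p-1}$ must hold. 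In the first case, recursively run the algorithm on $G[A]$ with parameters $(p-1,q)$: it returns either a clique of size $p-1$ in $G[A]$, which together with $v$ yields a clique of size $p$ in $G$, or an independent set of size $q$ in $G[A]\subseteq G$. In the second case, recurse on $G[B]$ with parameters $(p,q-1)$: either it returns the desired $p$-clique directly, or an independent set of size $q-1$ in $G[B]$, to which $v$ can be appended (since $v$ has no neighbor in $B$) to form an independent set of size $q$ in $G$.

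For correctness we only need to check that the parameter condition is maintained in the recursive call, which is exactly the choice of the side we descend to. For termination, each recursive call decreases $p+q$ by $1$ and strictly shrinks the vertex set, so the depth of recursion is at most $p+q-1$. Since the condition $n\ge\binom{p+q-2}{p-1}$ implies $p,q\le n+1$, the recursion depth is $O(n)$, and each step performs only $O(n^2)$ work (computing the neighborhood partition of the chosen vertex), giving an overall polynomial running time.

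There is no real obstacle here; the only subtlety is making sure the algorithm picks the ``correct'' side at each step, and this is forced by the Pascal identity above so no guessing or branching is required. In particular, the procedure never makes two recursive calls from the same node, which is what keeps the running time polynomial rather than exponential in $p+q$.
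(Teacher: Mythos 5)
Your proposal is correct and is exactly the argument the paper intends: the proposition is stated as an adaptation of the Erd\H{o}s--Szekeres induction, and your constructive single-path recursion (pick $v$, use Pascal's identity to descend into $N_G(v)$ with $(p-1,q)$ or into the non-neighborhood with $(p,q-1)$) is precisely that adaptation, with the base cases, the size accounting, and the polynomial running time all handled correctly.
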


We are ready to prove that \probAECCshort is \classFPT with respect to $k+\omega$.

\begin{lemma}\label{thm:ECCclique}
	\probAECCshort is fixed-parameter tractable with respect to the combined parameter $k+\omega$.
	The running time of the corresponding algorithm is $2^{\binom{\omega}{2}\cdot k}\cdot \polyn$.
	The value of $\omega$ needs not to be given in the input.
\end{lemma}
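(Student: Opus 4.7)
My approach combines three ingredients: (i) restrict attention to the induced subgraph $G[V_B]$ where $V_B\subseteq V(G)$ is the set of endpoints of edges in $B$, (ii) use \Cref{lemma:ramsey_clique_or_is} iteratively as a win--win to certify a lower bound on $\omega$ (or reject the instance), and (iii) reduce to \textsc{$k$-Coloring} on a graph with $|B|$ vertices via \Cref{lemma:annotated_to_coloring}, which is solved by the textbook $2^{|B|}\cdot\polyn$ brute force (e.g., Björklund--Husfeldt inclusion--exclusion or DP over subsets). Since intersecting any clique with $V_B$ preserves precisely the edges of $B$ it covers, one may assume each solution clique is a subset of $V_B$ and work inside $G[V_B]$. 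In any yes-instance the $k$ cliques have size at most $\omega(G[V_B])\le\omega$ and together meet every vertex of $V_B$, giving the key bounds $|V_B|\le k\omega$, $|B|\le k\binom{\omega}{2}$, and $\alpha(G[V_B])\le k$; in particular, exhibiting an independent set of size $k+1$ in $G[V_B]$ rejects the instance.

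\textbf{Preprocessing via Ramsey.} To cope with the unknown $\omega$, I maintain a certified clique $K_p$ in $G[V_B]$, starting with $p:=1$, and iterate while $|B|>k\binom{p}{2}$ (otherwise the brute-force budget $2^{k\binom{p}{2}}$ already suffices). If $|V_B|\ge\binom{p+k}{p}$, apply \Cref{lemma:ramsey_clique_or_is} to $G[V_B]$ with parameters $p+1$ and $k+1$: an independent set of size $k+1$ triggers rejection, while a returned $K_{p+1}$ lets us increment $p$. Otherwise $|V_B|<\binom{p+k}{p}$ is bounded by a function of $p$ and $k$, so an exhaustive $|V_B|^{p+1}$-time search either finds a $K_{p+1}$ (increment $p$) or certifies $\omega(G[V_B])\le p$; combined with $|B|>k\binom{p}{2}$, the latter forces a rejection. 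Each iteration strictly increases $p$, and $p\le\omega$ remains invariant (it is witnessed by an actual clique), so the loop terminates in at most $\omega$ rounds with either rejection or a state where $|B|\le k\binom{p}{2}$ for some $p\le\omega$.

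\textbf{Finishing, running time, and main obstacle.} Once the loop terminates without rejecting, invoke \Cref{lemma:annotated_to_coloring} to obtain an equivalent \textsc{$k$-Coloring} instance on a graph with $|B|$ vertices, solve it in $2^{|B|}\cdot\polyn$ time by brute force, and convert the coloring back to $k$ cliques in $G$ covering $B$. The total time is dominated by this last step and bounded by $2^{k\binom{p}{2}}\cdot\polyn\le 2^{k\binom{\omega}{2}}\cdot\polyn$, as claimed. The hard part will be the ``gap'' regime where $|V_B|$ is too small for \Cref{lemma:ramsey_clique_or_is} to fire but too large for the current clique-size budget: handling it by the exhaustive $K_{p+1}$-search described above is essential, and the fact that its cost $|V_B|^{p+1}$ remains a function of $p+k\le\omega+k$ relies precisely on the Ramsey-derived bound $|V_B|<\binom{p+k}{p}$, tying the two modes of the preprocessing together so that the algorithm stays fixed-parameter tractable in $k+\omega$.
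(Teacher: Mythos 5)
Your proposal is correct and follows essentially the same strategy as the paper's proof: a Ramsey win--win on the graph induced by the endpoints of $B$ to cope with the unknown $\omega$ (rejecting upon an independent set of size $k+1$), the bound $|B|\le k\binom{\omega}{2}$, and a $2^{|B|}\cdot\polyn$ endgame. The differences are implementation-level only: the paper applies \Cref{lemma:ramsey_clique_or_is} once, then computes $\omega$ exactly by brute force and finishes via \textsc{Set Cover} solved by subset convolution, whereas you interleave Ramsey with an exhaustive clique search, stop as soon as $|B|\le k\binom{p}{2}$ for some certified $p\le\omega$ (never needing $\omega$ exactly), and finish via the \textsc{$k$-Coloring} reduction of \Cref{lemma:annotated_to_coloring}; both routes give the same $2^{\binom{\omega}{2}\cdot k}\cdot\polyn$ bound after the same arithmetic (the paper's bound $\binom{a+b}{b}\le 2^{ab/2}$) that you leave implicit.
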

\begin{proof}
	We present an algorithm that finds a solution to the given instance $(G,B,k)$ of \probAECCshort.
	We assume that $G, B$ are not empty, as otherwise the instance is trivial.
	If $k \leq 2$, we use the algorithm of \Cref{thm:annotated_k_two} as a subroutine to solve the instance in polynomial time.
	We explain how the algorithm deals with the general case $k\ge 3$ in several steps below.
				
	\medskip\noindent{\bf Triangle-free check.} The algorithm first checks whether $G$ is triangle-free, that is, $\omega>2$.
	This is performed in polynomial time.
	If $G$ has $\omega=2$, then no two edges can belong to the same clique in $G$, so optimal cover of edges in $B$ consists of $|B|$ cliques.
	Hence, if $G$ turns out to be triangle-free, then the algorithm reports that $(G,B,k)$ is a yes-instance if $|B|\le k$ and no-instance otherwise.
	
	\medskip\noindent{\bf Clique or independent set.}
	If there is a vertex $v \in V(G)$ without incident edges in $B$, 
	then there exists an optimal solution where none of the cliques contain $v$.
	The algorithm removes all such vertices from $G$.
	Now any solution to $(G,B,k)$ should contain each vertex of $G$ in at least one of the cliques.
	The algorithm then finds the largest $r$ such that $n\ge\binom{k+r-1}{k}$ and $n< \binom{k+r}{k}$.
	Substituting $r$ and $k+1$ in \Cref{lemma:ramsey_clique_or_is}, the algorithm finds either an independent set of size $k+1$ in $G$ or a clique of size $r$ in $G$ in polynomial time.
	An independent set of size $k+1$ guarantees that $(G,B,k)$ is a no-instance, as it is impossible to cover each vertex of $G$ by using at most $k$ cliques.
	With this outcome, our algorithm reports a no-instance and stops.
	In the complementary case, we have that $\omega \ge r$ and $n<\binom{k+\omega}{k}$.
	
	\medskip\noindent{\bf Computing $\omega$.}
	For an integer $p \ge 3$, $\omega < p$ can be verified in time $\Oh(p^2\cdot n^p)$.
	As the next step, the algorithm computes the exact value of $\omega$ by performing these checks iteratively for consecutive integer values starting from $\max\{4,r+1\}$ up to $\omega+1$.
	This routine takes $n^{\omega+\mathcal{O}(1)}$ running time.
	To give an upper bound on the running time in terms of $k$ and $\omega$, we need the following claim.
	
	\begin{claim}\label{lemma:big_math_lemma}
		If $a, b$ are positive integers with $a,b\ge 3$, then $\binom{a+b}{b}\le 2^{\frac{ab}{2}}$.
	\end{claim}
	\begin{claimproof}
		Without loss of generality, $a\le b$.
		If $a\ge 4$ or $b \ge 6$, we have $$ab=(a-2)(b-2)+2(a+b)-4\ge 2(a+b).$$
		Then $\binom{a+b}{b}\le 2^{a+b}\le 2^{\frac{ab}{2}}$ holds.
		The only values of $(a,b)$ that do not fit within these lower bounds are $(3,3)$, $(3,4)$, $(3,5)$.
		All three of these value pairs satisfy the inequality.
	\end{claimproof}
	
	By substituting $k,\omega\ge 3$ in \Cref{lemma:big_math_lemma}, we obtain
	\[
	n^{\omega -1}< \binom{k+\omega}{\omega}^{\omega-1}<\left(2^{\frac{k\omega}{2}}\right)^{\omega-1}=2^{\binom{\omega}{2}\cdot k}.
	\]
	Hence, it takes time $2^{\binom{\omega}{2}\cdot k}\cdot\polyn$ for the algorithm to compute the exact value of $\omega$.
	
	\medskip\noindent{\bf Solution via set cover.}
	No clique in $G$ can cover more than $\binom{\omega}{2}$ edges in $B$.
	Thus, if $|B|> \binom{\omega}{2}\cdot k$, the algorithm reports correctly that $(G,B,k)$ is a no-instance.
	Otherwise, the algorithm reduces $(G,B,k)$ to an instance $(B, \mathcal{F}, k)$ of \textsc{Set Cover}.
	In this instance, $B$ is treated as the universe set.
	Subset family $\mathcal{F}$ consists of subsets of $B$, and a subset $B'\subset B$ belongs to $\mathcal{F}$ if and only if all edges of $B'$ belong simultaneously to the same clique in $G$.
	Clearly, $(G,B,k)$ is a yes-instance if and only if $B$ can be covered by at most $k$ sets from $\mathcal{F}$, i.e.\ $(B,\mathcal{F},k)$ is a yes-instance of \textsc{Set Cover}.

	The \textsc{Set Cover} instance is constructed in time $2^{|B|}\cdot\polyn$ by the algorithm.
	Using the celebrated subset convolution algorithm for \textsc{Set Cover} due to Bj\"{o}rklund, Husfeldt and Koivisto  \cite{Bjrklund2009} as a subroutine, our algorithm finally decides $(B,\mathcal{F},k)$ within the same $2^{|B|}\cdot\polyn$ running time bound. The overall running time is therefore dominated by $2^{\binom{\omega}{2}\cdot k}\cdot\polyn$.
\end{proof}

\subsection{Degenerate and Minor-free Graphs}

We move on to the setting of low degeneracy.
Recall that \probECCshort is \classNP-complete on graphs of maximum degree $6$ \cite{hoover1992complexity} and planar graphs \cite{chang2001tree}, hence on $5$-degenerate graphs.
We present an algorithm for \probAECCshort that runs in $2^{\Oh(dk)}\cdot\polyn$ time on $d$-degenerate graphs.
Notably, on $2$-degenerate graphs its running time is polynomial. 
 
\begin{lemma}\label{lemma:aecc_on_degenerate}
	For $d\ge 2$, \probAECCshort on $d$-degenerate graphs admits an algorithm running in $c_{d-1}^{k}\cdot \polyn$ time, where $c_{p}\le 3^{p/3}$ is the maximum  possible number of maximal cliques in a graph with $p$ vertices.
\end{lemma}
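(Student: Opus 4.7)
The strategy is a bounded search tree algorithm driven by the $d$-degeneracy, in the spirit of known parameterized algorithms for \probECCshort on sparse graphs. Two simple observations drive it. First, for any instance $(G,B',k')$ of \probAECCshort and any clique $C$ in a hypothetical solution, the intersection $C\cap V(B')$ is again a clique and covers exactly the same $B'$-edges, so we may restrict the search to cliques inside $V(B')$. Since $d$-degeneracy is hereditary, $G[V(B')]$ is also $d$-degenerate and thus contains a vertex $u$ with $|N_{G[V(B')]}(u)|\le d$. Second, any clique of a solution that contains such a minimum-degree vertex $u$ lies inside $\{u\}\cup N_{G[V(B')]}(u)$, because every other vertex of that clique must be a $V(B')$-neighbor of $u$.

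The recursive procedure $\textsc{Solve}(B',k')$ operates as follows. Return success if $B'=\emptyset$ and failure if $k'=0$. Otherwise, pick a minimum-degree vertex $u$ of $G[V(B')]$ together with any edge $uv\in B'$ incident to $u$, and set $W:=(N_G(u)\cap N_G(v)\cap V(B'))\setminus\{v\}$. Since $v\in N_{G[V(B')]}(u)$ and $v\notin N_G(v)$, we have $|W|\le d-1$. Enumerate every maximal clique $C'$ of $G[W]$ via Bron--Kerbosch. For each $C'$, form $C:=\{u,v\}\cup C'$, a clique in $G$ covering $uv$, and recursively call $\textsc{Solve}(B'\setminus\mathrm{cov}(C),k'-1)$; if any branch succeeds, return $C$ together with the returned cliques.

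For correctness, let $C^\star$ denote the clique of an optimal solution that covers $uv$. By the $V(B')$-restriction and the second observation, $C^\star\setminus\{u,v\}\subseteq W$; extending it to a maximal clique $D$ of $G[W]$ yields a replacement $\{u,v\}\cup D$ that still covers $uv$ and every other $B'$-edge previously covered by $C^\star$, so some branch of the algorithm matches. The branching factor at each node is at most the number of maximal cliques of $G[W]$, which is at most $c_{|W|}\le c_{d-1}$; the recursion depth is at most $k$; and each node performs only polynomial work, including the maximal-clique enumeration on at most $d-1$ vertices. This yields total running time $c_{d-1}^k\cdot\polyn$, and the Moon--Moser bound $c_p\le 3^{p/3}$ gives the stated estimate.

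The main subtlety is that $V(B')$, and hence the graph on which we branch, shrinks between recursive calls, but the $V(B')$-intersection reduction applies anew at each call; a fresh minimum-degree vertex is selected every time, and no global ordering of $G$ needs to be maintained. The slight care needed to phrase the argument so that the bound on $|W|$ is $d-1$ rather than $d$ is exactly what gives the branching factor $c_{d-1}$ rather than the weaker $c_d$.
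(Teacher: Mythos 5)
Your proposal is correct and takes essentially the same approach as the paper: both pick a minimum-degree vertex of the graph induced on the endpoints of the uncovered edges, select an uncovered edge incident to it, and branch over the maximal cliques of the common neighborhood (of size at most $d-1$), with the Moon--Moser bound giving the branching factor $c_{d-1}$. The only cosmetic differences are the swapped roles of $u$ and $v$ and your explicit statement of the restriction-to-$V(B')$ step, which the paper handles by deleting vertices not incident to $B$ at the start of each call.
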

\begin{proof}
	We present a simple recursive branching algorithm.
	Given an instance $(G,B,k)$ of \probAECCshort, the algorithm first removes all vertices from $G$ that are not endpoints of $B$.
	If $k=0$, the algorithm reports that $(G,B,k)$ is a yes-instance if $G$ has no vertices and a no-instance otherwise, and stops.
	
	If $k\ge 1$, the algorithm takes a vertex $v$ of minimum degree in $G$, $\deg_G(v)\le d$.
	Then it takes any $u \in N_G(v)$ such that $uv\in B$.
	Note that any clique that contains $uv$ has all of its other vertices in $N_G(u)\cap N_G(v)$, denote this set by $S$, $|S|\le d-1$.
	Any \emph{maximal} clique in $G$ that contains $uv$ is also a maximal clique in $G[S]$.
	Hence, there exists an optimal solution to $(G,B,k)$ that contains a clique with the vertex set $C\cup\{u,v\}$ for some maximal clique $C$ in $G[S]$.
	This implies the only branching rule of the algorithm: For each maximal clique $C$ in $G[S]$, run the algorithm recursively on the instance $(G,B-E(G[C\cup \{u,v\}]),k-1)$.
	
	The algorithm description is complete, its correctness follows from the discussion above.
	The number of branches in a single application of the branching rule is bounded by the maximum possible number of maximal cliques in a $(d-1)$-vertex graph, which we denote by $c_{d-1}$.
	By Moon and Moser's result \cite{Moon1965}, $c_p$ equals $3^{p/3}$, $4\cdot 3^{(p-4)/3}$ or $2\cdot 3^{(p-2)/3}$ for the remainders $0,1,2$ of $p$ modulo $3$ respectively.
	The inequality $c_p\le 3^{p/3}$ follows from $2<3^{2/3}$.
	Clearly, recursion depth is bounded by $k$, so the overall running time is bounded by $c_{d-1}^k\cdot\polyn$.
\end{proof}

Since planar graphs are $5$-degenerate, $4^k\cdot\polyn$ algorithm for \probAECCshort on planar graphs follows.
We show that for planar graphs, somewhat expectedly, this can be improved to a subexponential parameterized algorithm with $2^{\Oh(\sqrt{k})}\cdot \polyn$ running time.
For this, we could exploit two specific properties of planar graphs.
First, treewidth of a planar $n$-vertex graph is $\Oh(\sqrt{n})$. 
Second, any planar graph $G$ satisfies $\omega(G)\le 4$, and an instance with $n>4k$ is a trivial no-instance, as soon as every vertex is incident to an edge in $B$.
Consequently, the treewidth of $G$ is $\Oh(\sqrt{k})$ in non-trivial instances of \probAECCshort.
Then, dynamic programming over a tree decomposition of $G$ can be applied.

Scenario of such dynamic programming has been considered in the literature.
In \cite{BlanchetteKV12}, Blanchette, Kim and Vetta proved that \probECC can be solved in time $2^{\Oh(\rho)}\cdot\polyn$, given a nice tree decomposition of $G$ where maximum number of edges induced by a bag is bounded by $\rho$.
This dynamic programming can be easily adapted to work for \probAECCshort.
For tree decompositions of width $t$ of a $d$-degenerate graph, $\rho\le dt$.
We formulate this algorithmic result in the following proposition.

\begin{proposition}[Blanchette, Kim and Vetta \cite{BlanchetteKV12}]\label{prop:ecc_tw_dp}
	\probAECCshort admits an algorithm that, given a nice tree decomposition  of $G$ of width $t$,  works in $2^{\Oh(dt)}\cdot\polyn$ running time, where $d$ is the degeneracy of $G$.
\end{proposition}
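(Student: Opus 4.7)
The plan is to perform bottom-up dynamic programming over the given nice tree decomposition $\mathcal{T}$ of $G$, following the \probECCshort scheme of Blanchette, Kim, and Vetta~\cite{BlanchetteKV12} and adapting it so that only edges of $B$ are required to be covered. For each bag $X_v$ of $\mathcal{T}$, a DP state encodes the ``trace'' at $X_v$ of a partial solution: which cliques of the partial cover are still active (i.e.\ intersect $X_v$ and may be extended when further vertices appear), together with which edges of $B\cap E(G[X_v])$ have already been covered. Two partial solutions with identical traces at $X_v$ are interchangeable by the standard tree-decomposition property, so for each distinct trace we only need to store the minimum number of cliques used so far.

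The key bound on the state space uses $d$-degeneracy: every bag of size at most $t+1$ spans at most $d(t+1)=\Oh(dt)$ edges. Writing $\rho=\Oh(dt)$, the trace can be compressed into data of size $\Oh(\rho)$, effectively encoding for each edge of $B\cap E(G[X_v])$ which active clique covers it and how the at-most-$\rho$ active traces project onto $X_v$. This yields at most $2^{\Oh(dt)}$ distinct states per bag, matching the Blanchette-Kim-Vetta count. Transitions for the five node types of a nice tree decomposition are then routine: at a leaf only the empty state exists; an introduce-vertex node branches over how the new vertex joins existing active cliques or starts new ones; an introduce-edge node $uv\in B$ requires the state to contain an active trace covering $uv$; a forget-vertex node closes any active clique whose trace no longer touches the bag, incrementing the cost, and rejects states in which an uncovered $B$-edge incident to the forgotten vertex remains; and a join node matches child states on their common traces, summing costs minus the shared contribution. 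Each transition is implemented in $2^{\Oh(dt)}\cdot\polyn$ time by enumerating compatible state pairs, and the optimum is read off at the root.

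The main obstacle will be defining the trace so that (i) equivalence of partial solutions is correctly captured, collapsing the otherwise doubly-exponential space of families of subsets of $X_v$ down to $2^{\Oh(dt)}$, and (ii) the join transition combines the costs of the two children without double-counting cliques that straddle both subtrees. Both points are resolved by identifying each active clique with its vertex set inside $X_v$ and carefully accounting, at each join, for exactly those traces that appear on both sides: these shared cliques contribute once to the combined cost. With these ingredients the DP correctly computes $\aecc_B(G)$ within the claimed $2^{\Oh(dt)}\cdot\polyn$ running time.
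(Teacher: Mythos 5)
The first thing to note is that the paper does not actually prove this proposition: it is imported as a black box from Blanchette, Kim and Vetta~\cite{BlanchetteKV12}, and the paper's entire justification consists of (a) citing their $2^{\Oh(\rho)}\cdot\polyn$ dynamic program for \probECCshort over a nice tree decomposition in which every bag induces at most $\rho$ edges, (b) the one-line remark that this DP ``can be easily adapted'' to \probAECCshort, and (c) the observation that $\rho\le dt$ when $G$ is $d$-degenerate. Your sketch follows exactly this route --- same decomposition, same parameter $\rho=\Oh(dt)$ via degeneracy, same adaptation in which only the edges of $B$ trigger a coverage requirement at introduce-edge nodes --- so in that sense you are reconstructing the content of the cited result rather than proposing an alternative.

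That said, the one step you leave vague is precisely the step that carries all the difficulty, and as written it does not obviously give the claimed bound. You assert that the trace of a partial solution at a bag $X_v$ can be encoded in $\Oh(\rho)$ bits by recording ``for each edge of $B\cap E(G[X_v])$ which active clique covers it.'' Counted naively, an assignment of up to $\rho$ edges to up to $\rho$ active cliques gives $\rho^{\Oh(\rho)}=2^{\Oh(\rho\log\rho)}$ states, and a \emph{set} of active cliques of $G[X_v]$ (each clique with at least one edge being an arbitrary subset of the $\rho$ bag edges) could a priori range over $2^{2^{\rho}}$ possibilities. Getting down to $2^{\Oh(\rho)}$ requires a genuine normalization argument --- for instance, that an optimal partial solution may be assumed to have active cliques in some canonical form recoverable from $\Oh(\rho)$ bits of information, or a different encoding of the trace altogether --- and this is exactly what the paper outsources to~\cite{BlanchetteKV12} and what your sketch acknowledges as ``the main obstacle'' without resolving. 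You also do not justify that an active clique containing an already-forgotten vertex can never be extended by a future vertex (which follows from the fact that every edge of $G$ lives in some bag, so a forgotten vertex has no edges to not-yet-introduced vertices); this fact is needed to argue that the trace of a partial solution really is determined by its restriction to the bag. So your proposal is the right plan and matches the paper's (non-)proof in approach, but it would not stand on its own without either filling in the state-space compression or, as the paper does, leaning explicitly on the Blanchette--Kim--Vetta analysis for it.
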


While the discussion above mainly gives intuition and technical details for planar graphs, next we present in detail a more general algorithm that deals with $H$-minor-free graphs.
For some computable function $f$ and any finite graph family $H$, the algorithm works in time $f(H)^{\sqrt{k}}\cdot\polyn$ on $H$-minor-free graphs.
Note that $H$ remains unknown to the algorithm, and the algorithm exploits properties of $H$-minor-free graphs implicitly.

\begin{lemma}\label{lemma:aecc_on_h_minor_free}
	\probAECCshort admits an algorithm that runs in $f(H)^{\sqrt{k}}\cdot \polyn$ time on $H$-minor-free graphs, for some computable function $f$.
	The algorithm does not receive $H$ in the input.
\end{lemma}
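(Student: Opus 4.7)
The plan is to combine the bounded degeneracy of $H$-minor-free graphs with the Alon--Seymour--Thomas separator theorem, and to feed the resulting small-width tree decomposition into the dynamic programming of \Cref{prop:ecc_tw_dp}. The algorithm will not be told $H$; the $H$-dependence surfaces only in the analysis of the running time.

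First, I would preprocess $(G,B,k)$ by deleting every vertex of $G$ not incident to an edge of $B$, and then compute the degeneracy $d$ of $G$ in polynomial time. Since $G$ is $H$-minor-free, the Kostochka--Thomason bound gives $d \le f_1(H)$, hence $\omega(G) \le d+1 \le f_1(H)+1$. Every clique covers at most $\binom{\omega(G)}{2}$ edges of $B$, so if $|B| > \binom{\omega(G)}{2}\cdot k$ we report a no-instance; otherwise $n \le 2|B| = \Oh(f_1(H)^2\cdot k)$.

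Next, since $G$ is $H$-minor-free, it also excludes $K_{|V(H)|}$ as a minor, so the Alon--Seymour--Thomas theorem gives $\operatorname{tw}(G) = \Oh(|V(H)|^{3/2}\sqrt{n}) = \Oh(f_2(H)\sqrt{k})$. To exploit this algorithmically without knowing $H$, I would invoke a single-exponential constant-factor treewidth approximation (for instance, Korhonen's algorithm) on $G$ with increasing candidate widths $t = 1, 2, \ldots$; this finds a tree decomposition of width $t^* = \Oh(f_3(H)\sqrt{k})$ in total time $2^{\Oh(f_3(H)\sqrt{k})}\cdot\polyn$. Feeding this decomposition into \Cref{prop:ecc_tw_dp} then solves $(G,B,k)$ in time $2^{\Oh(d\cdot t^*)}\cdot\polyn = f(H)^{\sqrt{k}}\cdot\polyn$ for a suitable function $f$.

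The main obstacle is to make the argument oblivious to $H$: all the quantities used ($d$, $\omega$, and $\operatorname{tw}$) exist structurally because of $H$-minor-freeness but must be discovered algorithmically from $G$ alone. This is handled by replacing each structural bound by a polynomial-time computable (or approximable) counterpart---degeneracy is exactly computable, and single-exponential constant-factor treewidth approximations automatically produce width $\Oh(\sqrt{k})$ whenever $\operatorname{tw}(G)$ is already that small. The remaining steps are routine applications of the degeneracy bound (to control $\omega$ and hence $n$) together with the existing tree-decomposition dynamic programming from \Cref{prop:ecc_tw_dp}.
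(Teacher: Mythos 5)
Your proposal is correct and follows essentially the same route as the paper: delete vertices not incident to $B$, use degeneracy (bounded via Kostochka/Thomason for $H$-minor-free graphs) to argue $n=\Oh_H(k)$ in any yes-instance, invoke the $\operatorname{tw}(G)=\Oh_H(\sqrt{n})$ bound for minor-closed classes together with Korhonen's single-exponential $2$-approximation, and finish with the tree-decomposition dynamic programming of \Cref{prop:ecc_tw_dp}. The only cosmetic differences are that the paper rejects when $n>(d+1)k$ (each vertex must lie in a clique of size at most $\omega\le d+1$) rather than via $|B|>\binom{\omega}{2}k$ --- for which you should use the computable proxy $d+1$ in place of $\omega$ --- and it cites Grohe's treewidth bound rather than Alon--Seymour--Thomas; neither affects correctness.
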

\begin{proof}
	We start with the description of the algorithm.
	First, the algorithm removes all vertices from $G$ that are not incident to $B$.
	Second, the algorithm computes the degeneracy $d$ of $G$ in polynomial time in the standard manner.
	Next, if there are more than $(d+1)\cdot k$ vertices in $G$, the algorithm terminates and reports that $(G,B,k)$ is a no-instance.
	Otherwise it computes a nice tree decomposition $\mathcal{T}$ of $G$ of width $t$ that is at most $2\cdot \operatorname{tw}(G)+1$, using the $2$-approximation algorithm of Korhonen \cite{Korhonen2022}.
	Finally, our algorithm employs the dynamic programming algorithm of \Cref{prop:ecc_tw_dp} over $\mathcal{T}$ to find the optimal solution.
	
	Correctness of the algorithm is straightforward.
	We now prove that its running time is subexponential for fixed $H$, given that $G$ is $H$-minor-free.
	If the algorithm terminates at its third step, its running time is polynomial.
	Then observe that degeneracy $d$ of $G$ is bounded by $\Oh(h/\sqrt{\log h})$, where $h$ is the maximum size of a graph in $H$ \cite{Kostochka1984}.
	It follows that before the fourth step of the algorithm, $n\le (d+1)k\le g_1(H)\cdot k$ for a computable function $g_1$.
	Treewidth of $G$ is bounded by $g_2(H)\cdot \sqrt{n}$ \cite{Grohe2003}, for a computable function $g_2$.
	Hence $\mathcal{T}$ has width $t\le g_3(H)\cdot \sqrt{k}$ for a computable function $g_3$  after the fourth step of the algorithm.
	The $2$-approximation algorithm for treewidth works in $2^{\Oh(t)}\cdot \polyn$ time, and that gives an upper bound for the running time of the fourth step of the algorithm.
 	The final step requires $2^{\Oh(dt)}\cdot\polyn$ computation time.
 	
 	The final upper bound dominates all previous ones.
 	Since $dt\le g_2(H)\cdot g_3(H)\cdot \sqrt{k}$, overall running time of the algorithm is bounded by $f(H)^{\sqrt{k}}\cdot\polyn$, and order of $f$ is exponential in $g_2(H)\cdot g_3(H)$.
\end{proof}

\subsection{Combining the Results}

By combining  \Cref{lemma:ecc_to_annotated} or  \Cref{cor:coloring_to_annotated_faster} with \Cref{thm:ECCclique}, \Cref{lemma:aecc_on_degenerate}, and \Cref{lemma:aecc_on_h_minor_free},
we obtain 
\Cref{cor:sparse}, which we restate here. 

\ECCISsparse*

\begin{proof}
	Recall that \Cref{lemma:ecc_to_annotated} provides a reduction from \probECCISshort to at most $2k$ instances of \probAECCshort on induced subgraphs of $G$, each with parameter bounded by $2k$, that works in $4^k\cdot\polyn$ time.
	To obtain the first algorithm of the current theorem, pipeline the reduction of \Cref{lemma:ecc_to_annotated} with the algorithm of \Cref{thm:ECCclique} for \probAECCshort parameterized by $k+\omega$.
	On each instance the algorithm works in $2^{\binom{\omega}{2}\cdot 2k}\cdot\polyn$ time.
	This matches the bound given in the theorem statement and dominates the running time of the reduction in graphs with $\omega\ge 2$; other graphs are trivial instances.	
	
	For graphs of degeneracy $d\ge 3$, combine \Cref{lemma:ecc_to_annotated} with the algorithm of \Cref{lemma:aecc_on_degenerate} for parameterization by $k+d$.
	Since $c_{d-1}^2\ge 4$ for $d\ge 3$, $c_{d-1}^{2k}\cdot\polyn$ dominates the reduction running time, and running time of the combined algorithm is upper-bounded by $c_{d-1}^{2k}\cdot\polyn$.
	To obtain upper bound in the theorem statement, use $c_{d-1}\le 3^{(d-1)/3}$ with $3^{2/3}<2.081$.
	
	The desired running time for the remaining two algorithms is smaller than the running time of reduction in \Cref{lemma:ecc_to_annotated}, so we pipeline with \Cref{cor:coloring_to_annotated_faster} instead. Recall that \Cref{cor:coloring_to_annotated_faster} requires to compute the independence number on the constructed instances of \probAECCshort.
	
	For graphs of degeneracy $2$, \Cref{lemma:aecc_on_degenerate} provides a polynomial-time algorithm.
	However, we cannot efficiently compute $\alpha(G)$ on $2$-degenerate graphs under $\classP\neq\classNP$, since \pname{Independent Set} is \classNP-hard on cubic graphs.
	On the other hand, there is a straightforward $(1,2)$-branching\footnote{For a vertex $v$ of degree $2$ in $G$, there is an optimal independent set in $G$ that either contains $v$ or contains $N_G[v]$.}  for $2$-degenerate graphs that gives a $1.619^k\cdot\polyn$ algorithm for \pname{Independent Set}.
	Combining \Cref{cor:coloring_to_annotated_faster} with \Cref{lemma:aecc_on_degenerate} and this branching algorithm for \pname{Independent Set}, we obtain the third algorithm for \probECCISshort.
	
	To obtain a subexponential algorithm for \probECCISshort on $H$-minor-free graphs, we require subexponential algorithm for \pname{Independent Set} on $H$-minor-free graphs.
	While this is well-known, we give a simple proof here for completeness.
	The proof is an adaptation of the proof of \Cref{lemma:aecc_on_h_minor_free}.
	We show it in the following claim.
	
	\begin{claim}\label{claim:is_h_minor_free}
		\pname{Independent Set} on $H$-minor-free graphs is solvable in $f(H)^{\sqrt{k}}\cdot\polyn$ time for some computable function $f$.
	\end{claim}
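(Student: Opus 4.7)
\begin{claimproof}
The plan is to mirror the proof of \Cref{lemma:aecc_on_h_minor_free} almost verbatim, replacing the degree-based pruning step with a classical greedy observation for \pname{Independent Set} on degenerate graphs. Concretely, given an instance $(G,k)$ with $G$ being $H$-minor-free, the algorithm first computes the degeneracy $d$ of $G$ in polynomial time.

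Next, I would use the following short-circuit: any $d$-degenerate graph on $n$ vertices contains an independent set of size at least $\lceil n/(d+1)\rceil$, which can be found greedily by repeatedly picking a minimum-degree vertex and deleting its closed neighborhood. Hence, if $n \ge (d+1)\cdot k$, the algorithm outputs \textsf{yes} and returns such a greedy independent set of size at least $k$. Otherwise, we have $n \le (d+1)\cdot k$, and since $H$-minor-freeness forces $d \le \Oh(h/\sqrt{\log h})$ where $h$ is the maximum number of vertices in a graph in $H$~\cite{Kostochka1984}, this gives $n \le g_1(H)\cdot k$ for a computable function $g_1$.

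Now apply the same structural ingredients as in \Cref{lemma:aecc_on_h_minor_free}: the treewidth of an $H$-minor-free graph on $n$ vertices is at most $g_2(H)\cdot\sqrt{n}$~\cite{Grohe2003}, so $\operatorname{tw}(G)\le g_3(H)\cdot\sqrt{k}$ for a computable $g_3$. Using Korhonen's constant-factor approximation~\cite{Korhonen2022}, I would compute a tree decomposition of $G$ of width $t=\Oh(\operatorname{tw}(G))=\Oh(g_3(H)\cdot\sqrt{k})$ in $2^{\Oh(t)}\cdot\polyn$ time. Finally, I would solve \pname{Independent Set} by standard dynamic programming on a nice tree decomposition, which runs in $2^{\Oh(t)}\cdot\polyn$ time (see~\cite{CyganFKLMPPS15}).

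The bottleneck of the running time is the tree-decomposition step together with the dynamic programming, both of which are bounded by $2^{\Oh(g_3(H)\cdot\sqrt{k})}\cdot\polyn$, which fits into the desired $f(H)^{\sqrt{k}}\cdot\polyn$ bound for a suitable computable $f$. There is no real obstacle here: the only subtlety is that $H$ is not supplied to the algorithm, but exactly as in \Cref{lemma:aecc_on_h_minor_free} we never need to know $H$ explicitly---the $H$-minor-freeness is used only in the analysis to bound $d$ and $\operatorname{tw}(G)$ in terms of $H$.
\end{claimproof}
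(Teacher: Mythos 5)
Your proof is correct and follows essentially the same route as the paper's: compute the degeneracy, answer \textsf{yes} immediately when $n\ge (d+1)\cdot k$ (the paper asserts this directly, whereas you justify it via the standard greedy argument on degenerate graphs), and otherwise bound $n$ by $g_1(H)\cdot k$, the treewidth by $\Oh(g_3(H)\sqrt{k})$, and finish with Korhonen's approximation plus the $2^{\Oh(t)}\cdot\polyn$ dynamic programming. No discrepancies to report.
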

	\begin{claimproof}
		Adapt the proof of \Cref{lemma:aecc_on_h_minor_free}  with following changes.
		The algorithm takes $(G,k)$ as an input and reports ``yes'' if $G$ has an independent set of size $k$ and ``no'' otherwise.
		There is no step concerning $B$ anymore.
		Then degeneracy is computed in polynomial time, and if $n>(d+1)\cdot k$, $G$ has independent set of size $k+1$, so the algorithm terminates and reports ``yes''.
		Otherwise, the algorithm proceeds and finds $\mathcal{T}$ in exactly the same way.
		In its final step, the algorithm employs dynamic programming over $\mathcal{T}$ that works in $2^{t}\cdot\polyn$ time (see, e.g.\ \cite{CyganFKLMPPS15}).
	\end{claimproof}
	
	Finally, \Cref{cor:coloring_to_annotated_faster} is pipelined with algorithms of \Cref{lemma:aecc_on_h_minor_free} and \Cref{claim:is_h_minor_free} to obtain the algorithm for \probECCISshort that runs in desired time on $H$-minor-free graphs.
	This finishes the proof.
\end{proof}

\subsection{Corresponding Lower Bounds}\label{sec:lowerbounds}

We support \Cref{cor:sparse} with corresponding conditional lower bounds.
We combine well-known reductions and lower bounds in the proof of the following proposition.
It shows that, under the ETH, the dependency on $k$ cannot be improved in neither of the algorithms; additionaly, the dependency on $d$ is almost optimal.

\begin{proposition}
	Under the Exponential Time Hypothesis, algorithms with the following running times do not exist:
	\begin{itemize}
		\item $2^{o(k)}\cdot \polyn$ on $2$-degenerate graphs for \probECCISshort ($5$-degenerate graphs for \probECCshort);
		\item $2^{o(d/\log d)\cdot k}\cdot\polyn$ for both \probECCshort and \probECCISshort, where $d=\operatorname{dg}(G)$;
		\item $2^{o(\sqrt{k})}\cdot\polyn$ on planar graphs for both \probECCshort and \probECCISshort.
	\end{itemize}
\end{proposition}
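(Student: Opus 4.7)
The overall plan is to derive each lower bound first for \probECCshort using standard NP-hardness reductions from \textsc{3-SAT} or \textsc{Planar 3-SAT} combined with the Sparsification Lemma, and then transfer it to \probECCISshort via the pendant-addition construction underlying \Cref{obs:isolhard}. Concretely, attaching one fresh pendant neighbor to each vertex of a graph $G$ (without isolated vertices) produces $G'$ with $\alpha(G')=|V(G)|$ and $\ecc(G')=\alpha(G')+\ecc(G)$, so $(G,k)$ is a yes-instance of \probECCshort if and only if $(G',k)$ is a yes-instance of \probECCISshort. Moreover, this operation preserves planarity, $H$-minor-freeness and the degeneracy of $G$ (the pendants can be placed last in a degeneracy ordering, contributing degree $1$). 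Thus an ETH lower bound for \probECCshort on a graph class stable under adding pendants transfers verbatim, with the same parameter $k$, to \probECCISshort.

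For the $5$-degenerate \probECCshort bound, I invoke Hoover's classical reduction~\cite{hoover1992complexity}, which produces instances of maximum degree $6$ (hence $5$-degenerate) with $k=\Theta(n+m)$; the Sparsification Lemma then rules out $2^{o(k)}\cdot\polyn$ algorithms under ETH. For the planar $2^{o(\sqrt{k})}$ bound I apply the analogous argument to the reduction of Chang, Kloks and Peng~\cite{chang2001tree} from \textsc{Planar 3-SAT}, combined with the $2^{\Omega(\sqrt{n})}$ ETH lower bound for \textsc{Planar 3-SAT}. For the $2^{o(d/\log d)\cdot k}$ bound I plan to reduce from an $n$-vertex instance of \probECCshort on a general (dense) graph with ETH lower bound $2^{\Omega(n)}$ by ``inflating'' each vertex into a Moon-Moser extremal graph on $d$ vertices---the complement of a disjoint union of $d/3$ triangles, containing $3^{d/3}$ maximal cliques---and interconnecting the inflated blobs so as to force the degeneracy to be $\Theta(d)$ while the target cover size blows up only by a factor $\Theta(\log d)$. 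The resulting instance has $k=\Theta(n\log d/d)$, so a hypothetical $2^{o(d/\log d)\cdot k}\cdot\polyn$ algorithm would give a $2^{o(n)}\cdot\polyn$ algorithm for the base problem, contradicting ETH. Each of these \probECCshort lower bounds then transfers to \probECCISshort via the pendant construction described above.

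The main obstacle is the $2$-degenerate case of the first item for \probECCISshort: since \probECCshort itself is polynomial on $2$-degenerate graphs (by \Cref{lemma:aecc_on_degenerate} with $d=2$, where $c_{d-1}=c_1=1$), no \probECCshort source is available for direct pendant transfer. To obtain this bound I plan to construct a tailored reduction from \pname{Independent Set} on cubic graphs, for which ETH gives a $2^{\Omega(n)}$ lower bound. The idea is to start with a cubic graph $H$ on $n$ vertices and, keeping $H$'s vertices as an independent ``anchor'' set, attach to each vertex and each edge of $H$ small triangle-plus-path gadgets so that the global host is $2$-degenerate while the gap $\ecc(G)-\alpha(G)$ tracks the independence-number deficit of $H$: selecting a vertex of $H$ into an independent set will correspond to merging two local edges into one clique, and not selecting it will cost one extra clique above $\alpha(G)$. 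The delicate part---and the principal technical hurdle---will be designing the gadget so that (i) the degeneracy ordering of the combined graph always ends with a degree-$\leq 2$ vertex even inside the neighborhoods of the anchor vertices, (ii) the parameter $k$ of the produced \probECCISshort instance is linear in $n$, and (iii) the correspondence between optimal edge clique covers and independent sets is exact. Once this reduction is in place, a $2^{o(k)}\cdot\polyn$ algorithm for \probECCISshort on $2$-degenerate graphs would imply a $2^{o(n)}\cdot\polyn$ algorithm for \pname{Independent Set} on cubic graphs, violating ETH.
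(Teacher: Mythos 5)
Your pendant-transfer step is exactly the paper's first move, and the planar bullet is essentially fine (the paper routes it through Hoover's planarity-preserving reduction from \pname{Vertex Cover} on planar degree-$3$ graphs rather than through \textsc{Planar $3$-SAT}, but either source works since $k'\le |E(G')|=\Oh(n+m)$). The other two items, however, have genuine gaps. First, ``maximum degree $6$, hence $5$-degenerate'' is false: a connected graph of maximum degree $6$ is $5$-degenerate only if it is not $6$-regular ($K_7$, or any $6$-regular output of the reduction, is a counterexample). The paper patches precisely this case by fixing an edge $e$ of a $6$-regular $G'$, noting that $e$ lies in $\Oh(1)$ cliques, and branching over the constantly many edge sets through $e$ that one clique of a solution could cover; each branch deletes edges, yields a non-regular and hence $5$-degenerate graph, and decrements $k'$. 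Second, for \probECCISshort on $2$-degenerate graphs your proposal contains no proof: the gadget whose design you defer as ``the principal technical hurdle'' \emph{is} the entire claim. The paper needs no gadget at all: since \probECCshort is polynomial-time solvable on $2$-degenerate graphs (\Cref{lemma:aecc_on_degenerate} with $d=2$), one computes $\ecc(G)$ exactly and observes that $\alpha(G)\ge k$ if and only if $(G,\ecc(G)-k)$ is a yes-instance of \probECCISshort; since $\ecc(G)\le m$, a $2^{o(k)}\cdot\polyn$ algorithm would decide \pname{Independent Set} on these graphs in time $2^{o(m)}\cdot\polyn$, contradicting ETH. You should either adopt this argument or actually construct and verify your gadget.

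The $2^{o(d/\log d)\cdot k}$ item is the most serious problem. Your blob-inflation plan is not a construction but a wish list, and its parameter accounting does not close: to get a final instance with cover number $\Theta(n\log d/d)$ after a $\Theta(\log d)$-factor blow-up, the base \probECCshort family must already be ETH-hard with cover number $\Theta(n/d)$ on $n$-vertex graphs, for \emph{every} relevant choice of $d$ --- a statement you neither prove nor cite, and which in the regime $d=\Theta(n/\log n)$ is essentially the theorem of Cygan, Pilipczuk and Pilipczuk that you would need anyway. Moreover, nothing in the sketch controls the degeneracy of the interconnected blobs from above, nor shows that the cover number does not collapse when cliques span several blobs. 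The paper's proof is a one-step corollary of the reduction of \cite{CyganPP16}: a \pname{$3$-SAT} instance with $n$ variables and $m$ clauses maps to an \probECCshort instance $(G',k')$ with $|V(G')|=\Oh(n+m)$ and $k'=\Oh(\log n)$, so that $d<|V(G')|$ gives $(d/\log d)\cdot k'=\Oh(n+m)$ and a $2^{o(d/\log d)\cdot k}\cdot\polyn$ algorithm would solve \pname{$3$-SAT} in time $2^{o(n+m)}$. The feature your plan misses is that the parameter must be \emph{logarithmic} in the graph size, so that the $\log d$ in the denominator cancels against $k$; a reduction in which $k$ grows linearly with the number of blobs cannot achieve this.
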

\begin{proof}
	First note that lower bounds for \probECCISshort follow from corresponding lower bounds for \probECCshort similarly to \Cref{obs:isolhard}, since all considered graph classes are closed under the addition of pendent vertices.
	
	Under the ETH, both \pname{Vertex Cover} and \pname{Independent Set} do not admit $2^{o(n+m)}\cdot\polyn$ algorithms on connected graphs of maximum degree three \cite{ImpagliazzoPZ01, Garey1974}.
	For planar graphs of maximum degree three, there are no $2^{o(\sqrt{n+m})}\cdot \polyn$ algorithms under ETH \cite{Garey1974,Garey1977}.
	
	In \cite{hoover1992complexity}, 
	Hoover showed that an instance $(G,k)$ of \textsc{Vertex Cover}, where $G$ is connected and has maximum degree three, can be reduced to an instance $(G',k')$ of \probECCshort, where $G'$ is connected and has maximum degree six.
	Moreover, this reduction preserves planarity and satisfies $n'+m'=\Oh(n+m)$, where $n,m,n',m'$ is the number of vertices and edges in $G$ and $G'$ correspondingly.
	
	The lower bound for planar graphs is now straightforward.
	Indeed, if \probECCshort admits an algorithm with running time $2^{o(\sqrt{k})}\cdot\polyn$, combine this algorithm with a reduction that reduces an instance $(G,k)$ of \pname{Vertex Cover} to an instance $(G',k')$ of \probECCshort, where $G$ and $G'$ are both planar.
	Since $k'\le |E(G')|=\Oh(n+m)$, the combination gives a $2^{o(\sqrt{n+m})}\cdot\polyn$ running time algorithm for \pname{Vertex Cover} on planar graphs that refutes ETH.
	
	We now prove that no subexponential algorithm exists for \probECCshort on $5$-degenerate graphs under the ETH.
	Assume that \probECCshort on $5$-degenerate graphs admits an algorithm $\mathcal{A}$ with $2^{o(k)}\cdot\polyn$ running time.
	To solve an instance $(G,k)$ of \pname{Vertex Cover}, where $G$ is connected with $\Delta(G)\le 3$, reduce it to an instance $(G',k')$ of \probECCshort where $G'$ is connected and $\Delta(G')\le 6$.
	If $G'$ is not $6$-regular, then it is $5$-degenerate, and $(G',k')$ can be solved via $\mathcal{A}$ in $2^{o(k')}\cdot (n')^{\Oh(1)}$ time.
	If $G'$ is $6$-regular, then we have to reduce it to several instances of \probECCshort.
	To do this, take an arbitrary edge $e$ of $G'$; it belongs to $\mathcal{O}(1)$ cliques in $G'$, since $\Delta(G')=6$.
	There are constantly many sets of edges containing $e$ that belong to the same clique in $G'$.
	For each such edge set $X$, apply $\mathcal{A}$ to $(G'-X, k'-1)$, given that $G'-X$ is $5$-degenerate.
	If $(G',k')$ is a yes-instance, then $(G'-X,k'-1)$ is a yes-instance for at least one choice of $X$, and vice versa.
	We obtained an algorithm resolving an instance $(G,k)$ of \pname{Vertex Cover} via $\mathcal{O}(1)$ calls to $\mathcal{A}$.
	As soon as $k'\le |E(G')|=m'=\mathcal{O}(n+m)$, the running time is subexponential in $n+m$.
	Thus existence of $\mathcal{A}$ contradicts the ETH.
	
	While lower bound for \probECCISshort on $5$-degenerate graphs follows, we now derive a stronger lower bound for \probECCISshort.
	\Cref{lemma:aecc_on_degenerate} shows that \probECCshort admits polynomial-time algorithm for $2$-degenerate graphs.
	On the other hand, under the ETH, \pname{Independent Set} does not admit subexponential algorithms for connected graphs of maximum degree three and, consequently, $2$-degenerate graphs.
	Assume now that an algorithm running in $2^{o(k)}\cdot\polyn$ time exists for \probECCISshort on $2$-degenerate graphs.
	Then, given an instance $(G,k)$ of \pname{Independent Set}, where $G$ is $2$-degenerate, evaluate $\ecc(G)$ in polynomial time using the algorithm of \Cref{lemma:aecc_on_degenerate}.
	Since $\alpha(G)\ge k$ is equivalent to $\alpha(G)+\ecc(G)-k\ge \ecc(G)$, $G$ has independent set of size $k$ if and only if $(G,\ecc(G)-k)$ is a yes-instance of \probECCISshort.
	Since $\ecc(G)\le m$, this is evaluated in $2^{o(m)}\cdot\polyn$ time under the initial assumption.
	It follows that $2^{o(k)}\cdot\polyn$-time algorithm for \probECCISshort on $2$-degenerate graphs also violates ETH.
	
	For the remaining lower bound we use the reduction of Cygan, Pilipczuk and Pilipczuk \cite{CyganPP16}.
	They showed that an instance of \pname{$3$-SAT} with $n$ variables and $m$ clauses can be transformed into equivalent instance $(G',k')$ of \probECCshort, where $|V(G')|= \Oh(n+m)$ and $k'=\Oh(\log n)$.
	Then an algorithm that decides $(G',k')$ in time $2^{o(d/\log d)\cdot k'}$, where $d<|V(G')|$ is degeneracy of $G'$, when combined with the reduction, gives an algorithm with running time $2^{o(n+m)}$ for \pname{$3$-SAT}.
\end{proof}


\end{document}